\title{Efficient Method for Detection of Periodic Orbits in Chaotic Maps and Flows}
\author{Jonathan J.~Crofts}
\begin{document}
\maketitle

\begin{romanpages}

\chapter*{Acknowledgements}

I would like to thank Ruslan Davidchack, my supervisor, for his many
suggestions and constant support and understanding during this
research. I am also thankful to Michael Tretyakov for his support
and advice. Further, I would like to acknowledge my gratitude to the
people from the Department of Mathematics at the University of
Leicester for their help and support.

Finally, I would like to thank my family and friends for their
patience and support throughout the past few years. In particular, I
thank my wife Lisa and my daughter Ellen, without whom I would have
completed this research far quicker, but somehow, it just would not
have been the same. At this point I would also like to reassure Lisa
that I will get a real job soon.\\

\noindent
Leicester, Leicestershire, UK \hspace{6cm} Jonathan J.~Crofts\\
31 March 2007

\setlength{\baselineskip}{0.85\baselineskip} \tableofcontents[]

\pagestyle{empty}

\chapter*{Abstract} \addcontentsline{toc}{chapter}{Abstract}

An algorithm for detecting unstable periodic orbits in chaotic
systems~[Phys. Rev. E, 60 (1999), pp.~6172--6175] which combines the
set of stabilising transformations proposed by Schmelcher and
Diakonos~[Phys. Rev. Lett., 78 (1997), pp.~4733--4736] with a
modified semi-implicit Euler iterative scheme and seeding with
periodic orbits of neighbouring periods, has been shown to be highly
efficient when applied to low-dimensional system. The difficulty in
applying the algorithm to higher dimensional systems is mainly due
to the fact that the number of stabilising transformations grows
extremely fast with increasing system dimension. In this thesis, we
construct stabilising transformations based on the knowledge of the 
stability matrices of already detected periodic orbits (used as seeds).  
The advantage of our approach is in a substantial reduction of the 
number of transformations, which increases the efficiency of the 
detection algorithm, especially in the case of high-dimensional 
systems. The dependence of the number of transformations on the 
dimensionality of the unstable manifold rather than on system size 
enables us to apply, for the first time, the method of stabilising 
transformations to high-dimensional systems. Another important aspect 
of our treatment of high-dimensional flows is that we do not restrict 
to a Poincar\'{e} surface of section. This is a particularly nice 
feature, since the correct placement of such a section in a 
high-dimensional phase space is a challenging problem in itself. The 
performance of the new approach is illustrated by its application to 
the four-dimensional kicked double rotor map, a six-dimensional system 
of three coupled H\'{e}non maps and to the Kuramoto-Sivashinsky 
system in the weakly turbulent regime.

\pagestyle{fancy}
\addcontentsline{toc}{chapter}{List of figures}\listoffigures

\pagestyle{fancy}
\addcontentsline{toc}{chapter}{List of tables} \listoftables

\end{romanpages}

\fancypagestyle{plain}{
\fancyhead{}\renewcommand{\headrulewidth}{0pt}}

\chapter{Introduction}
\label{ch:Intro}
\begin{quote}
The successes of the differential equation paradigm were impressive
and extensive. Many problems, including basic and important ones,
led to equations that could be solved. A process of self-selection
set in, whereby equations that could not be solved were
automatically of less interest than those that could.\\
\emph{I.~Stewart}
\end{quote}

In this chapter we start in \S\ref{sec:His} by giving a brief primer
into the theory of dynamical systems. Here our intention is not to
give an exhaustive review (concise reviews on the subject are given
in~\cite{Grebogi87,Helleman80,Ott81}). Rather, it is to illustrate
the role played by periodic orbits in the development of the theory.
In \S\ref{sec:pot} a brief introduction to the {\em periodic orbit
theory} is provided, followed by a discussion concerning the
efficient detection of unstable periodic orbits (UPOs). Section
\ref{sec:extend} looks at the application to high-dimensional
systems, in particular, large nonequilibrium systems that are
extensively chaotic. It is well known that numerical methods can
both introduce spurious chaos, as well as suppress
it~\cite{Corless91,Yamaguti81}. Thus in \S\ref{sec:numerics} we
discuss some of the numerical issues which can arise when detecting
UPOs\newabb{UPO} for a chaotic dynamical system. We give an overview
of the objectives of this thesis in \S\ref{sec:overview}. The final
section, \ref{sec:results}, details the contribution to the
literature of this thesis.

\section{History, theory and applications}\label{sec:His}
Although the subject of \emph{modern} dynamical systems has seen an
explosion of interest in the past thirty years -- mainly due to the
advent of the digital computer -- its roots firmly belong at the
foot of the twentieth century. Partly motivated by his work on the
famous three body problem, the French mathematician and philosopher
Henri Jules Poincar\'{e} was to revolutionise the study of nonlinear
differential equations.

Since the birth of the calculus, differential equations have been
studied both in their own right and for modeling phenomena in the
natural sciences. Indeed, Newton considered them in his work on
differential calculus~\cite{Newton}\footnote{Newton's De Methodis
Serierum et Fluxionum was written in 1671 but Newton failed to get
it published and it did not appear in print until John Colson
produced an English translation in 1736.} as early as $1671$. One of
the earliest examples of a first order equation considered by Newton
was
\begin{equation}
\frac{dy}{dx} = 1-3x+y+x^2+xy. \label{eqn:newtode}
\end{equation}
A solution of this equation for the initial condition $y(0) = 0$ can
be obtained as follows: start with
$$y=0+\cdots$$
and insert this into Eq.~(\ref{eqn:newtode}); integrating yields
$$y=x+\cdots,$$
repeating the process with the new value of $y$ gives
$$y=x-x^2+\cdots.$$
One can imagine continuing this process {\em ad infinitum}, leading
to the following solution of Eq.~(\ref{eqn:newtode})
$$y=x-x^2+\frac{1}{3}x^3-\frac{1}{6}x^4+\frac{1}{30}x^5-\frac{1}{45}x^6+\cdots$$
(for further details see~\cite{HairerBook}).

The preceding example demonstrates one of the main differences
between the classical study of differential equations and the
current mindset. The classical study of nonlinear equations was
local, in the sense that individual solutions where sought after.
Most attempts in essence, involved either an approximate series
solution or determining a transformation under which the equation
was reduced either to a known function or to quadrature.

In his work on celestial mechanics~\cite{poincare}, Poincar\'{e}
developed many of the ideas underpinning modern dynamical systems.
By working with sets of initial conditions rather than individual
solutions, he was able to prove that complicated orbits existed for
which no closed solution was available; Poincar\'{e} had caught a
glimpse of what is popularly coined ``chaos'' nowadays.

Although there was continued interest from the mathematical
community; most notably Birkhoff in the 1920s and the Soviet
mathematicians in the 1940s -- Kolmogorov and students thereof -- it
was not until the 1960s that interest from the general scientific
community was rekindled. In 1963 the meteorologist Edward N.~Lorenz
published his now famous paper ``deterministic nonperiodic
flow''~\cite{Lorenz63} where a simple system describing cellular
convection was shown to exhibit extremely complicated dynamics.
Motion was bounded, displayed sensitivity to initial conditions and
was aperiodic; Lorenz had witnessed the first example of a chaotic
attractor.

Around the same time, the mathematician Steve Smale was using
methods from differential topology in order to prove the existence
of a large class of dynamical systems (the so called axiom-A
systems), which were both chaotic and structurally stable at the
same time~\cite{Smale67}. Along with examples such as the Lorenz
model above, scientists where lead to look beyond equilibrium points
and limit cycles in the study of dynamical processes. It became
clear that far from being a mathematical oddity, the chaotic
evolution displayed by many dynamical systems was of great practical
importance.

Today the study of chaotic evolution is widespread throughout the
sciences where the tools of nonlinear analysis are used extensively.
There remain many open questions and the theory of dynamical
systems has a bright and challenging future. The prediction and
control~\cite{Ott90,Controlbook99} of deterministic chaotic systems
is an important area which has received a lot of attention over the
past decade, whilst the extension of the theory to partial
differential equations~\cite{RobinsonBook2,TemamBook} promises to
give fresh insight into the modeling of fully developed turbulence.
However, perhaps the most promising area of future research lies in
the less mathematically minded disciplines such as biology,
economics and the social sciences, to name a few.

\section{Periodic orbits}\label{sec:pot} Periodic orbits play an important role in
the analysis of various types of dynamical systems.  In systems with
chaotic behaviour, unstable periodic orbits form a ``skeleton'' for
chaotic trajectories~\cite{Cvitanovic91}.  A well regarded
definition of chaos~\cite{DevaneyBook} requires the existence of an
infinite number of UPOs that are dense in the chaotic set. Different
geometric and dynamical properties of chaotic sets, such as natural
measure, Lyapunov exponents, fractal dimensions,
entropies~\cite{OttBook}, can be determined from the location and
stability properties of the embedded UPOs.  Periodic orbits are
central to the understanding of quantum-mechanical properties of
nonseparable systems: the energy level density of such systems can
be expressed in a semiclassical approximation as a sum over the UPOs
of the corresponding classical system~\cite{GutzwillerBook}.
Topological description of a chaotic attractor also benefits from
the knowledge of periodic orbits.  For example, a large set of
periodic orbits is highly constraining to the symbolic dynamics and
can be used to extract the location of a generating
partition~\cite{Davidchack00a,Plumecoq00a}.  The significance of
periodic orbits for the experimental study of dynamical systems has
been demonstrated in a wide variety of systems~\cite{Lathrop89},
especially for the purpose of controlling chaotic
dynamics~\cite{Ott90} with possible application in
communication~\cite{Bollt97}.

\subsection{Periodic orbit theory} Briefly put, the {\em periodic
orbit theory} provides a machinery which enables us to use the
knowledge provided by the properties of individual solutions, such
as their periods, location and stabilities, to make predictions
about statistics, e.g. Lyapunov exponents, entropies, and so on. The
dynamical systems to be discussed in this section are smooth
$n$-dimensional maps of the form $x_{i+1} = f(x_i)$, where $x_i$ is
an $n$-dimensional vector in the $n$-dimensional phase space of the
system.

Now, in order for the results to be quoted to hold, we assume that
the attractor of $f$ is both hyperbolic and mixing. A {\em
hyperbolic attractor} is one for which the following two conditions
hold: (i) there exist stable and unstable manifolds at each point of
the attractor whose dimensions, $n_s$ and $n_u$, are the same for
each point on the attractor, with $n_s + n_u = n$,  and (ii) there
exists a constant $K>1$ such that for all points, $x$, on the
attractor, if a vector $u$ is chosen tangent to the unstable
manifold, then
\begin{equation}\label{eqn:stab}
    ||Df(x)u||\geq K||u||,
\end{equation}
and if $u$ is chosen tangent to the stable manifold
\begin{equation}\label{eqn:unstab}
    ||Df(x)u||\leq ||u||/K.
\end{equation}
Here $Df(x)$ denotes the Jacobian matrix of the map $f$ evaluated at
the point $x$. By {\em mixing} we mean that for any two subsets
$A_1$, $A_2$ in the phase space, we have
\begin{equation}\label{eqn:mix}
    \lim_{i\to\infty}\mu[A_1\cap f^i(A_2)] = \mu(A_1)\mu(A_2),
\end{equation}
where $\mu$ is the {\em natural measure} of the attractor. In other
words, the system will evolve over time so that any given open set
in phase space will eventually overlap any other given region.

Let us denote the magnitudes of the eigenvalues of the Jacobian
matrix for the $p$ times iterated map $f^p$ evaluated at the $j$th
fixed point by $\lambda_{1j}, \dots, \lambda_{nj}$. Suppose that the
number of unstable eigenvalues, i.e. $\lambda_{ij}>1$, is given by
$n_u$, and further, that we order them as follows
\begin{equation}\label{eqn:order}
    \lambda_{1j}\geq\cdots\geq\lambda_{n_uj}\geq 1
    \geq\lambda_{(n_u+1)j}\geq\cdots\geq\lambda_{nj}.
\end{equation}
Let $L_j$ denote the product of unstable eigenvalues at the $j$th
fixed point of $f^p$,
\begin{equation}\label{eqn:prod}
    L_j = \lambda_{1j}\lambda_{2j}\cdots\lambda_{n_uj}.
\end{equation}
Then the principal result of the periodic orbit theory is the
following: given a subset $A$ of phase space, one may define its
natural measure to be
\begin{equation}\label{eqn:measure}
    \mu(A) = \lim_{p\to\infty}\mu_p(A),
\end{equation}
where
\begin{equation}\label{eqn:summeas}
    \mu_p(A) = \sum_{j}L_j^{-1}.
\end{equation}
Here the sum is over all fixed points of $f^p$ in $A$; a derivation
of Eq.~(\ref{eqn:measure}) may be found in \cite{Grebogi88}.

This result leads to several important consequences, for example, it
can be shown that the {\em Lyapunov numbers} of $f$ are given by
\begin{equation}\label{eqn:lyap}
    \log\lambda_p =
    \lim_{p\to\infty}\frac{1}{p}\sum_jL_j^{-1}\log\lambda_{pj},
\end{equation}
whilst an analogous result exists for the {\em topological entropy}
\begin{equation}\label{eqn:topent}
    h_{T} = \lim_{p\to\infty}\frac{1}{p}\ln N_p,
\end{equation}
where $N_p$ denotes the number of fixed points of the map $f^p$.
These and similar results obtained within the periodic orbit theory
show that knowledge of the UPOs can yield a great deal of
information concerning the properties of a chaotic dynamical system.
Thus making their efficient detection highly desirable. For further
details, a thorough review of the periodic orbit theory is given in
the book by Cvitanovi\'{c} {\em et al}~\cite{CvitanovicBook}.

At this stage, it is important to point out that most systems of
interest turn out not to be hyperbolic, in particular, the dynamical
systems studied in this thesis are {\em non-hyperbolic}. Hyperbolic
systems, however, remain important due to the fact that they are
more tractable from a mathematical perspective. Indeed, most
rigorous results in dynamical systems are for the case of hyperbolic
systems, and although much of the theory is believed to transfer over to
the non-hyperbolic case there are very few rigorous results.

\subsection{Efficient detection of UPOs}\label{sec:eff}
We have seen that the role of UPOs in chaotic systems is of fundamental
theoretical and practical importance. It is thus not surprising that much
effort has been put into the development of methods for locating periodic
solutions in different types of dynamical systems.  In a limited number of
cases, this can be achieved due to the special structure of the systems.
Examples include the Biham-Wenzel method applicable to H\'{e}non-like
maps~\cite{Biham89}, or systems with known and well ordered symbolic
dynamics~\cite{Hansen95}. For generic systems, however, most methods
described in the literature use some type of an iterative scheme
that, given an initial condition (seed), converges to a periodic
orbit of the chaotic system.  In order to locate all UPOs with a
given period $p$ \newnot{p}, the convergence basin of each orbit for
the chosen iterative scheme must contain at least one seed.  The
seeds are often chosen either at random from within the region of
interest, from a regular grid, or from a chaotic trajectory with or
without close recurrences. Typically, the iterative scheme is chosen
from one of the ``globally'' convergent methods of quasi-Newton or
secant type. However, experience suggests that even the most
sophisticated methods of this type suffer from a common problem:
with increasing period, the basin size of the UPOs becomes so small
that placing a seed within the basin with one of the above listed
seeding schemes is practically impossible~\cite{Miller00}.

A different approach, which appears to effectively deal with the
problem of reduced basin sizes has been proposed by Schmelcher and
Diakonos (SD)~\cite{Schmelcher97,Schmelcher98} \newabb{SD}.  The
basic idea is to transform the dynamical system in such a way that
the UPOs of the original system become stable and can be located by
simply following the evolution of the transformed dynamical system.
That is, to locate period-$p$ orbits of a discrete dynamical system
\newnot{U}
\begin{equation}
  U\!\!:\quad x_{i+1} = f(x_i),\quad
  f\!\!: {\mathbb R}^n \mapsto {\mathbb R}^n\;,
\label{eq:mapf}\end{equation}
one considers an associated flow
\begin{equation}
  \Sigma\!\!: \quad \frac{dx}{ds} = C g(x)\,,
\label{eq:sflow} \end{equation} where $g(x) = f^p(x) - x$ and $C$
\newnot{C} is an $n\times n$ \newnot{n} constant orthogonal matrix.  It is easy to see
that the map $f^p(x)$ \newnot{f}\newnot{g} and flow $\Sigma$
\newnot{sigma} have identical sets of fixed points for any $C$,
while $C$ can be chosen such that unstable period-$p$ orbits of $U$
become stable fixed points of $\Sigma$.

Since it is not generally possible to choose a single matrix $C$
that would stabilise all UPOs of $U$, the idea is to find the
smallest possible set of matrices ${\mathcal C} = \{C_k\}_{k=1}^K$,
such that, for each UPO of $U$, there is at least one matrix $C \in
{\mathcal C}$ that transforms the unstable orbit of $U$ into a
stable fixed point of $\Sigma$.  To this end, Schmelcher and
Diakonos have put forward the following
conjecture~\cite{Schmelcher97}
\begin{conjecture}\label{conj:sd}
Let ${\mathcal C}_{\mathrm{SD}}$ be the set of all $n\times n$
orthogonal matrices with only $\pm 1$ non-zero entries. Then, for
any $n\times n$ non-singular real matrix $G$, there exists a matrix
$C \in {\mathcal C}_{\mathrm{SD}}$ such that all eigenvalues of the
product $CG$ have negative real parts.
\end{conjecture}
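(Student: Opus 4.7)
The plan is to exploit the \emph{polar decomposition} $G = RP$, where $R$ is orthogonal and $P$ is symmetric positive definite (this exists and is unique since $G$ is non-singular). Then $CG = (CR)P$, and since $\mathcal{C}_{\mathrm{SD}}$ is a finite subgroup of $O(n)$ (the hyperoctahedral group of signed permutations, of order $2^n n!$), the product $CR$ ranges over a finite subset $\mathcal{C}_{\mathrm{SD}} R \subset O(n)$ as $C$ varies. The conjecture thus reduces to showing that this finite collection of orthogonal matrices is rich enough that, for any positive definite $P$, some element $Q = CR$ yields $QP$ with all eigenvalues in the open left half-plane.

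A natural sufficient condition comes from Lyapunov's theorem: if the symmetric part $\frac{1}{2}(QP + (QP)^T)$ is negative definite, then $QP$ is Hurwitz. A clean special case is $Q = -I$, for which $QP = -P$ is itself negative definite. My first step would therefore be to quantify how well $\mathcal{C}_{\mathrm{SD}}$ approximates $-I$ under right translation, i.e. to bound $\min_{C \in \mathcal{C}_{\mathrm{SD}}} \|CR + I\|$ for an arbitrary $R \in O(n)$. I would attempt a greedy row-by-row construction: choose the rows of $C$ one at a time so that each row of $CR$ is as close as possible to the corresponding row of $-I$, using the fact that each column of $R$, being a unit vector of length $n$, must have at least one entry of magnitude $\geq 1/\sqrt{n}$.

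Once a bound $\|CR + I\| \leq \delta$ is in hand, a standard perturbation argument places the eigenvalues of $QP$ within a disk of radius $\delta\|P\|$ of those of $-P$, and negative real parts follow provided $\delta\|P\| < \lambda_{\min}(P)$. This is where I expect the argument to break: the approximation $\delta$ is bounded below by the geometric spacing of $\mathcal{C}_{\mathrm{SD}}$ in $O(n)$, while $P$ may have arbitrarily large condition number. To close this gap, a second, more adaptive selection would be needed: rather than approximating $-I$ universally, choose $C$ based on the spectral decomposition of $P$, so that the directions in which $CR$ deviates most from $-I$ are precisely those in which $P$ has smallest eigenvalue. Making this spectral alignment work uniformly over all pairs $(R,P)$ — that is, showing that the combinatorial group $\mathcal{C}_{\mathrm{SD}}$ always contains an element tailored to the eigenstructure of $P$ — is, I expect, the decisive obstacle and the reason the statement has the flavour of a genuine combinatorial-geometric conjecture rather than a routine linear-algebra exercise. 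Absent a structural result of this kind, one might instead have to argue by an inductive reduction on $n$ (using a signed permutation to bring $CG$ into block upper-triangular form with a stable leading $1\times 1$ or $2 \times 2$ block, then applying the inductive hypothesis to the Schur complement), which sidesteps the need for a uniform approximation bound but introduces its own difficulty in controlling the Schur complement's non-singularity.
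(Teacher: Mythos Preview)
Your proposal is thoughtful, but there is a fundamental mismatch: the paper does \emph{not} prove this statement in general. It is explicitly labelled a conjecture, and the text states that it ``has been verified for $n \le 2$ \ldots\ and appears to be true for $n > 2$, but, thus far, no proof has been presented.'' What the paper does supply is a proof for $n=2$ only, via a direct eigenvalue computation: parametrising all $2\times 2$ orthogonal matrices as $C_{s,\alpha}$, the eigenvalues of $C_{s,\alpha}G$ are written in closed form (Eq.~(\ref{eq:eigs1})), and one reads off that both have negative real part whenever $s=\mathrm{sgn}\det G$ and $|\alpha-\theta|<\pi/2$ for a certain angle $\theta$ determined by $G$. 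Since the SD matrices correspond to $\alpha\in\{-\pi/2,0,\pi/2,\pi\}$, at least one (in fact two) of them always lies within $\pi/2$ of any $\theta$.

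Your polar-decomposition viewpoint is well aligned with the paper's subsequent analysis --- indeed Theorem~3 of the thesis shows that for $n=2$ the optimal stabiliser $C_{\bar{s},\theta}$ is exactly $-Q^{\mathsf T}$ where $G=QB$ is the polar decomposition, which is precisely your reduction to finding $C$ with $CR$ near $-I$. You have also correctly diagnosed why the naive perturbation argument cannot close: the approximation quality of $\mathcal{C}_{\mathrm{SD}}$ in $O(n)$ is fixed, while the condition number of $P$ is unbounded, so a bound of the form $\|CR+I\|<\lambda_{\min}(P)/\|P\|$ is unattainable uniformly. Your suggested remedies (spectral alignment, block-triangular induction) are reasonable directions, but neither you nor the paper carries them through; the general case remains open. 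If your task was to reproduce the paper's contribution, you should instead give the explicit $2\times 2$ eigenvalue calculation.
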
\newnot{CSD}
\begin{observation}\label{obs:sd}
The set ${\mathcal C}_{\mathrm{SD}}$ forms a group isomorphic to the
Weyl group $B_n$~\cite{HumphreysBook}, i.e. the symmetry group of an
$n$-dimensional hypercube.  The number of matrices in ${\mathcal
C}_{\mathrm{SD}}$ is $K = 2^n n!$.
\end{observation}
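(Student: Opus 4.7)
The plan is to first characterise the elements of ${\mathcal C}_{\mathrm{SD}}$ explicitly as signed permutation matrices, then verify the group axioms, exhibit the isomorphism with $B_n$, and finally read off the cardinality. First I would use orthogonality to pin down the structure of an arbitrary $C\in{\mathcal C}_{\mathrm{SD}}$. Each column $c_k$ of $C$ has entries in $\{0,\pm 1\}$ and must satisfy $\|c_k\|=1$, so it contains exactly one non-zero entry, namely a $\pm 1$. Orthogonality of distinct columns then forces their non-zero entries to lie in distinct rows (otherwise their inner product would equal $\pm 1$ rather than $0$). Hence $C$ is a signed permutation matrix, uniquely specified by a pair $(\sigma,\epsilon)\in S_n\times\{\pm 1\}^n$ via $C_{ij}=\epsilon_j\,\delta_{i,\sigma(j)}$.

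Next I would verify that ${\mathcal C}_{\mathrm{SD}}$ is closed under matrix multiplication and inversion. A short calculation with the parametrisation above shows that the product of $(\sigma_1,\epsilon_1)$ and $(\sigma_2,\epsilon_2)$ is parametrised by the composed permutation $\sigma_1\sigma_2$ together with a twisted sign vector whose $i$th entry is $\epsilon_{1,\sigma_2(i)}\epsilon_{2,i}$. The identity matrix clearly lies in ${\mathcal C}_{\mathrm{SD}}$, and since any $C\in{\mathcal C}_{\mathrm{SD}}$ is orthogonal, $C^{-1}=C^T$, which is still a signed permutation matrix. Hence ${\mathcal C}_{\mathrm{SD}}$ is a subgroup of $O(n)$.

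Finally I would exhibit the isomorphism with $B_n$. Taking the standard description of the Weyl group $B_n$ as the semidirect product $\{\pm 1\}^n\rtimes S_n$, which coincides with the group of linear isometries of ${\mathbb R}^n$ permuting the $2^n$ vertices of the hypercube $[-1,1]^n$, the bijection $(\sigma,\epsilon)\mapsto(\epsilon,\sigma)$ matches the matrix product derived above with the semidirect-product composition, and so is a group isomorphism. Counting the parametrisation then yields $n!$ choices for $\sigma$ and $2^n$ for $\epsilon$, giving $K=2^n n!$. The main obstacle, though modest, is the bookkeeping needed to line up the matrix product with the semidirect-product composition so that the twist lands on the correct factor; the only genuine geometric ingredient is the step from ``$\{0,\pm 1\}$-entries plus orthogonality'' to ``signed permutation matrix.''
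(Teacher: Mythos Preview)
Your argument is correct. Note, however, that the paper does not actually supply a proof of this observation: it is stated as a fact with a reference to Humphreys' book and then used without further comment. So there is nothing in the paper to compare against beyond the bare statement. Your proposal fills in exactly the standard details one would expect---the identification of orthogonal $\{0,\pm1\}$-matrices with signed permutation matrices, closure under the group operations, and the semidirect-product description of $B_n$---and the count $2^n n!$ then drops out. The only remark worth making is that you should check the definition of $\mathcal{C}_{\mathrm{SD}}$ carefully: the paper says ``orthogonal matrices with only $\pm 1$ non-zero entries,'' which is precisely the hypothesis you use, so the reduction to signed permutation matrices via column norms and pairwise orthogonality is legitimate.
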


The above conjecture has been verified for $n \le
2$~\cite{Schmelcher00}, and appears to be true for $n > 2$, but,
thus far, no proof has been presented. According to this conjecture,
any periodic orbit, whose stability matrix does not have eigenvalues
equal to one, can be transformed into a stable fixed point of
$\Sigma$ with $C \in {\mathcal C}_{\mathrm{SD}}$. In practice, to
locate periodic orbits of the map $U$, we try to integrate the flow
$\Sigma$ from a given initial condition (seed) using different
matrices from the set ${\mathcal C}_{\mathrm{SD}}$. Some of the
resulting trajectories will converge to fixed points, while others
will fail to do so, either leaving the region of interest or failing
to converge within a specified number of steps.

The main advantage of the SD approach is that the convergence basins
of the stabilised UPOs appear to be much larger than the basins
produced by other iterative
schemes~\cite{Davidchack01b,Klebanoff01,Schmelcher98}, making it
much easier to select a useful seed.  Moreover, depending on the
choice of the stabilising transformation, the SD method may converge
to several different UPOs from the same seed.

The flow $\Sigma$ can be integrated by any off-the-shelf numerical
integrator.  Schmelcher and Diakonos have enjoyed considerable
success using a simple Euler method. However, the choice of
integrator for this problem is governed by considerations very
different from those typically used to construct an ODE
\newabb{ODE} solver. Indeed, to locate a fixed point of the flow, it
may not be very efficient to follow the flow with some prescribed
accuracy. Therefore, local error considerations, for example, are
not as important.  Instead, the goal is to have a solver that can
reach the fixed point in as few integration steps as possible.  In
fact, as shown by Davidchack and Lai~\cite{Davidchack99c}, the
efficiency of the method can be improved dramatically when the
solver is constructed specifically with the above goal in mind. In
particular, recognizing the typical stiffness of the flow $\Sigma$,
Davidchack and Lai have proposed a modified semi-implicit Euler
method
\begin{equation}
 x_{i+1} = x_i + [\beta s_i C^{\mathsf T} - G_i]^{-1}g(x_i)\;,
\label{eq:itrDL} \end{equation} where $\beta > 0$ is a scalar
parameter, $s_i = ||g(x_i)||$ is an $L_2$ norm, $G_i =
Dg(x_i)$\newnot{G} is the Jacobian matrix, and ``${\mathsf T}$''
denotes transpose.  Note that, away from the root of $g$, the above
iterative scheme is a semi-implicit Euler method with step size $h =
(\beta s_i)^{-1}$ and, therefore, can follow the flow $\Sigma$ with
a much larger step size than an explicit integrator (e.g. Euler or
Runge-Kutta).  Close to the root, the proposed scheme can be shown
to converge quadratically~\cite{Klebanoff01}, analogous to the
Newton-Raphson method.

Another important ingredient of the algorithm presented
in~\cite{Davidchack99c} is the seeding with already detected
periodic orbits of neighbouring periods.  This seeding scheme
appears to be superior to the typically employed schemes and enables
fast detection of plausibly all\footnote{See \S\ref{sec:numerics}}
periodic orbits of increasingly larger periods in generic
low-dimensional chaotic systems. For example, for the Ikeda map at
traditional parameter values, the algorithm presented
in~\cite{Davidchack99c} was able to locate plausibly all periodic
orbits up to period 22 for a total of over $10^6$ orbit points.
Obtaining a comparable result with generally employed techniques
requires an estimated $10^5$ larger computational effort.

While the stabilisation approach is straightforward for relatively
low-dimensional systems, direct application to higher-dimensional
systems is much less efficient due to the rapid growth of the number
of matrices in ${\mathcal C}_{\mathrm{SD}}$. Even though it appears
that, in practice, far fewer transformations are required to
stabilise all periodic orbits of a given chaotic
system~\cite{Pingel04}, the sufficient subset of transformations is
not known {\em a priori}.  It is also clear that the route of constructing
a universal set of transformations is unlikely to yield substantial
reduction in the number of such transformations.
Therefore, a more promising way of using stabilising transformations
for locating periodic orbits in high-dimensional systems is to
design such transformations based on the information about the
properties of the system under investigation.

\section{Extended systems}\label{sec:extend} The periodic orbit theory is well
developed for low-dimensional chaotic dynamics - at least for
axiom-A systems~\cite{CvitanovicBook}. The question naturally arises
as to whether or not the theory has anything to say for extended
systems. At first glance the transition from low-dimensional chaotic
dynamics to fully developed spatiotemporal chaos may seem rather
optimistic. However, recent results have shown that certain classes
of PDEs
\newabb{PDE} turn out to be less complicated than they initially
appear, when approached from a dynamical systems perspective.
Indeed, under certain conditions their asymptotic evolution can be
shown to lie on a finite dimensional global
attractor~\cite{Robinson95,RobinsonBook2,TemamBook}. Further, by
restricting to equations of the form
\begin{equation}\label{eqn:evolution}
    \frac{du}{dt} + \mathrm{A}u + F(u) = 0,
\end{equation}
where $A$ is a linear differential operator, an even stronger result
may be obtained. Such equations are termed {\em evolution equations}
and their asymptotic dynamics can be shown to lie on a smooth,
finite dimensional manifold, known as the {\em inertial
manifold}~\cite{Robinson95}. In contrast to the aforementioned
global attractor which may have fractal like properties this leads
to a complete description of the dynamics by a finite number of
modes; higher modes being contained in the geometrical constraints
which define the manifold.

A variety of methods for determining {\em all} UPOs up to a given
length exist for low-dimensional dynamical systems (see Chapter
\ref{ch:detection}). For more complex dynamics, such as models of
turbulence in fluids, chemical reactions, or morphogenesis in
biology with high -- possibly infinite -- dimensional phase spaces,
such methods quickly run into difficulties. The most computationally
demanding calculation to date, has been performed by Kawahara and
Kida~\cite{Kida01}. They have reported the detection of two
three-dimensional periodic solutions of turbulent plane Couette flow
using a $15,422$-dimensional discretisation, whilst more recently
Viswanath~\cite{Viswanath06} has been able to detect both periodic
and relative periodic motions in the same system. It is hoped that
such solutions may act as a basis to infer the manner in which
transitions to turbulence can occur.

Our goal is somewhat more modest. We will apply our method to the
model example of an extended system which exhibits spatiotemporal
chaos; the Kuramoto-Sivashinsky equation
\begin{equation}\label{eqn:KSE}
    u_t = -\frac{1}{2}(u^2)_x - u_{xx} - u_{xxxx},\quad t\geq 0,\quad x\in[0,L].
\end{equation}
It was first studied in the context of reaction-diffusion equations
by Kuramoto and Tsuzuki~\cite{Kuramoto76}, whilst Sivashinsky
derived it independently as a model for thermal instabilities in
laminar flame fronts~\cite{Sivashinsky77}. It is one of the simplest
PDEs to exhibit chaos and has played a leading role in studies on
the connection between ODEs and
PDEs~\cite{Foias85,Hyman86,Kevrekidis90}.

It is the archetypal equation for testing a numerical method for
computing periodic solutions in extended systems, and has been
considered in this context in~\cite{Christansen97,Lan04,Zoldi98},
where many UPOs have been detected and several dynamical averages
computed using the periodic orbit theory. Note that the attractor of
the system studied in~\cite{Christansen97} is low dimensional,
whilst those studied in~\cite{Lan04,Zoldi98} have higher intrinsic
dimension. Recently the closely related complex Ginzburg-Landau
equation
\begin{equation}\label{eqn:CGLE}
    A_t = RA +(1+i\nu)A_{xx} -(1+i\mu)A|A|^2,
\end{equation}
has been studied within a similar framework~\cite{Lopez05}, where
Eq.~(\ref{eqn:CGLE}) is transformed into a set of algebraic
equations which are then solved using the Levenberg-Marquardt
algorithm.

\section{A note on numerics}\label{sec:numerics}
Often the physical models put forward by the applied scientists are
extremely complex and, thus, not open to attack via analytical
methods. This necessitates the use of numerical simulations in order
to analyse and understand the models -- particularly in the case
where chaotic behaviour is allowed. However, in those cases one can
always wonder what one is really computing, given the limitations of
floating point systems. This leads to the important question of
whether or not the computed solution is ``close'' to a true solution
of the system of interest. In the case of locating a periodic orbit
on a computer, we would like to know whether the detected orbit
actually exists in the real system. It is well known that in any
numerical calculation accuracy is limited by errors due to roundoff,
discretisation and uncertainty of input data~\cite{StoerBook}. The
difficulty here, lies in the fact that the solutions of a chaotic
dynamical system display extreme sensitivity upon initial
conditions, thus, any tiny error will result in the exponential
divergence of the computed solution from the true one.

For discrete hyperbolic systems, an answer to the question of
validity is provided by the following {\em shadowing lemma} due to
Anosov and Bowen \cite{Anosov67,Bowen75}
\begin{lemma}\label{lem:shadow}
Given a discrete hyperbolic system,
\begin{equation}\label{map}
    x_{i+1} = f(x_i),
\end{equation}
then $\forall \epsilon>0$, $\exists \delta >0$ such that every
$\delta$-pseudo-orbit for $f$ is $\epsilon$-shadowed by a unique real
orbit.
\end{lemma}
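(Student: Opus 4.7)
The plan is to construct the shadowing orbit as a fixed point of a suitable operator on a space of sequences close to the given pseudo-orbit, exploiting the hyperbolic splitting guaranteed by the conditions defining hyperbolicity in the excerpt. In particular, at each point of the attractor we have a decomposition of the tangent space into a stable and an unstable subspace, with uniform expansion/contraction bound $K > 1$ as in (\ref{eqn:stab})--(\ref{eqn:unstab}).

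First I would fix $\epsilon > 0$ small enough that the splitting $E^s(x) \oplus E^u(x)$ varies continuously over the $\epsilon$-neighbourhood of the attractor and that $f$ is well approximated by its derivative in such neighbourhoods. Given a $\delta$-pseudo-orbit $\{y_i\}$, I would look for a true orbit of the form $x_i = y_i + v_i$, written in local coordinates adapted to the hyperbolic splitting at $y_i$, with $\{v_i\}$ in the Banach space $\ell^\infty$ of bounded sequences. The orbit equation $x_{i+1} = f(x_i)$ then becomes
\begin{equation}
v_{i+1} = Df(y_i)\,v_i + e_i + R_i(v_i),
\label{eqn:shadoweq}
\end{equation}
where $e_i := f(y_i) - y_{i+1}$ satisfies $\|e_i\| < \delta$ and $R_i$ collects the nonlinear remainder, which by $C^1$ smoothness is $o(\|v_i\|)$ uniformly in $i$.

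Next, I would split (\ref{eqn:shadoweq}) into its stable and unstable components and invert each in the direction in which it contracts: the stable component is solved by forward iteration, while the unstable component, thanks to the uniform expansion bound (\ref{eqn:stab}), is solved by backward iteration using $Df(y_i)^{-1}|_{E^u}$, whose norm is at most $1/K$. Summing the resulting two geometric series yields a fixed-point equation $v = T(v)$ on $\ell^\infty$, in which every occurrence of $Df$ has been replaced by an operator of norm $\leq 1/K < 1$. A routine estimate, using that $R_i$ is Lipschitz with constant $o(1)$ on the ball of radius $\epsilon$, then shows that for $\delta$ sufficiently small relative to $\epsilon$ and $K$, the map $T$ sends the closed $\epsilon$-ball in $\ell^\infty$ into itself and is a strict contraction there. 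The Banach fixed point theorem produces a unique $\{v_i\}$ in this ball, hence a unique $\epsilon$-shadowing orbit.

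The main obstacle is the careful bookkeeping in the previous step: one must show that the combined operator $T$, defined piecewise along the two subbundles using the angle between $E^s$ and $E^u$, is actually a contraction with constant strictly less than one after the nonlinear remainder $R_i$ is included. Here one needs uniform bounds on the projections onto $E^s(y)$ and $E^u(y)$ (which requires uniform transversality of the splitting, derivable from the hyperbolicity assumption on the attractor) and a quantitative control of $R_i$ via the modulus of continuity of $Df$. Uniqueness beyond the contraction ball follows by observing that any two $\epsilon$-shadowing orbits give rise to two fixed points of $T$ inside the $2\epsilon$-ball, contradicting the contraction once $\epsilon$ is chosen small enough at the outset.
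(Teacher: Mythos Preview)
The paper does not actually prove this lemma: it is quoted as a classical result, attributed to Anosov and Bowen with a citation, and followed only by the definitions of $\delta$-pseudo-orbit and $\epsilon$-shadowing. There is therefore nothing in the paper to compare your argument against.

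That said, your sketch is the standard functional-analytic proof of the shadowing lemma, and the overall strategy is sound: write the shadowing orbit as a perturbation of the pseudo-orbit, split the linearised equation along the hyperbolic subbundles, invert forward on $E^s$ and backward on $E^u$ using the uniform bound $K>1$, and close with the Banach fixed point theorem on $\ell^\infty$. One point you should tighten: the splitting $E^s\oplus E^u$ and the bounds (\ref{eqn:stab})--(\ref{eqn:unstab}) are defined on the attractor, not at the pseudo-orbit points $y_i$, which need not lie on it. You must either use the continuous extension of the splitting to a neighbourhood (invoking compactness of the attractor and continuity of the subbundles) or, more commonly, base the coordinates at nearby points of the attractor rather than at the $y_i$ themselves. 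Your remark about ``uniform transversality'' of the splitting is exactly what is needed here, but you should make explicit that it gives a uniform bound on the norms of the oblique projections onto $E^s$ and $E^u$, since without that the geometric series estimate for $T$ does not yield a contraction constant below $1$.
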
\noindent
By $\delta$-pseudo-orbit, we mean a computed sequence
$(p_i)_{i\in\mathbb{Z}}$ such that
\begin{equation}\label{eqn:roundofferror}
    |p_{i+1} - f(p_i)| < \delta,
\end{equation}
that is to say, roundoff error at each step of the numerical orbit
is bounded above by $\delta$. Such an orbit is said to be
$\epsilon$-shadowed if there exists a true orbit $(x_i)_{i\in\mathbb{Z}}$
such that
\begin{equation}\label{eqn:trueerror}
    |x_i - p_i| < \epsilon\quad\forall i\in\mathbb{Z}.
\end{equation}

Unfortunately the class of hyperbolic systems is highly restrictive
since such systems are rarely encountered in real problems. For
non-hyperbolic systems - such as those studied in this work -
shadowing results are limited to low dimensional
maps~\cite{Coven88,Hammel88}. Even then, shadowing can only be
guaranteed for a finite number of steps $N$, which is likely to be a
function of the system parameters. Further, it can be shown that
trajectories of non-hyperbolic chaotic systems fail to have long
time shadowing trajectories at all, when unstable dimension
variability persists~\cite{Dawson94,Do04,Sauer97,Sauer02}. Although
the idea of shadowing goes some way towards making sense of
numerical simulations of chaotic systems, it does not answer the
question of whether the numerical orbit corresponds to any real one.
Therefore, we need tools which can rigorously verify the existence
of the corresponding real periodic orbits.

Such methods may also be used to determine the completeness of
sets of periodic orbits, however, this requires that the entire
search is conducted using rigorous numerics and this approach
is inefficient for generic dynamical systems. In general, it is not
possible to prove, within our approach, the completeness of the
detected sets of UPOs. Rather, a stopping criteria must be deduced
after which we can say, with some certainty, that {\em all} UPOs of
period-$p$ have been found; our assertion of completeness will be
based upon the {\em plausibility argument}. The following three
criteria are used for the validation of the argument
\begin{enumerate}
 \item[(i)] Methods based on rigorous numerics (e.g. in~\cite{Galias01})
 have located the same UPOs in cases where such comparison is
 possible (usually for low periods, since these methods are less
 efficient).
 \item[(ii)] Our search strategy scales with the period $p$ (see
 \S\ref{sec:numres} and~\cite{Davidchack01b}).  If we can tune it to
 locate all UPOs for low periods (where we can verify the
 completeness using (i)), it is likely (but not provably) capable of
 locating all UPOs of higher periods as well.
 \item[(iii)] For maps with symmetries, we test the completeness by verifying
 that all the symmetric partners for all detected UPOs have
 been found (see \S\ref{sec:drm} and \S\ref{sec:chm}).
\end{enumerate}
Of course, the preceding discussion can only be applied to discrete
systems.

\subsection{Interval arithmetic} There are several approaches
towards a rigorous computer assisted study of the existence of
periodic orbits. Most make use of the Brouwer fixed point
theorem~\cite{Brouwer1910}, which states that if a convex, compact
set $X\subset\mathbb{R}^n$ is mapped by a continuous function $f$
into itself then $f$ has a fixed point in $X$. Such rigorous methods
tend to fall into one of two classes: (i) topological methods based
upon the index properties of a periodic orbit or (ii) interval
methods. In our discussion we restrict attention to interval methods
since they are the most common in practice. Techniques based on the
index properties are in general less efficient; although recent work
has seen the ideas extended to include infinite dimensional
dynamical systems \cite{Galias,Zgliczynski01}, in particular, in
\cite{Zgliczynski01} several steady states for the
Kuramoto-Sivashinsky equation have been verified rigorously.

Interval methods are based on so called {\em interval arithmetic} --
an arithmetic defined on sets of intervals \cite{Moore79}. Any
computation carried out in interval arithmetic returns an interval
which is guaranteed to contain both the true solution and the
numerical one. Thus, by using properly rounded interval arithmetic,
it is possible to obtain rigorous bounds on any numerical
calculations. In what follows an interval is defined to be a compact
set $\textbf{x}\subset\mathbb{R}$, i.e.
\begin{equation*}
    \textbf{x} = [a,b] = \{x: a\leq x\leq b\},
\end{equation*}
where we use boldface letters to denote interval quantities and
lowercase maths italic to denote real quantities. By an
$n$-dimensional interval vector, we refer to the ordered $n$-tuple
of intervals $\textbf{v} = \{\textbf{x}_1,\dots,\textbf{x}_n\}$.
Note that this leads readily to the definition of higher dimensional
objects.

Arithmetic on the set of intervals is naturally defined in the
following way: let us denote by $\circ$ one of the standard
arithmetic operations $+$, $-$, $\cdot$ and $/$, then the extension
to arbitrary intervals $\textbf{x}_1$ and $\textbf{x}_2$ must
satisfy the condition
\begin{equation*}
    \textbf{x}_1\circ\textbf{x}_2 = \{x = x_1\circ x_2: x_1\in\textbf{x}_1,
    x_2\in\textbf{x}_2\},
\end{equation*}
where, in the case of division, the interval $\textbf{x}_2$ must not
contain the number zero. Importantly, the resulting interval is
always computable in terms of the endpoints, for example, let
$\textbf{x}_1 = [a,b]$ and $\textbf{x}_2 = [c,d]$ then the four
basic arithmetic operations are given by
\begin{eqnarray}
  \textbf{x}_1 + \textbf{x}_2 &=& [a+c,b+d], \\
  \textbf{x}_1 - \textbf{x}_2 &=& [a-d,b-c], \\
  \textbf{x}_1 * \textbf{x}_2 &=& [\mathrm{min}\{ac,ad,bc,bd\},\mathrm{max}\{ac,ad,bc,bd\}],\\
  1/\textbf{x}_1 &=& [1/b,1/a],\\
  \textbf{x}_1 / \textbf{x}_2 &=& \textbf{x}_1*1/\textbf{x}_2.
\end{eqnarray}
This allows one to obtain bounds on the ranges of real valued
functions by writing them as the composition of elementary
operations. For example, if
\begin{equation*}
    f(x) = x(x-1),
\end{equation*}
then
\begin{equation*}
    f([0,1]) = [0,1]([0,1]-1) = [-1,0],
\end{equation*}
note the exact range $[-1/4,0]\subset [-1,0]$ as expected.

Combined with the Brouwer fixed point theorem, interval arithmetic
enables us to prove the existence of solutions to nonlinear systems
of equations. In \S\ref{sec:eff} we saw that the periodic orbit
condition is equivalent to the following system of nonlinear
equations
\begin{equation*}
    g(x) = 0,
\end{equation*}
where $g(x) = f^p(x) - x$. In order to investigate the zeros of the
function $g$ one may apply the {\em Newton operator} to the
n-dimensional interval vector \textbf{x}
\begin{equation}\label{eqn:intnewt}
    N(\textbf{x}) = x_0 -(g'(\textbf{x}))^{-1}g(x_0),
\end{equation}
where $g'(\textbf{x})$ is the interval matrix containing all
Jacobian matrices of the form $g'(x)$ for $x\in\textbf{x}$, and
$x_0$ is an arbitrary point belonging to the interval $\textbf{x}$.
Applying the Brouwer fixed point theorem in the context of the
Newton interval operator leads to the following Theorem.
\begin{theorem}
If $N(\textbf{x})\subset \mathrm{int}(\textbf{x})$ then $g(x) = 0$
has a unique solution in $\textbf{x}$. If
$N(\textbf{x})\cap\textbf{x} = \emptyset$ then there are no zeros of
$g$ in $\textbf{x}$.
\end{theorem}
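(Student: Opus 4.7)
The plan is to deduce both halves of the theorem from a single mean-value identity combined with the Brouwer fixed-point theorem, using that by hypothesis $g'(\mathbf{x})$ contains only invertible matrices (otherwise the enclosure $(g'(\mathbf{x}))^{-1}$ appearing in $N(\mathbf{x})$ is not well-defined).

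The starting observation is this. For any $y \in \mathbf{x}$, the integral form of the mean value theorem gives
\begin{equation*}
    g(y) - g(x_0) \;=\; J(y)\,(y - x_0), \qquad J(y) \;=\; \int_0^1 g'\!\left(x_0 + t(y-x_0)\right)\,dt.
\end{equation*}
Since $\mathbf{x}$ is a convex box and the integrand lies in $g'(\mathbf{x})$ for every $t$, the matrix $J(y)$ lies in $g'(\mathbf{x})$ as well; in particular $J(y)$ is invertible and depends continuously on $y$. Consequently, \emph{whenever} $g(y) = 0$ the identity rearranges to $y = x_0 - J(y)^{-1} g(x_0)$, which by definition of the interval Newton operator lies in $N(\mathbf{x})$. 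This immediately settles the second claim: if $N(\mathbf{x}) \cap \mathbf{x} = \emptyset$, no $y \in \mathbf{x}$ can be a zero. It also yields uniqueness in the first claim, since two zeros $y_1, y_2 \in \mathbf{x}$ would satisfy $0 = g(y_1) - g(y_2) = \tilde J (y_1 - y_2)$ with $\tilde J \in g'(\mathbf{x})$ invertible, forcing $y_1 = y_2$.

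For existence under $N(\mathbf{x}) \subset \mathrm{int}(\mathbf{x})$, I would define the continuous map $T \colon \mathbf{x} \to \mathbb{R}^n$ by
\begin{equation*}
    T(y) \;=\; x_0 \,-\, J(y)^{-1} g(x_0).
\end{equation*}
By construction $T(y) \in N(\mathbf{x})$ for every $y \in \mathbf{x}$, and the hypothesis gives $T(\mathbf{x}) \subset N(\mathbf{x}) \subset \mathrm{int}(\mathbf{x}) \subset \mathbf{x}$. Since $T$ is continuous on the convex compact box $\mathbf{x}$ and maps it into itself, the Brouwer fixed point theorem (already invoked earlier in the section) produces $y^* \in \mathbf{x}$ with $T(y^*) = y^*$. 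Unwinding, $T(y^*) = y^*$ is equivalent to $J(y^*)(y^* - x_0) = -g(x_0)$, which via the mean value identity is equivalent to $g(y^*) = 0$, completing the proof.

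The only delicate step is the continuity of $J(y)^{-1}$, which hinges on the regularity of the interval matrix $g'(\mathbf{x})$; I would note that this regularity is implicit in the existence of the enclosure $(g'(\mathbf{x}))^{-1}$ used to define $N(\mathbf{x})$, so no extra hypothesis is needed. Everything else is bookkeeping with the mean value theorem and the containment chain $T(\mathbf{x}) \subset N(\mathbf{x}) \subset \mathrm{int}(\mathbf{x})$.
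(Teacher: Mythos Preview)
Your argument is correct and is essentially the standard proof of the interval Newton theorem. The paper, however, does not give its own proof of this statement; it simply writes ``For a proof, see for example,~\cite{Galias01}.'' So there is nothing to compare against beyond noting that the reference in question (and the standard literature, e.g.\ Moore, Neumaier) proves the result along exactly the lines you sketch: one uses the integral mean-value identity $g(y)-g(x_0)=J(y)(y-x_0)$ with $J(y)\in g'(\mathbf{x})$ to show that every zero lies in $N(\mathbf{x})$ (giving both the exclusion test and uniqueness), and then applies Brouwer's fixed point theorem to the continuous self-map $T(y)=x_0-J(y)^{-1}g(x_0)$ of the box $\mathbf{x}$ to obtain existence. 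Your handling of the regularity assumption on $g'(\mathbf{x})$ is also the standard one: it is implicit in the well-definedness of $(g'(\mathbf{x}))^{-1}$.
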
\noindent
For a proof, see for example,~\cite{{Galias01}}.

In practice, the following algorithm may be applied to verify the
existence of a numerical orbit: (i) start by surrounding the orbit
by an $n$-dimensional interval of width $\bar{\epsilon}$, where
$\bar{\epsilon}$ is an integer multiple of the precision,
$\epsilon$, with which the orbit is known, (ii) then apply the
Newton operator to the interval, if $N(\textbf{x})\subset\textbf{x}$
there is exactly one orbit in \textbf{x}, else if
$N(\textbf{x})\cap\textbf{x} = \emptyset$ no orbit of $g$ lies in
\textbf{x}, (iii) if neither of the above hold then either the orbit
is not a true one, or else, $\bar{\epsilon}$ needs to be increased.

In~\cite{Galias01,Galias02} interval arithmetic has been applied to
various two-dimensional maps, note however, that in applications the
Newton operator is replaced by the following method due to Krawczyk
\begin{equation}
    K(\textbf{x}) = x_0 - Ag(x_0) - (Ag'(\textbf{x}) -
    I)(\textbf{x}-x_0),\label{eqn:krawczyk}
\end{equation}
here $A$ is a {\em preconditioning matrix}. The Krawczyk operator of
Eq.~(\ref{eqn:krawczyk}) has the advantage that it does not need to
compute the inverse of $g'$, thus it can be used for a wider class
of systems than the Newton operator.

\section{Overview}\label{sec:overview}
In this thesis, we present an extension of the method of stabilising
transformations to high-dimensional systems. Using periodic orbits
as seeds, we construct  stabilising transformations based upon our
knowledge of the respective stability matrices. The major advantage
of this approach as compared with the method of Schmelcher and
Diakonos is in a substantial reduction of the number of
transformations. Since in practice, high-dimensional systems studied
in dynamical systems typically consist of low-dimensional chaotic
dynamics embedded within a high-dimensional phase space, we are able
to greatly increase the efficiency of the algorithm by restricting
the construction of transformations to the low-dimensional dynamics.
An important aspect of our treatment of high-dimensional flows is
that we do not restrict to a Poincar\'{e} surface of section (PSS).
\newabb{PSS} This is a particularly nice feature, since the phase space
topology for a high-dimensional flow is extremely complex, and the
correct placement of such a surface is a nontrivial task.

In Chapter \ref{ch:detection} we review common techniques for
detecting UPOs, keeping with the theme of the present work our
arrangement is biased towards those methods which are readily
applicable in higher dimensions. We begin Chapter \ref{ch:stabtrans}
by introducing the method of stabilisation transformations (ST)
\newabb{ST} in its original form. In \S\ref{sec:stab2d} we study the
properties of the STs for $n = 2$. We extend our analysis to higher
dimensional systems in \S\ref{sec:ext}, and show how to construct
STs using the knowledge of the stability matrices of already
detected periodic orbit points. In particular, we argue that the
stabilising transformations depend essentially on the signs of
unstable eigenvalues and the directions of the corresponding
eigenvectors of the stability matrices. Section \ref{sec:numres}
illustrates the application of the new STs to the detection of
periodic orbits in a four-dimensional kicked double rotor map and a
six-dimensional system of three coupled H\'{e}non maps. In Chapter
\ref{ch:kse} we propose and implement an extension of the method of
STs for detecting UPOs of flows as well as unstable spatiotemporal
periodic solutions of extended systems. We will see that for
high-dimensional flows -- where the choice of PSS is nontrivial --
it will pay to work in the full phase space. In \S\ref{sec:sub} we
adopt the approach often taken in subspace iteration
methods~\cite{Lust98}, we construct a decomposition of the tangent
space into unstable and stable orthogonal subspaces, and construct
STs without the knowledge of the UPOs. This is particularly useful
since in high dimensional systems it may prove difficult to detect
even a single periodic orbit. In particular, we show that the use of
singular value decomposition to approximate the appropriate
subspaces is preferable to that of Schur decomposition, which is
usually employed within the subspace iteration approach. The
proposed method for detecting UPOs is tested on a large system of
ODEs representing odd solutions of the Kuramoto-Sivashinsky equation
in \S\ref{sec:imp}. Chapter \ref{ch:summary} summarises this work
and looks at further work that should be undertaken to apply the
methods presented to a wider range of problems.

\section{Thesis results}\label{sec:results}
The main results of this thesis are published in
\cite{Crofts05,Crofts06,Crofts07}, the important points of which are
detailed below.

\vspace{0.5cm} \noindent {\bf 1) Efficient detection of periodic
orbits in chaotic systems by stabilising transformations.}
\begin{enumerate}
  \item[$\bullet$] A proof of Conjecture \ref{conj:sd} for the case $n=2$
  is presented. In other words, we show that any two by two matrix may be
  stabilised by at least one matrix belonging to the set proposed by
  Schmelcher and Diakonos.
  \item[$\bullet$] Analysis of the stability matrices for the
  two-dimensional case is provided.
  \item[$\bullet$] The above analysis is used to construct a smaller
  set of stabilising transformations. This enables us to efficiently
  apply the method to high-dimensional systems.
  \item[$\bullet$] Experimental evidence is provided showing the
  successful application of the new set of transformations to
  high-dimensional ($n\geq 4$) discrete dynamical systems. For
  the first time, plausibly complete sets of periodic orbits are
  detected for high-dimensional systems.
\end{enumerate}
\vspace{0.5cm}\noindent{\bf 2) On the use of stabilising
transformations for detecting unstable periodic orbits for the
Kuramoto-Sivashinsky equation.}
\begin{enumerate}
  \item[$\bullet$] The extension of the method of stabilising transformations
  to large systems of ODEs is presented.
  \item[$\bullet$] We construct stabilising transformations using
  the local stretching factors of an arbitrary -- not periodic --
  point in phase space. This is particularly important, since for
  very high-dimensional systems, finding small sets of UPOs to
  initiate the search becomes increasingly difficult.
  \item[$\bullet$] The number of such transformations is shown to be
  determined by the system's dynamics. This contrasts to the
  transformations introduced by Schmelcher and Diakonos which grow
  with system size.
  \item[$\bullet$] In contrast to traditional methods we do not use
  a Poincar\'{e} surface of section, rather, we supply an extra
  equation in order to determine the period.
  \item[$\bullet$] Experimental evidence for the applicability of
  the aforementioned scheme is provided. In particular, we are able
  to calculate many time-periodic solutions of the Kuramoto-Sivashinsky
  equation using a fraction of the computational effort of generally
  employed techniques.
\end{enumerate}

\chapter{Conventional techniques for detecting periodic orbits}
\label{ch:detection}
\begin{quote}
Science is built up of facts, as a house is with stones. But a
collection of facts is no more a science than a heap of stones is a
house.\\\emph{H. J.~Poincar\'{e}}
\end{quote}
The importance of efficient numerical schemes to detect periodic
orbits has been discussed in the Introduction, where we have seen
that the periodic orbits play an important role in our ability to
understand a given dynamical system. In the following chapter we
give a brief review of the most common techniques currently in use.
In developing numerical schemes to detect unstable periodic orbits
(UPO) there is much freedom. Essentially, the idea is to transform
the system of interest to a new dynamical system which possesses the
sought after orbit as an attracting fixed point. Most methods in the
literature are designed to detect UPOs of discrete systems, the
application to the continuous setting is then made by the correct
choice of Poincar\'{e} surface of section (PSS). For that reason in
this chapter, unless stated otherwise, the term dynamical system
will refer to a discrete dynamical system.

\section{Special cases}
In a select number of cases, efficient methods may be designed based
on the special structure inherent within a particular system. In
this section we discuss such methods, with particular interest in
the method due to Biham and Wenzel~\cite{Biham89} applicable to
H\'{e}non-type maps. In Chapter \ref{ch:stabtrans} we apply our
method to a system of coupled H\'{e}non maps and validate our
results against a method which is an extension of the Biham-Wenzel
method.

\subsection{One-dimensional maps}
Perhaps one of the simplest methods to detect UPOs in
one-dimensional maps is that of {\em inverse iteration}. By
observing that the unstable orbits of a one-dimensional map are
attracting orbits of the inverse map, one may simply iterate the
inverse map forward in time in order to detect UPOs. Since the
inverse map is not one-to-one, at each iteration we have a choice of
branch to make. By choosing the branch according to the symbolic
code of the orbit we wish to find, we automatically converge to the
desired cycle.

The method cannot be directly applied to higher-dimensional systems
since they typically have both expanding and contracting directions.
However, if in the contracting direction the chaotic attractor is
thin enough so as to be treated approximately as a zero-dimensional
object, then it may be possible to build an expanding
one-dimensional map by projecting the original map onto the unstable
manifold and applying inverse iteration to the model system. Orbits
determined in this way will typically be ``close'' to orbits of the
full system, and may be used to initiate a search of the full system
using more sophisticated routines.

Methods can also be constructed due to the fact that for
one-dimensional maps well ordered symbolic dynamics exists. For ease
of exposition, we shall describe one such method in the case of a
unimodal mapping, $f$, that is, a mapping of the unit interval such
that $f(0)=f(1)=0$, $f'(c)=0$ and $f''(c)<0$, where $c\in(0,1)$ is
the unique turning point of $f$.

The symbolic dynamics description for a point $x\in[0, 1]$ is given
by $\{s_k\}$ where
\begin{equation}\label{eqn:symbol}
    s_k = \left\{\begin{array}{ll}
    1\quad\mathrm{if}\quad f^{(k-1)}(x)\geq x_c,\\
    0\quad\mathrm{otherwise}.\\
    \end{array} \right.
\end{equation}
Here $x_c$ is the unique turning point of the map $f$. Note that the
order along the $x$-axis of two points $x$ and $y$ can be determined
from their respective itineraries  $\{s_k\}$ and $\{s_k'\}$. To see
this, let us define the {\em well ordered symbolic future} $\gamma$
of the point $x$ to be
\begin{equation}\label{eqn:gamma}
    \gamma(S) = 0.w_1w_2\dots = \sum_{k=1}^{\infty}w_12^{-k},
\end{equation}
where $S$ denotes the symbolic code of the point $x$ and
$$w_k = \sum_{i=1}^ks_i~(\mathrm{mod}2).$$
Now suppose that $s_1 = s_1',\dots, s_k=s_k'$ and $s_{k+1}=0$ and
$s_{k+1}'=1$. Then it can be shown that
\begin{equation}\label{eqn:ordering}
    x<y \Longleftrightarrow  \sum_{i=1}^{k+1}s_i~(\mathrm{mod}2) = 0.
\end{equation}
For a proof see for example~\cite{cvitanovic88}.

Thus in order to detect UPOs of the map $f$ one begins by
determining the symbolic value $\gamma_c$ for the orbit $\bar{S} =
\cdots s_1s_2\cdots s_ps_1s_2\cdots s_p\cdots$. Choosing a starting
point $x_0$ with symbolic value $\gamma$ from the unit interval, one
may update the starting point by comparing its symbolic value,
$\gamma$, against $\gamma_c$, the value for the cycle. Using a
binary search, this procedure will quickly converge to the desired
orbit.

The method can also be extended to deal with certain two-dimensional
systems, in that case one must also define the well ordered symbolic
past in order to uniquely identify orbits of the system. This idea
has been applied to a number of different models, such as the
H\'{e}non map, different types of billiard systems and the
diamagnetic Kepler problem, to name a few~\cite{Hansen95}.

We conclude this section by mentioning that for one-dimensional maps
it is always possible to determine UPOs as the roots to the
nonlinear equation $g = f^p(x)-x$. Since it is straightforward to
bracket the roots of a nonlinear equation in one dimension and thus
apply any of a number of solvers to detect UPOs. We shall discuss
methods for solving nonlinear equations in some detail in
\S\ref{sec:nonlin}.

\subsection{The Biham-Wenzel method}
The Biham-Wenzel (BW) \newabb{BW} method has been developed and
successfully applied in the detection of UPOs for H\'{e}non-type
systems \cite{Biham89,Biham90,Politi92}. It is based on the
observation that for maps such as the H\'{e}non map there exists a
one-to-one correspondence between orbits of the map and the extremum
configurations of a local potential function. For ease of
exposition, we describe the method in the case of the H\'{e}non map
which has the following form
\begin{equation}\label{eqn:Henon}
    x_{i+1} = 1 - ax_i^2 +bx_{i-1},
\end{equation}
when expressed as a one-dimensional recurrence relation.

In order to detect a closed orbit of length $p$ for the H\'{e}non
map, we introduce a $p$-dimensional vector field, $v(x)$, which
vanishes on the periodic orbit
\begin{equation}\label{eqn:bw}
    \frac{dx_i}{d\tau} = v_i(x) = x_{i+1} - 1 + ax_i^2 - bx_{i-1},
    \quad i = 1,\dots, p.
\end{equation}
Now for fixed $x_{i+1}$, $x_{i-1}$ the equation $v_i(x) = 0$ has two
solutions which may be viewed as representing extremal points of a
local potential function
\begin{equation}\label{eqn:potbw}
    v_i(x) = \frac{\partial}{\partial x_i}V_i(x),\quad V_i(x) =
    x_i(x_{i+1} - bx_{i-1} - 1) + \frac{a}{3}x_i^3.
\end{equation}
Assuming the two extremal points to be real, one is a local minimum
of $V_i(x)$ and the other is a local maximum. The idea of BW was to
integrate the flow (\ref{eqn:bw}) with an essential modification of
the signs of its components
\begin{equation}\label{eqn:modbw}
    \frac{dx_i}{d\tau} = s_iv_i,\quad i = 1,\dots , p,
\end{equation}
where $s_i = \pm 1$. Note that Eq.~(\ref{eqn:modbw}) is solved subject
to the periodic boundary condition $x_{p+1} = x_0$.

Loosely speaking, the modified flow will be in the direction of the
local maximum of $V_i(x)$ if $s_i = +1$, or in the direction of the
local minimum if $s_i = -1$. The differential equations
(\ref{eqn:modbw}) then drive an approximate initial guess towards a
steady state of (\ref{eqn:bw}). Since the potential defined in
Eq.~(\ref{eqn:potbw}) is unbounded for large $|x_i|$, the flow will
diverge for initial guesses far from the true trajectory. However,
the basins of attraction for the method are relatively large, and it
can be shown that convergence is achieved for all initial conditions
as long as $|x_i|$, $i = 1,\dots , p$, are small with respect to
$\sqrt{a}$. For the standard parameter values $a=1.4$, $b=0.3$, BW
report the detection of {\em all} UPOs for $p\leq 28$.

An additional feature of the BW method is that the different
sequences, $\{s_i\}$, when read as a binary code, turn out to be
related to the symbolic code of the UPOs. Consider a periodic
configuration $\{x_i\}$, and the corresponding sequence $\{s_i\}$.
Then if we define
\begin{equation}\label{eqn:symbw}
    S_i = (-1)^is_i,\quad i = 1,\dots ,p,
\end{equation}
it can be seen that for most trajectories, the sequence $S_i$
coincides with the symbolic dynamics $\bar{S}_i$ of the H\'{e}non
map, which we define as $\bar{S}_i = 1$ if $x_i > 0$ and $\bar{S}_i
= -1$ if $x_i < 0$. Further, for a particular sequence $\{s_i\}$ the
corresponding UPO does not always exist. In this case, the solution
of (\ref{eqn:modbw}) diverges, thus in principle the BW method
detects all UPOs that exist and indicates which ones do not.

To conclude, the BW method is an efficient method for detecting UPOs
of H\'{e}non-type maps. However, it has been shown that for certain
parameter values the method may fail to converge in some cases
\cite{Grassberger89,Hansen92}. In \cite{Grassberger89} it was shown
that failure occurs in one of two ways: either the solution of
Eq.~(\ref{eqn:modbw}) converges to a limit cycle rather than a
steady state, or uniqueness fails, i.e. two sequences converge to
the same UPO. Extensions of the method include detection of all
$2^p$ UPOs for the complex H\'{e}non map \cite{Biham90}, as well as
detection of UPOs for systems of weakly coupled H\'{e}non maps
\cite{Politi92}.

\section{UPOs: determining roots of nonlinear functions}
\label{sec:nonlin} One of the most popular detection strategies is
to recast the search for UPOs in phase space into the equivalent
problem of detecting zeros of a highly nonlinear function
\begin{equation}\label{eqn:nlin}
    g(x) = f^p(x) - x,\quad x\in\mathbb{R}^n,
\end{equation}
where $p\in\mathbb{Z}^+$ denotes the period. This enables us to use
tools developed for root finding to aid in our search. These methods
usually come in the form of an iterative scheme
\begin{equation}\label{eqn:iter}
    x_{i+1} = \Phi(x_i),\quad i = 0,1,2,\dots,
\end{equation}
which, under certain conditions and for a sufficiently good guess,
$x_0$, guarantee convergence \cite{StoerBook}. Note that the concern
of the current work is with high-dimensional systems, thus we will
not discuss those methods which are not readily applicable in this
case. In particular we do not deal with those approaches designed
primarily for one-dimensional systems. Examples and further
references concerning root finding for one-dimensional systems may
be found, for example, in \cite{NumericalRecipes}.

\subsection{Bisection}
Perhaps the most basic technique for detecting zeros of a function
of one variable is the bisection method. Let
$g:\mathbb{R}\rightarrow\mathbb{R}$ be continuous on the interval
$[a, b]$, then if $g(a)<0$ and $g(b)>0$, it is an immediate
consequence of the {\em intermediate value theorem} that at least
one root of $g$ must lie within the interval $(a, b)$. Assuming the
existence of a unique zero of $g\in(a,b)$, we may proceed as
follows: set $x_1 = \frac{1}{2}(a+b)$ and evaluate $g(x_1)$, if
$g(x_1)>0$ then the root lies in the interval $(a,x_1)$, otherwise
it lies in the interval $(x_1,b)$. The iteration of this process
leads to a robust algorithm which converges linearly in the presence
of an isolated root. Problems may however occur when -- as in our
case -- the detection of several roots is necessary.

An extension of the bisection method to higher dimensional problems
known as {\em characteristic bisection} (CB)
\newabb{CB} has recently been put forward by Vrahatis~\cite{Tassos02,Vrahatis95}.
Before giving a description of the CB method, we need to define the
concept of a {\em characteristic polygon}. Let us denote by $S$ the set of all
$n$-tuples consisting of $\pm 1$'s. Clearly, the number of distinct elements
of $S$ is given by $2^n$. Now we form the $2^n\times n$ matrix $\Lambda_n$ so that
its rows are the elements of $S$ without repetition. Next, consider
an oriented $n$-polyhedron, $\Pi_n$, with vertices $x_i, i = 1,2,\dots,2^n$.
We may construct the associated matrix $S(g;\Pi_n)$, whose $i$th row is given
by
\begin{equation}\label{eqn:signvec}
    \mathrm{sgn}g(x_i) = (\mathrm{sgn}g_1(x_i),
    \mathrm{sgn}g_2(x_i),\dots, \mathrm{sgn}g_n(x_i))^T,
\end{equation}
where sgn denotes the sign function and $g$ is given by
Eq.~(\ref{eqn:nlin}). The polyhedron $\Pi_n$ is said to be a
characteristic polyhedron if the matrix $S(g;\Pi_n)$ is equal to
$\Lambda_n$ up to a permutation of rows. A constructive process for
determining whether a polyhedron is characteristic, is to check that
all combination of signs are present at its vertices. In Figure
\ref{fig:char_poly}a, one can immediately conclude that the polygon
$ABDC$ is not characteristic, whilst the polygon $AEDC$ is.

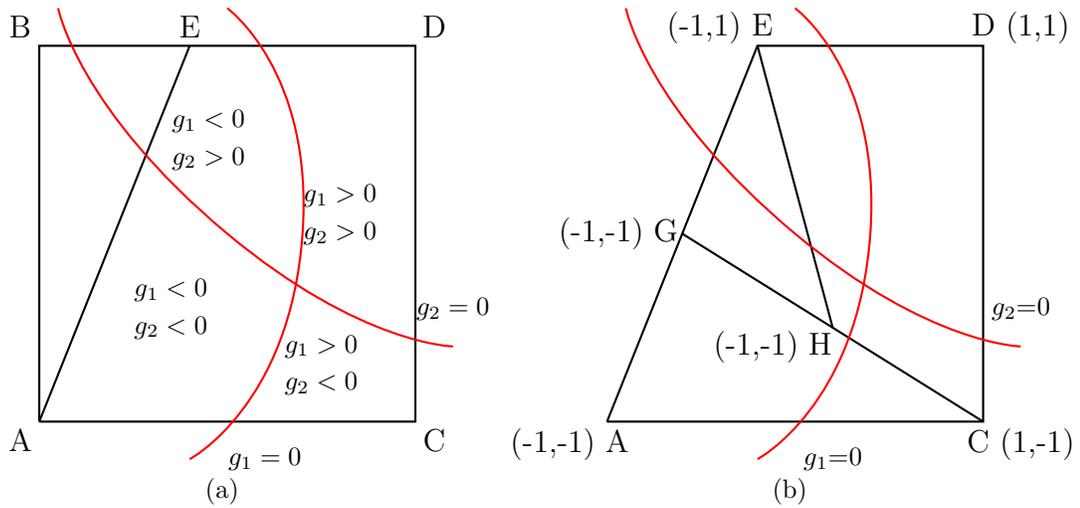
\begin{figure}
 \begin{center}
  \subfigure[]{
   \begin{pspicture}(0,0)(6,6)
   \rput(0.25,0.25){A}\rput(0.25,5.75){B}
   \rput(5.75,0.25){C}\rput(5.75,5.75){D}\rput(2.5,5.75){E}
   \rput(2.75,4.5){\footnotesize $g_1<0$}\rput(2.75,4){\footnotesize $g_2>0$}
   \rput(2.25,2.25){\footnotesize $g_1<0$}\rput(2.25,1.75){\footnotesize $g_2<0$}
   \rput(4.5,3.5){\footnotesize $g_1>0$}\rput(4.5,3){\footnotesize $g_2>0$}
   \rput(4.25,1.5){\footnotesize $g_1>0$}\rput(4.25,1){\footnotesize $g_2<0$}
   \rput(3.5,0){\footnotesize $g_1=0$}\rput(6,2){\footnotesize $g_2=0$}

   \psline(0.5,0.5)(0.5,5.5)(5.5,5.5)(5.5,0.5)(0.5,0.5)
   \psline(0.5,0.5)(2.5,5.5)
   \pscurve[curvature=2 0.1 0,linecolor=red](0.75,6)(3,3)(6,1.5)
   \pscurve[curvature=2 0.1 0,linecolor=red](3,6)(4,3)(2.5,0)
   \end{pspicture}
  }
  \hspace{1cm} \subfigure[]{
   \begin{pspicture}(0,0)(6,6)
   \rput(0,0.2){(-1,-1) A}\rput(6,0.2){C (1,-1)}\rput(6,5.75){D (1,1)}
   \rput(2,5.75){(-1,1) E}\rput(0.65,3){(-1,-1) G}\rput(2.7,1.5){(-1,-1) H}
   \rput(3.5,0){\footnotesize $g_1$=0}\rput(6,2){\footnotesize $g_2$=0}

   \psline(0.5,0.5)(2.5,5.5)(5.5,5.5)(5.5,0.5)(0.5,0.5)
   \psline(5.5,0.5)(1.5,3)
   \psline(3.5,1.75)(2.5,5.5)
   \pscurve[curvature=2 0.1 0,linecolor=red](0.75,6)(3,3)(6,1.5)
   \pscurve[curvature=2 0.1 0,linecolor=red](3,6)(4,3)(2.5,0)
   \end{pspicture}
  }
  \caption{(a) The polyhedron ABDC is non-characteristic while the polyhedron AEDC is characteristic,
  (b) Application of the CB method to AEDC leads to succesive characteristic polyhedra GEDC and HEDC.}
  \label{fig:char_poly}
 \end{center}
\end{figure}

In our discussion we assume that we know the whereabouts of an
isolated root of Eq.~(\ref{eqn:nlin}); in practice rigorous
techniques such as those discussed in \S\ref{sec:numerics} can be
used in order to locate the so called {\em inclusion regions} to
initiate the search. The idea of the CB method is to surround the
region in phase space which contains the root by a succession of
$n$-polyhedra, $\Pi^{(k)}_n$. At each stage the polyhedra are
bisected in such a way that the new polyhedra maintain the quality
of being characteristic. This refinement process is continued until
the required accuracy is achieved. The idea is illustrated for the
two-dimensional case in Figure \ref{fig:char_poly}b, where three
iterates of the process are shown: $\Pi^{(0)}_2 \equiv AEDC$,
$\Pi^{(1)}_2 \equiv GEDC$ and $\Pi^{(2)}_2 \equiv HEDC$.

Similar to the one-dimensional bisection, CB gives an unbeatably
robust algorithm once an inclusion region for the  UPO has been
determined. It is independent of the stability properties of the UPO
and guarantees success as long as the initial polyhedron is
characteristic. However, there lies the crux of the method. In
general to initiate the search, one needs to embed $2^n$ points into
an $n$-dimensional phase space, such that all combination of
$\pm1$'s are represented. This is clearly a nontrivial task for
high-dimensional problems such as those studied in this work.

\subsection{Newton-type methods}\label{subsec:newt}
In this section we discuss a class of iterative schemes which are
popular in practice -- the Newton-Raphson (NR) \newabb{NR} method
and variants thereof. Due to excellent convergence properties,
namely that convergence is ensured for sufficiently good guesses and
that in the linear neighbourhood quadratic convergence is guaranteed,
NR is the method of choice for many applications.

\subsubsection{Newton-Raphson method}
Let $x^*$ be a solution of the equation $g(x^*) = f^p(x^*) - x^* =
0$, and assume that we have an initial guess $x_i$ sufficiently
close to $x^*$, i.e. $x^* - x_i = \delta x_i$ where $\delta x_i$ is
small. Replacing $x^*$ in the above equation gives
\begin{equation}\label{zero}
f^p(x_i + \delta x_i) - (x_i + \delta x_i) = 0,
\end{equation}
which, after Taylor series expansion of $f$ and some algebraic
manipulation yields \newnot{I}
\begin{equation}\label{eqn:taylor}
\delta x_i + (Df^p(x_i) - \mathrm{I}_n)^{-1}(f^p(x_i)-x_i) +
\emph{O}(\delta x_i^2) = 0,
\end{equation}
where $Df^p(x_i)$\newnot{Df} is the Jacobian of $f^p(x)$ evaluated
at $x = x_i$. Restricting to the linear part of Eq.~\ref{eqn:taylor}
and setting $x_{i+1} = x_i + \delta x_i$, leads to the familiar
Newton-Raphson algorithm
\begin{eqnarray}\label{eqn:newt}\nonumber
    x_{i+1} &=& x_i -
    (Df^p(x_i) - \mathrm{I}_n)^{-1}(f^p(x_i)-x_i),\\
    &=& x_i - [Dg(x_i)]^{-1}g(x_i).
\end{eqnarray}
\newnot{Dg}In the case $n=1$, Newton's method has the following geometrical
interpretation: given an initial point $x_i$, we approximate $g$ by
the linear function
\begin{equation}\label{eqn:lin}
    L(x) = g(x_i) + Dg(x_i)(x-x_i),
\end{equation}
which is tangent to $g(x)$ at the point $x_i$. We then obtain the
updated guess $x_{i+1}$ by solving the linear Eq.~(\ref{eqn:lin});
see Figure~\ref{fig:newt}.

\begin{figure}
 \begin{center}
  \begin{pspicture}(0,0)(7,6)
  \psline(0,1)(7,1)
  \psline(2.05,0)(6.55,4.5)
  \psline[linestyle=dashed](4.8,1)(4.8,2.75)
  \pscurve[linecolor=red](0.5,0.5)(2,1)(5,3)(6,5)
  \psdot[dotscale=1.5](2,1)\psdot[dotscale=1.5](3.05,1)\psdot[dotscale=1.5](4.8,1)

  \rput(2,1.35){$x^*$} \rput(3.25,0.65){$x_{n+1}$} \rput(4.8,0.65){$x_n$}
  \rput(4,3){$g(x)$} \rput(6.25,3.5){$L(x)$}

  \end{pspicture}
  \caption{Newton-Raphson method}
  \label{fig:newt}
 \end{center}
\end{figure}
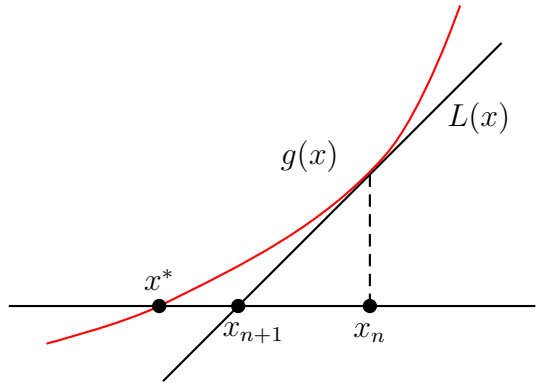

In order to speed up convergence, Householder proposed the following
iterative scheme~\cite{HouseholderBook}
\begin{equation}\label{eqn:House}
    x_{i+1} = x_i +
    (k+1)\Big(\frac{(1/g)^{(k)}}{(1/g)^{(k+1)}}\Big),\quad
    k=1,2,3,\dots.
\end{equation}
Note that this is a generalization of the Newton algorithm
applicable in the one-dimensional setting, where one keeps $k+1$
terms in the Taylor series expansion of $f$. For $k=0$, Newton's
method is restored, whilst $k = 1$, gives the following third order
method first discovered by the astronomer E. Halley in 1694
\begin{equation}\label{eqn:Halley}
    x_{i+1} = x_i - \frac{2gg'}{2(g')^2-gg''}.
\end{equation}

The use of NR to detect periodic orbits is limited mainly due to the
unpredictable nature of its global convergence properties, in
particular for high dimensions. Typically the basins of attraction
will not be simply connected regions, rather, their geometries will
be given by complicated, fractal sets, rendering any systematic
detection strategy useless. Nevertheless, the method is unbeatably
efficient when a good initial guess is known, or for speeding
convergence in inaccurately determined zeros.

\subsubsection{Multiple shooting algorithms}
Multiple shooting is a variant of the NR method, and has been
developed for detecting period-$p$ orbits for
maps~\cite{CvitanovicBook}, particularly for increasing period,
where the nonlinear function $g$ becomes increasingly complex and
can fluctuate excessively. In those cases the periodic orbit is more
easily found as a zero of the following $n\times p$ dimensional
vector function
\begin{equation}\label{eqn:shooting}
    F(\textbf{x}) = \left[\!\!\begin{array}{c}f(x_1) - x_2\\ f(x_2) - x_3\\ \cdot\\
    f(x_{p}) - x_1\end{array}\!\!\right].
\end{equation}
We now write Newton's method in the following more convenient form
\begin{equation}\label{eqn:newtshoot}
  DF(\mathbf{x})\delta \mathbf{x} = -F(\mathbf{x}),
\end{equation}
where we define the {\em Newton step} $\delta \mathbf{x} =
\mathbf{x}_{i+1} - \mathbf{x}_i$, and $DF(\mathbf{x})$ is the
$np\times np$ matrix
\begin{equation}\label{eqn:multiflow}
\left[\begin{array}{cccccc}
    f'(x_1) & -\mathrm{I}_n &&&\\
      & f'(x_2) & -\mathrm{I}_n &&\\
    &&\cdots&\cdots&\\
    &&& f'(x_{p-1}) & -\mathrm{I}_n\\
    -I_n &&&& f'(x_p)\\
\end{array}\right]
\end{equation}
where $\mathrm{I}_n$ is the $n\times n$ identity matrix. Note that,
due to the sparse nature of the Jacobian (\ref{eqn:newtshoot}) can
be solved efficiently.

The technique of multiple shooting gives a robust method for
determining long period UPOs in discrete dynamical systems. The idea
also proves useful for continuous systems. Here there are several
approaches one may take, for example, one may define a sequence of
Poincar\'{e} surface of sections (PSS) in such away that an orbit
leaving one section reaches the next one in a predictable manner,
without traversing other sections along the way, thus reducing the
flow to a set of maps. However, the topology of high-dimensional
flows is hard to visualise and such a sequence of PSSs will usually
be hard to construct; see \cite{CvitanovicBook} for further details.
We present an alternative approach when we discuss the application
of Newton's method to flows in \S \ref{sec:newtflow}.

\subsubsection{The damped Newton-Raphson method}
The global convergence properties of the NR method are wildly
unpredictable, and in general the basins of attraction cannot be
expected to be simply connected regions. To improve upon this
situation, it is useful to introduce the following function
\begin{equation}\label{eqn:norm}
    h(x) = \frac{1}{2}g(x)\cdot g(x),
\end{equation}
proportional to the square of the norm of $g(x)$. Noting that the NR
correction, $\delta x$, is a direction of descent for $h$, i.e.
\newnot{T}
\begin{eqnarray}
  \nabla{h}\cdot\delta x &=& (g^{\mathsf{T}}Dg)\cdot(-Dg^{-1}g), \\
    &=&  -g^{\mathsf{T}}\cdot g < 0,
\end{eqnarray}
suggests the following damped NR scheme
\begin{equation}\label{eqn:damp}
    x_{i+1} = x_i - \lambda_i[Dg(x_i)]^{-1}g(x_i),
\end{equation}
where $0 < \lambda_i \leq 1$ are chosen such that at each step
$h(x)$ decreases. The existence of such a $\lambda_i$ follows from
the fact that the NR correction is in the direction of decreasing
norm.

At this stage it is important to make the distinction between the
Newton step $\delta x$ and the {\em Newton direction} $d =
-[Dg(x)]^{-1}g(x)$. In the following, the Newton step will be a
positive multiple of the Newton direction. The full Newton step
corresponds to $\lambda_i = 1$ in the above equation. In practice
one starts by taking the full NR step since this will lead to
quadratic convergence close to the root. In the case that the norm
increases we backtrack along the Newton direction until we find a
value of $\lambda_i$ for which $h(x_{i+1}) < h(x_i)$.

One such implementation, is given by the Newton-Armijo rule, which
expresses the damping factor, $\lambda_i$, in the form
\begin{equation*}\label{eqn:damping}
    \lambda_i = 2^{-k},\quad k = 0,1,\dots,
\end{equation*}
during each iteration successive values of $k$ are tested until a
value of $k$ is found such that
\begin{equation}\label{eqn:residual}
    h(x_i + \lambda_id) < (1 - \alpha\lambda_i)^2h(x_i),
\end{equation}
where the parameter $\alpha\in(0,1)$ is a small number intended to
make (\ref{eqn:residual}) as easy as possible to satisfy;
see~\cite{KelleyBook} for details and references
concerning the Armijo rule.

Methods like the Armijo rule are often called {\em line searches}
due to the fact that one searches for a decrease in the norm along
the line segment $[x_i, x_i + d]$. In practice some problems can
respond well to one or two decreases in the step length by a modest
amount (such as $1/2$), whilst others require a more drastic
reduction in the step-length. To address such issues, more
sophisticated line searches can be constructed. The strategy for a
practical line search routine is as follows: if after two reductions
by halving, the norm still does not decrease sufficiently, we define
the function
\begin{equation}\label{eqn:lambda}
    \Phi(\lambda) := h(x_i + \lambda d).
\end{equation}
We may build a quadratic polynomial model of $\Phi$ based upon
interpolation at the three most recent values of $\lambda$. The next
$\lambda$ is the minimiser of the quadratic model. For example, let
$\lambda_0 = 1, \lambda_1 = 1/2$ and $\lambda_2 = 1/4$, then we can
model $\Phi(\lambda)$ by
\begin{equation}\label{eqn:quadratic}
    \Phi(\lambda) \approx (\frac{8}{3}\Phi_0 - 8\Phi_1 +\frac{16}{3}\Phi_2)\lambda^2
     +(-2\Phi_0 + 10\Phi_1 -8\Phi_2)\lambda + (\frac{1}{3}\Phi_0 - 2\Phi_1 +
     \frac{8}{3}\Phi_2),
\end{equation}
where $\Phi_i = \Phi(\lambda_i)$ for $i = 0,1,2$. Taking the
derivative of this quadratic, we find that it has a minimum when
\begin{equation}\label{eqn:minimumquad}
    \lambda = \frac{6\Phi_0 - 30\Phi_1 + 24\Phi_2}{16\Phi_0 - 48\Phi_1 +32\Phi_2}.
\end{equation}
These ideas can of course be further generalised through the use of
higher-order polynomials in the modeling of the function
$\Phi(\lambda)$.

Damped NR is a very powerful detection routine. By using information
contained in the norm of $g$ to determine the correct step-size, a
globally convergent method is obtained. Further, close to the root
it reverts to the full NR step thus recovering quadratic
convergence. It should, however, be pointed out that, in practice,
the method may occasionally get stuck in a local minimum, in which
case, detection should be restarted from a new seed.

\subsubsection{Quasi-Newton}
For an $n$-dimensional system, calculation of the Jacobian matrix
requires $n^2$ partial derivative evaluations whilst approximating
the Jacobian matrix by finite differences requires $O(n^2)$ function
evaluations. Often, a more computationally efficient method is
desired.

Consider the case of the one-dimensional NR algorithm for solving
the scalar equation $g(x) = 0$. Recall that the {\em secant
method}~\cite{NumericalRecipes} approximates the derivative
$g'(x_i)$ with the difference quotient
\begin{equation}\label{eqn:diffquotient}
    b_i = \frac{g(x_i) - g(x_{i-1})}{x_i - x_{i-1}},
\end{equation}
and then takes the step
\begin{equation}\label{eqn:rewritequot}
    x_{i+1} = x_i - b_i^{-1}g(x_i).
\end{equation}
The extension to higher dimensions is made by using the basic
property of the Jacobian $Dg(x)\delta x = \delta g$ to write the
equation
\begin{equation}\label{eqn:deltajac}
    Dg(x_i)(x_i - x_{i-1}) = g(x_i) - g(x_{i-1}).
\end{equation}
For scalar equations, (\ref{eqn:diffquotient}) and
(\ref{eqn:deltajac}) are equivalent.  For equations in more than one
variable, (\ref{eqn:diffquotient}) is meaningless, so a wide variety
of methods that satisfy the secant condition (or quasi-Newton
condition) (\ref{eqn:deltajac}) have been designed.

Broyden's method is the simplest example of the quasi-Newton
methods. In the case of Broyden's method, if $x_i$ and $B_i$ are the
current approximate solution and Jacobian, respectively, then
\begin{equation}\label{eqn:broyden}
    x_{i+1} = x_i -\lambda_iB_i^{-1}g(x_i),
\end{equation}
where $\lambda_i$ is the step length for the approximate Newton
direction $d_i = -B_i^{-1}g(x_i)$. After each iteration, $B_i$ is
updated to form $B_{i+1}$ using the Broyden update
\begin{equation}\label{eqn:update}
    B_{i+1} = B_i + \frac{(y-B_is)s^{\mathsf{T}}}{s^{\mathsf{T}}s}.
\end{equation}
Here $y = g(x_{i+1}) - g(x_i)$ and $s = \lambda_id_i$. It is a
straightforward calculation to show that the above update formula
satisfies the quasi-Newton condition (\ref{eqn:deltajac}). Different
choices for the update formula are possible, indeed, depending on
the form of the update a different quasi-Newton scheme results; see
\cite{KelleyBook} for further details and references.

\subsection{Newton's method for flows}\label{sec:newtflow}
Given an autonomous flow
\begin{equation}\label{eqn:flow}
    \frac{dx}{dt} = v(x),
\end{equation}
we can write the periodic orbit condition as
\begin{equation}\label{eqn:poc}
    \phi^{T}(x) - x = 0.
\end{equation}
Here the flow map $\phi^{t}(x)\equiv x(t)$ \newnot{phi} corresponds
to the solution of (\ref{eqn:flow}) at time $t$, and $T$ \newnot{TT}
defines the period. In order to determine UPOs of the flow we must
determine the $(n+1)$ vector $(x,T)$ satisfying Eq.~(\ref{eqn:poc}).
Immediately however, we run into a problem; namely, that the system
in (\ref{eqn:poc}) is under determined. To solve this problem we add
an equation of constraint by way of a Poincar\'{e} surface of
section (PSS). As long as the flow crosses the PSS transversely
everywhere in phase space, we obtain a $(n-1)$-dimensional system of
equations of full rank. In what follows we assume for simplicity
that the PSS is given by a hyperplane $\mathcal{P}$, that is, it
takes the following linear form
\begin{equation}\label{eqn:pss}
    \xi\cdot(x - x_0) = 0,
\end{equation}
where $x_0 \in \mathcal{P}$ and $\xi$ is a vector perpendicular to
$\mathcal{P}$. The action of the constraint given by
Eq.~(\ref{eqn:pss}), is to reduce the $n$-dimensional flow to a
$(n-1)$-dimensional map defined by successive points of directed
intersection of the flow with the hyperplane $\mathcal{P}$.

We may now proceed in similar fashion to the discrete case by
linearising Eq.~(\ref{eqn:poc}), we then obtain an improved guess by
solving the resulting linear system in conjunction with
Eq.~(\ref{eqn:pss}). The NR correction $(\delta x, \delta T)$ is
given by the solution of the following system of linear equations
\begin{equation}
   \left[\begin{array}{cc}
        J - \mathrm{I}_n & v(\phi^T(x)) \\
        \xi^{\mathsf{T}} & 0 \\
      \end{array}\right]
    \left[\begin{array}{c}
         \delta x \\
         \delta T \\
    \end{array}\right] = - \left[\begin{array}{c}
        \phi^T(x) - x \\
        0 \\
    \end{array}\right],
    \label{eqn:pssnewt}
\end{equation}
where the corresponding Jacobian, $J$\newnot{J}, is obtained by
simultaneous integration of the flow and the variational form
\cite{CvitanovicBook}
\begin{equation}\label{eqn:var}
    \frac{dJ}{dt} = AJ, \quad
    J(0) = \mathrm{I}_n,
\end{equation}
here $A = dv/dx$ and as usual $\mathrm{I}_n$ denotes the $n\times n$
identity matrix. Note that the second term in the vector on the RHS
of Eq.~(\ref{eqn:pssnewt}) has been equated to zero, this is since
our initial guess for the NR method lies on the PSS. By
construction, Eq.~(\ref{eqn:pss}) is equal to zero on the PSS.

Since the ODE (\ref{eqn:flow}) is autonomous, it follows that if
$x(t)$ is a solution of (\ref{eqn:flow}), (\ref{eqn:poc}), then any
phase shifted function $x(t+q)$, $q\in\mathbb{R}$, also solves
(\ref{eqn:flow}), (\ref{eqn:poc}). It is this arbitrariness which
forces us to introduce a PSS, by restricting to the surface
$\mathcal{P}$ we eliminate corrections along the flow. Constraints
can of course arise in other more natural ways, for example, if the
flow has an integral of motion, such as the energy in a Hamiltonian
system. In this case one may define a map of dimension $(n-2)$, by
successive points of directed intersection of the flow, with the
$(n-2)$-manifold given by the intersection of the plane,
$\mathcal{P}$, and the corresponding energy shell. This leads to the
following form for the NR correction
\begin{equation}\label{eqn:NRhampss}
    \left[\begin{array}{ccc}
        J - \mathrm{I}_n & v(\phi^T(x)) & \nabla H \\
        \xi^{\mathsf{T}} & 0 & 0 \\
    \end{array}\right]
    \left[\begin{array}{c}
        \delta x \\
        \delta T \\
        \delta E \\
    \end{array}\right] = -
    \left[\begin{array}{c}
        \phi^T(x) - x \\
        0 \\
    \end{array}\right],
\end{equation}
which must be solved simultaneously with
\begin{equation}\label{eqn:hampss}
    H(x) - E = 0,
\end{equation}
in order to remain on the energy shell. Here $H$ is the Hamiltonian
function and $E$ defines its energy.

\begin{figure}
 \begin{center}
  \begin{pspicture}(0,0)(8,5)
   \psline(1,0)(1,4)\psline(7,0)(7,4)
   \psdot[dotsize=7pt](1,2.5)\psdot[dotsize=7pt](7,2.5)
   \pscurve[curvature=2 0.1 0,linecolor=blue](1,2.5)(4,2.25)(7,2.5)
   \pscurve[curvature=2 0.1 0,linecolor=blue](1,1.5)(4,1.25)(5.4,1.3)
   \pscurve[curvature=2 0.1 0,linestyle=dashed,linecolor=blue](5.4,1.3)(6.2,1.35)(7,1.5)
   \psline[arrowscale=1.5,linewidth=1pt]{->}(1,2.5)(1,1.5)
   \psline[arrowscale=1.5,linewidth=1pt]{->}(7,2.5)(5.4,1.3)
   \psline[arrowscale=1.5,linewidth=1pt]{->}(7,2.5)(7,1.5)
   \psarc[arrowscale=1]{->}(0.8,4){0.5}{90}{270}
   \psarc[arrowscale=1]{->}(6.8,4){0.5}{90}{270}

   \rput(1.5,2.6){$x_i$}\rput(0.5,2){$\delta x_i$}
   \rput(7.6,2.6){$x_{i+1}$}\rput(7.6,2){$\delta x_{i+1}$}
   \rput(5.25,1.76){$\delta x(\tau _i)$}
   \rput(1,4.5){$\mathcal{P}$}\rput(7,4.5){$\mathcal{P}$}
  \end{pspicture}
  \caption{After integration time $\tau_i$ the point $x_i$ returns to the Poincar\'{e} surface of section (PSS),
  however the nearby point $x_i+\delta x$ does not. Thus, the matrix needed to map
  an arbitrary deviation $\delta x_i$ on the PSS to the subsequent one $\delta x_{i+1}$
  needs to take into account the implicit dependance of the return time on $x$.}
  \label{fig:pss}
 \end{center}
\end{figure}
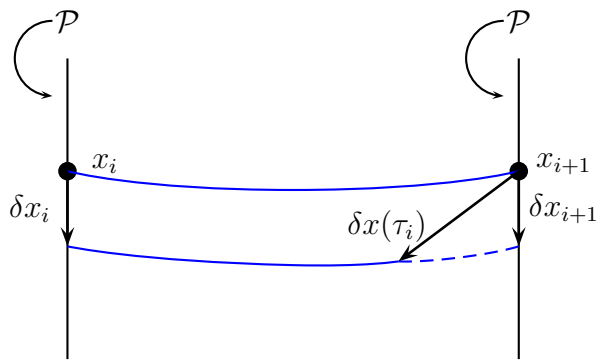

To complete we mention that the preceding discussion seems somewhat
strange, in that, in order to detect fixed points of the
Poincar\'{e} map which has dimension $(n-1)$, we solve a system of
$(n+1)$ equations. This is because, in general, the equations of
constraint are nonlinear, recall that we chose to work with linear
equations to simplify the exposition. In the case when the equations
of constraint can be solved explicitly, one may work directly with
the Poincar\'{e} map, however, care must be taken when computing the
corresponding Jacobians, in particular, the total derivative should
be taken due to the dependence of the constrained variables and the
return times on $x$; see Figure \ref{fig:pss}. This may naturally be
extended to the Hamiltonian case given above, and in fact to a flow
with an arbitrary number of constraints; see~\cite{Doyon05} for
further details.

\subsubsection{Multiple shooting revisited}
As in the discrete case, the detection of long period UPOs can be
aided by the introduction of multiple shooting. We have seen that
the problem of detecting UPOs of a flow may be regarded as a
two-point boundary value problem; the boundary conditions being the
relations closing the trajectory in phase space after time T, i.e.
the periodic orbit condition, $\phi^T(x) = x$. In order to apply
multiple shooting, we begin by discretising the time evolution into
$N$ time intervals
\begin{equation}\label{eqn:disc}
    0=\tau_0 < \tau_1 < \cdots < \tau_{N-1} < \tau_N = T.
\end{equation}
For ease of exposition we partition the time interval $[0, T]$ into
$N$ intervals of equal length, $\Delta = T/N$. The aim of multiple
shooting is then to integrate $N$ trajectories -- one for each
interval -- and to check if the final values coincide with the
initial conditions of the next one up to some precision. If not,
then we apply NR to update the initial conditions.

Let us denote by $x_i$ the initial value of the trajectory at time
$\tau_i$, and the final value of the trajectory at time $\tau_{i+1}$
by $\phi^{\Delta}(x_i)$, then ($N-1$) continuity conditions exist
\begin{equation}\label{eqn:cont}
    C_i(x_i,x_{i+1},T) = \phi^{\Delta}(x_i) - x_{i+1} = 0,\quad i =
    0,1,\dots, N-2,
\end{equation}
together with the boundary conditions
\begin{equation}\label{eqn:bound}
    B(x_{N-1},x_0,T) = \phi^{\Delta}(x_{N-1}) - x_0 = 0.
\end{equation}

Now, we must solve $N$ initial value problems, and for that we adopt
the NR method
\begin{equation}\label{eqn:m}
    C_i(x_i,x_{i+1},T) + \frac{\partial C_i}{\partial x_i}\delta x_i +
    \frac{\partial C_i}{\partial x_{i+1}}\delta x_{i+1} +\frac{\partial C_i}{\partial T}\delta T = 0.
\end{equation}
These equations become
\begin{equation}\label{eqn:cont2}
    C_i(x_i,x_{i+1},T) + J_i\delta x_i + \delta x_{i+1} + \frac{1}{N}v_{i+1}\delta T =
    0,
\end{equation}
and using the boundary conditions (\ref{eqn:bound}), we get
\begin{equation}\label{eqn:bound2}
    B(x_{N-1}, x_0, T) + J_{N-1}\delta x_{N-1}
    -\delta x_0 + \frac{1}{N}v_{N}\delta T= 0.
\end{equation}
Here
\begin{equation}\label{eqn:multjac}
    J_i = \frac{\partial\phi^{\Delta}(x_i)}{\partial
    x_i} \quad \mathrm{and} \quad v_{i+1} = v(\phi^{\Delta}(x_i)).
\end{equation}
We can then write Eqs.~(\ref{eqn:cont2}), (\ref{eqn:bound2}) in a
matrix form of dimension $[nN \times n(N+1)]$
\begin{equation}\label{eqn:multiflow}
\left[\begin{array}{ccccccc}
    J_0 & -\mathrm{I}_n & 0 & \cdots & 0 & 0 & v_1\\
    0 & J_1 & -\mathrm{I}_n & \cdots & 0 & 0 & v_2\\
    \cdots & \cdots & \cdots & \cdots & \cdots & \cdots & \cdots\\
    0 & 0 & 0 & \cdots & J_{N-2} & -\mathrm{I}_n & v_{N-1}\\
    -I_n & 0 & 0 & \cdots & 0 & J_{N-1} & v_{N}\\
\end{array}\right]
\left[\begin{array}{c}
    \delta x_0 \\ \delta x_1 \\ \cdots \\ \delta x_{N-2} \\ \delta
    x_{N-1}\\ \frac{1}{N}\delta T\\
\end{array}\right] =
\left[\begin{array}{c}
    C_0 \\ C_1 \\ \cdots \\ C_{N-2} \\ B \\
\end{array}\right],
\end{equation}
which must be solved simultaneously with the PSS condition
\begin{equation}\label{eqn:bc}
    P(x;x_0) = 0.
\end{equation}
As in the case of simple shooting, the equation (\ref{eqn:bc})
ensures the Newton correction lies on the PSS.

For a more detailed description in applying multiple shooting
algorithms to flows -- in particular conservative ones --
see~\cite{Farantos98}. We conclude by mentioning the {\tt POMULT}
software described in~\cite{Farantos98} which is written in Fortran
and provides routines for a rather general analysis of dynamical
systems. The package was designed specifically for locating UPOs and
steady states of Hamiltonian systems by using two-point boundary
value solvers which are based on similar ideas to the multiple
shooting algorithms discussed here. However, it also includes tools
for calculating power spectra with fast fourier transform, maximum
Lyapunov exponents, the classical correlation function, and the
construction of PSS.

\section{Least-square optimisation tools}\label{sec:optim}
In \S\ref{subsec:newt} we saw that by using the information
contained within the norm of $g(x) = f^p(x)-x$ it is possible to
greatly improve the convergence properties of the Newton-Raphson
(NR) method. Let us now rewrite the function $h(x)$ defined in
Eq.~(\ref{eqn:norm}) in a slightly different form
\begin{equation}\label{eqn:nonlinleast}
    h(x) = \frac{1}{2}\sum_{j = 1}^{n}[g_j(x)]^2.
\end{equation}
Here the $g_j$ are the components of the vector valued function $g =
f^p(x)-x$. Since $h$ is a nonnegative function for all $x$, it
follows that every minimum $x^*$ of $h$ satisfying $h(x^*) = 0$ is a
zero of $g$, i.e. $g(x^*) = f^p(x^*) - x^* = 0$.

In order to detect UPOs of $g$ we can apply the NR algorithm to the
gradient of (\ref{eqn:nonlinleast}), the corresponding Newton
direction is given by
\begin{equation}\label{eqn:nonlinnewtdir}
    d_i = -H^{-1}(x_i)\nabla h(x_i),
\end{equation}
where $H$ is the Hessian matrix of mixed partial derivatives. Note
that the components of the Hessian matrix depend on both the first
derivatives and the second derivatives of the $g_j$:
\begin{equation}\label{eqn:secder}
    H_{kl} = \sum_{j=1}^n\left[\frac{\partial g_j}{\partial x_k}\frac{\partial g_j}{\partial x_l}
    + g_j(x)\frac{\partial g_j}{\partial x_k\partial x_l}\right].
\end{equation}
In practice the second order derivatives can be ignored. To motivate
this, note that the term multiplying the second derivative in
Eq.~(\ref{eqn:secder}) is $g_j(x) = f_j^p(x) - x_j$. For a
sufficiently good initial guess the $g_j$ will be small enough that
the term involving only second order derivatives becomes negligible.
Dropping the second order derivatives leads to the Gauss-Newton
direction
\begin{equation}\label{eqn:gaussnewt}
    d_i = -\tilde{H}^{-1}(x_i)\nabla h(x_i).
\end{equation}
Here $\tilde{H} = Dg^{\mathsf{T}}Dg$ where $Dg$ is the Jacobian of
$g$ evaluated at $x_i$.

In the case where a good initial guess is not available, the
application of Gauss-Newton (GN) \newabb{GN} will in general be
unsuccessful. One way to remedy this is to move in the direction
given by the gradient whenever the GN correction acts as to increase
the norm, i.e. use steepest descent far from the root.

Based on this observation, Levenberg proposed the following
algorithm~\cite{Levenberg44}. The update rule is a blend of the
aforementioned algorithms and is given by
\begin{equation}\label{eqn:levenberg}
    x_{i+1} = x_i - [\tilde{H} + \lambda\mathrm{I}_n]^{-1}\nabla
    h(x_i).
\end{equation}
After each step the error is checked, if it goes down, then
$\lambda$ is decreased, increasing the influence of the GN
direction. Otherwise, $\lambda$ is increased and the gradient
dominates. The above algorithm has the disadvantage that for
increasingly large $\lambda$ the matrix $(\tilde{H} +
\lambda\mathrm{I}_n)$ becomes more and more diagonally dominant,
thus the information contained within the calculated Hessian is
effectively ignored. To remedy this, Marquardt had the idea that an
advantage could still be gained by using the curvature information
contained within the Hessian matrix in order to scale the components
of the gradient. With this in mind he proposed the following update
formula
\begin{equation}\label{eqn:LM}
    x_{i+1} = x_i - [\tilde{H} + \lambda\cdot\mathrm{diag}(\tilde{H})]^{-1}\nabla
    h(x_i),
\end{equation}
known as the Levenberg-Marquardt (LM) \newabb{LM}
algorithm~\cite{Marquardt63}. Since the Hessian is proportional to
the curvature of the function $h$, Eq.~(\ref{eqn:LM}) implies larger
steps in the direction of low curvature, and smaller steps in the
direction with high curvature.

LM is most commonly known for its application to nonlinear least
square problems for modeling data sets. However, it has recently
been applied in the context of UPO detection. Lopez {\em et
al}~\cite{Lopez05} used it to detect both periodic and relative
periodic orbits of the complex Ginzburg-Landau equation. In this
work Lopez {\em et al} used the {\tt lmder} implementation of the LM
algorithm contained in the MINPACK package~\cite{Minpack}. They
found that in the case of the complex Ginzburg-Landau equation the
method out-performs the quasi-Newton methods discussed earlier, as
well as other routines in the MINPACK package such as {\tt hybrj}
which is based on a modification of Powell's hybrid
method~\cite{Powell70}.

\section{Variational methods}\label{subsub:var}
For those equations whose dynamics are governed by a variational
principle, periodic orbits are naturally detected by locating
extrema of a so called cost function. In the case of a classical
mechanical system such a cost function is given by the action, which
is the time integral of the Lagrangian, $L(q,\dot{q},t)$, where
$q\in\mathbb{R}^n$. Starting from an initial loop, that is, a smooth
closed curve, $q(t)$, satisfying $q(t) - q(t+T) = 0$, one proceeds
to detect UPOs by searching for extrema of the action functional
\begin{equation}\label{eqn:lagrange}
    S[q] = \int_{0}^{T} L(q,\dot{q},t)dt.
\end{equation}
One advantage of this approach is that UPOs with a certain topology
can be detected since the initial guess is not a single point but a
whole orbit. This method has recently been applied to the classical
$n$-body problem, where new families of solutions have been
detected~\cite{Simo}.

In the case of a general flow -- such as that defined by
Eq.~(\ref{eqn:flow}) -- one would still like to take advantage of
the robustness offered by replacing an initial point by a rough
guess of the whole UPO. To this end, Civtanovi\'{c} {\em et
al}~\cite{Lan04} have introduced the {\em Newton descent method},
which can be viewed as a generalisation of the multiple shooting
algorithms discussed earlier. The basic idea is to make an informed
guess of what the desired UPO looks like globally, and then to use a
variational method in order to drive the initial loop, $L$, towards
a true UPO.

To begin, one selects an initial loop $L$, a smooth, differentiable
closed curve $\tilde{x}(s)$ in phase space, where $s\in [0, 2\pi]$
is some loop parameter. Assuming $L$ to be close to a UPO, one may
pick $(N-1)$ pairs of nearby points along the loop and the orbit
\begin{eqnarray}
  \tilde{x}_i &=& \tilde{x}(s_i),\quad 0\leq s_1\leq\dots\leq s_{N-1} \leq 2\pi, \\
  x_i &=& x(t_i),\quad 0\leq t_1\leq\dots\leq t_{N-1} \leq T.
\end{eqnarray}
Denote by $\delta\tilde{x}_i$ the deviation of a point $x_i$ on the
periodic orbit from the point $\tilde{x}_i$. Let us define the {\em
loop velocity field}, as the set of s-velocity vectors tangent to
the loop $L$, i.e. $\tilde{v}(\tilde{x}) =
\mathrm{d}\tilde{x}/\mathrm{d}s$. Then the goal is to continuously
deform the loop $L$ until the directions of the loop velocity field
are aligned with the vector field $v$ evaluated on the UPO. Note
that the magnitudes of the tangent vectors depend upon the
particular choice of parameter $s$. Thus, to match the magnitudes, a
local time scaling is introduced
\begin{equation}\label{eqn:varscale}
    \lambda(s_i) = \Delta t_i/\Delta s_i.
\end{equation}

Now, since $x_i = \tilde{x}_i + \delta\tilde{x}_i$ lies on the
periodic orbit, we have
\begin{equation}\label{eqn:varpoc}
    \phi^{\Delta t_i + \delta t}(\tilde{x}_i + \delta\tilde{x}_i) =
    \tilde{x}_{i+1} + \delta\tilde{x}_{i+1}.
\end{equation}
Linearisation of (\ref{eqn:varpoc}) yields the multiple shooting NR
method
\begin{equation}\label{eqn:varlin}
    \delta\tilde{x}_{i+1} - J(\tilde{x}_i,\Delta
    t_i)\delta\tilde{x}_i - v_{i+1}\delta t_i = \phi^{\Delta t_i}(\tilde{x}_i)
    - \tilde{x}_{i+1},
\end{equation}
which, for a sufficiently good guess, generates a sequence of loops
L with a decreasing cost function given by
\begin{equation}\label{eqn:varcost}
    F^2(\tilde{x}) = \frac{N-1}{2\pi}\sum_{i=1}^{N-1}
    (\phi^{\Delta t_i}(\tilde{x}_i)-\tilde{x}_{i+1})^2.
\end{equation}
As with any NR method, a decrease in the cost function is not always
guaranteed if the full NR-step is taken. However, a decrease in
$F^2$ is ensured if infinitesimal steps are taken.

Fixing $\Delta s_i$ one may proceed by $\delta\tau$, where $\tau$ is
a fictitious time which parameterises the Newton Descent, thus the
RHS of Eq.~(\ref{eqn:varlin}) is multiplied by $\delta\tau$
\begin{equation}\label{eqn:varms}
    \delta\tilde{x}_{i+1} - J(\tilde{x}_i,\Delta
    t_i)\delta\tilde{x}_i - v_{i+1}\delta t_i = \delta\tau(\phi^{\Delta t_i}(\tilde{x}_i)
    - \tilde{x}_{i+1}).
\end{equation}
According to Eq.~(\ref{eqn:varscale}) the corresponding change in
$\Delta t_i$ is given by
\begin{eqnarray}
  \nonumber
  \delta t_i &=& \Delta t_i(\tau + \delta\tau) - \Delta
  t_i(\tau),\\\nonumber
   &=& \frac{\partial\Delta t_i}{\partial\tau}\delta\tau, \\
   &=& \Delta s_i\frac{\partial\lambda}{\partial\tau}\delta\tau,
   \label{eqn:varvar}
\end{eqnarray}
and similarly $\delta\tilde{x} =
(\partial/\partial\tau)\tilde{x}(s_i,\tau)\delta\tau$. Dividing
Eq.~(\ref{eqn:varms}) by $\delta\tau$ and substituting the above
expressions for $\delta t_i$, $\delta\tilde{x}_i$ yields
\begin{equation}\label{eqn:varflow}
    \frac{d\tilde{x}_{i+1}}{d\tau} - J(\tilde{x}_i,\Delta
    t_i)\frac{d\tilde{x}_i}{d\tau} -
    v_{i+1}\frac{\partial\lambda}{\partial\tau}(s_i,\tau)\Delta s_i
    = \phi^{\Delta t_i}(\tilde{x}_i) - \tilde{x}_{i+1}.
\end{equation}
Now in the limit $N \rightarrow \infty$, the step sizes  $\Delta
s_i$, $\Delta t_i = \emph{O}(\frac{1}{N})\rightarrow 0$, giving
$$v_{i+1}\approx v_i,\quad \tilde{x}_{i+1} \approx \tilde{x}_i + \tilde{v}_i\Delta s_i,$$
$$J(\tilde{x}_i,\ \Delta t_i) \approx I + A(\tilde{x}_i)\Delta t_i, \quad
\phi^{\Delta t_i}(\tilde{x}_i) \approx \tilde{x}_i + v_i\Delta
t_i.$$ Substituting into Eq.~(\ref{eqn:varflow}) and using the
relation (\ref{eqn:varscale}) results in the following PDE which
describes the evolution of a loop $L(\tau)$ toward a UPO
\begin{equation}\label{eqn:pdeloop}
    \frac{\partial^2\tilde{x}}{\partial{s}\partial{\tau}} -
    \lambda A \frac{\partial{\tilde{x}}}{\partial{\tau}} -
    v\frac{\partial{\lambda}}{\partial{\tau}} = \lambda v -
    \tilde{v},
\end{equation}
where $v$ is the vector field given by the flow, and $\tilde{v}$ is
the loop velocity field. The importance of Eq.~(\ref{eqn:pdeloop})
becomes transparent if we rewrite the equation in the following form
\begin{equation}\label{eqn:pdeloopode}
    \frac{\partial}{\partial\tau}(\tilde{v} - \lambda v) = -(\tilde{v} - \lambda
    v),
\end{equation}
from which it follows that the fictitious time flow decreases the
cost functional
\begin{equation}\label{eqn:varficcost}
    F^2[\tilde{x}] = \frac{1}{2\pi}\int_{L(\tau)}[\tilde{v}(\tilde{x}) - \lambda v(\tilde{x})]^2d\tilde{x}.
\end{equation}

The Newton descent method is an infinitesimal step variant of the
damped NR method. It uses a large number of initial points in phase
space as a seed which allows for UPOs of a certain topology to be
detected, in practice it is very robust and has been applied in
particular to the Kuramoto-Sivashinsky equation in the weakly
turbulent regime, where many UPOs have been detected. The main
problem -- as with most multiple shooting methods -- is that of
efficiency. In \cite{Lan04} they state that most of the
computational effort goes into inverting the $[(nd+1)\times(nd+1)]$
matrix resulting from the numerical approximation of
Eq.~(\ref{eqn:pdeloop}), here $d$ is the number of loop points and
will typically be large.

\section{Summary}
Our review is not exhaustive and further methods exist for the
detection of unstable periodic orbits (UPO) in a chaotic dynamical
system. For example, in those cases where the systems dynamics
depend smoothly upon some parameter the method of continuation can
be applied. Here one uses the fact that their exists a window in
parameter space where one can easily detect solutions, the idea is
then to slowly vary the parameter into a previously unaccessible
region in parameter space whilst carefully following the
corresponding solution branch. These ideas have recently been
applied to determine relative periodic motions to the classical
fluid mechanics problem of pressure-driven flow through a circular
pipe~\cite{Kerswell04}. Another recent proposal by Parsopoulos and
Vrahatis~\cite{Parsopoulos02} uses the method of {\em particle swarm
optimization} in order to detect UPOs as global minima of a properly
defined objective function. The preceding algorithm has been tested
on several nonlinear mappings including a system of two coupled
H\'{e}non maps and the Predator-Prey mapping, and has been found to
be an efficient alternative for computing UPOs in low-dimensional
nonlinear mappings.

For generic maps (or flows) no computational algorithm is guaranteed
to find all solutions up to some period $p$ (time
$T_{\mathrm{max}}$). For systems where the topology is well
understood one can build sufficiently good initial guesses so that
the Newton-Raphson method can be applied successfully. However, in
high-dimensional problems the topology is hard to visualize and
typically Newton-Raphson will fail. Here the variational method
offers a robust alternative, methods that start with a large number
of initial guesses along an orbit, do not suffer from the numerical
instability caused by exponential sensitivity of chaotic
trajectories. Unfortunately, these methods tend to result in large
systems of equations, the solution of which is a costly process,
this has the knock on effect of slowing the convergence of such
methods considerably. This makes it extremely difficult to find UPOs
with larger periods or detect them for higher dimensional systems.
In Chapters \ref{ch:stabtrans} and \ref{ch:kse} we shall introduce and
discuss the method of stabilising transformations in some detail, due
to the fact that the method possesses excellent global convergence
properties and needs only marginal {\em a priori} knowledge of the
system, it is ideally suited to determining UPOs in high-dimensional
systems.

\chapter{Stabilising transformations}
\label{ch:stabtrans}
\begin{quote}
Tis plain that there is not in nature a point of stability to be
found: everything either ascends or declines.\\
\emph{Sir Walter Scott}
\end{quote}

An algorithm for detecting unstable periodic orbits based on
stabilising transformations (ST) has had considerable success in
low-dimensional chaotic systems~\cite{Davidchack99c}. Applying the
same ideas in higher dimensions is not trivial due to a rapidly
increasing number of required transformations. In this chapter we
introduce the idea behind the method of STs before going onto to
analyse their properties. We then propose an alternative approach
for constructing a smaller set of transformations.  The performance
of the new approach is illustrated on the four-dimensional kicked
double rotor map and the six-dimensional system of three coupled
H\'{e}non maps~\cite{Crofts06}.

\section{Stabilising transformations as a tool for detecting UPOs}
The method of STs was first introduced in 1997 by Schmelcher and
Diakonos (SD) as a tool for determining unstable periodic orbits
(UPOs) for general chaotic maps~\cite{Schmelcher97}. For the first
time UPOs of high periods where determined for complicated
two-dimensional maps, for example, the Henon map~\cite{Henon76} and
the Ikeda-Hammel-Jones-Maloney map~\cite{Ikeda79,Hammel85}. One of
the main advantages of the method is that the basin of attraction
for each UPO extends far beyond its linear neighbourhood and is given
by a simply connected region.

Let us consider the following $n$-dimensional system:
\begin{equation}
  U\!\!:\quad x_{i+1} = f(x_i),\quad
  f\!\!: {\mathbb R}^n \mapsto {\mathbb R}^n\;.
\end{equation}
The basic idea of the SD method is to introduce a new dynamical
system $\bar{U}$, with fixed points in exactly the same position in
phase space as the UPOs of $U$ but with differing stability
properties, ideally the fixed points of $\bar{U}$ will be
asymptotically stable. In general however, no such system exists.
Fortunately, this can be rectified by choosing a set of associated
systems, such that, for each UPO of $U$, there is at least one
system which possesses the UPO as a stable fixed point. To this end,
SD have put forward the following set of systems
\begin{equation}\label{eqn:assmap}
  \bar{U}_k\!\!:\quad x_{i+1} = x_i + \lambda C_k[f^p(x_i) - x_i],\quad k
  = 1,\dots,2^nn!,
\end{equation}
here $\lambda$ is a small positive number, $p$ is the period and
$C_k$ is an $n\times n$ matrix with elements
$[C_{k}]_{ij}\in\{0,\pm1\}$ such that each row or column contains
only one nonzero element. To motivate Eq.~(\ref{eqn:assmap}), it is
useful to look at the Jacobian of the system $\bar{U}_k$
\begin{equation}\label{assjac}
    \frac{dx_{i+1}}{dx_i} = \mathrm{I}_n + \lambda C_k[Df^p(x_i) - \mathrm{I}_n].
\end{equation}
For a fixed point of $\bar{U}_k$ to be stable all eigenvalues of the
above expression must have absolute value less than one. This can be
achieved in the following way: (i) choose a linear transformation
$C_k$ such that all eigenvalues of the matrix
$C_k[Df^p(x_i)-\mathrm{I}_n]$ have negative real part, and (ii) pick
$\lambda$ sufficiently small so that the eigenvalues of $\mathrm{I}
+ \lambda C_k[Df^p(x_i) - \mathrm{I}_n]$ are scaled so as to all
have absolute value less than one. The fact that such a $\lambda$
exists is clear, whilst step (i) follows from Conjecture
\ref{conj:sd}.

This leads to the following algorithm to detect {\em all} period-$p$
orbits of a chaotic map. We begin by placing seeds over the
attractor using any standard seeding strategy, for example, a
chaotic trajectory or a uniform grid. Using the different matrices
$C_k$, we construct the $2^nn!$ associated systems $\bar{U}_k$ --
see Observation \ref{obs:sd}. For a sufficiently small value of
$\lambda$, we propagate each seed using the iteration scheme in
(\ref{eqn:assmap}) to compute a sequence $\{x_i\}$ for each of the
$2^nn!$ maps, $\bar{U}_k$. If a sequence converges, we check whether
a new period-$p$ orbit point has been found, and if so, proceed to
detect the remaining $(p-1)$ orbit points by iterating the map $f$.
In order to test for the completeness of a set of period-$p$ orbits,
SD suggest iterating a multiple of the initial seeds used in the
detection process. If no new orbits are found then they claim that
the completion of the set is assured.

From a computational perspective the SD method has two major
failings. Firstly, and most importantly, the application to
high-dimensional systems is restricted due to the fact that the
number of matrices in the set $\mathcal{C}_{\mathrm{SD}}$ increases
very rapidly with system dimension. Secondly, it follows from our
prior discussion that the parameter $\lambda$ scales with the
magnitude of the largest eigenvalue of (\ref{assjac}). Since the
instability of a UPO increases with the period $p$, very small
values of $\lambda$ must be taken in order to detect long period
orbits. This has the effect of increasing the number of steps --
hence function evaluations -- needed to obtain convergence, which in
turn leads to a rise in the computational costs.

The problem of extending the method of STs to high-dimensional
systems is the primary concern of this thesis and will be dealt with
in some detail in this chapter and the next. The problem of
efficiency is solved to some extent by the following observation
in~\cite{Schmelcher98}.
\begin{observation}\label{obs:euler}
For a given map $f$. Taking the limit $\lambda\to 0$ in
Eq.~(\ref{eqn:assmap}) leads to the following continuous flow
\begin{equation}\label{eqn:vec}
\lim_{\lambda\to 0}\frac{x_{i+1}-x_i}{\lambda} = \frac{dx}{ds} =
C_k(f^p(x) - x) = C_kg(x).
\end{equation}
\end{observation}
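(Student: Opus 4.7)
The plan is to recognise the defining recursion in Eq.~(\ref{eqn:assmap}) as the forward Euler discretisation of an autonomous ODE with step size $\lambda$, and then appeal to the definition of the derivative to pass to the continuum limit. Nothing deeper than elementary calculus is required, since the identity $x_{i+1} - x_i = \lambda C_k g(x_i)$ is manifest from the iteration itself.

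First I would rewrite Eq.~(\ref{eqn:assmap}) in the suggestive form
\begin{equation*}
  \frac{x_{i+1}-x_i}{\lambda} \;=\; C_k\bigl[f^p(x_i) - x_i\bigr] \;=\; C_k g(x_i),
\end{equation*}
which holds exactly for every $\lambda>0$ and every $i$, with no limit taken yet. To interpret the left-hand side as an approximation to a derivative, I would introduce the artificial time variable $s = i\lambda$ and view $x_i = x(s_i)$ as samples of a continuous trajectory $x(s)$. Under this identification, the right-hand side depends only on the sample value $x(s_i)$, so the scheme is precisely the explicit Euler scheme applied to the autonomous ODE $dx/ds = C_k g(x)$.

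Next I would take $\lambda \to 0$ while holding $s$ fixed. Since $g$ is continuous (being a polynomial in the components of $f^p$, which is assumed smooth), the expression $C_k g(x_i)$ converges to $C_k g(x(s))$ as $\lambda \to 0$. By definition of the derivative of $x(s)$,
\begin{equation*}
  \lim_{\lambda \to 0} \frac{x_{i+1}-x_i}{\lambda} \;=\; \frac{dx}{ds},
\end{equation*}
and combining this with the exact identity above yields $dx/ds = C_k g(x)$, as claimed.

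There is no real obstacle here: the observation is a conceptual relabelling rather than a theorem with content. The only thing to verify, if one wished to make the statement fully rigorous rather than formal, is that the Euler iterates indeed converge to the ODE solution on compact $s$-intervals. This is a standard consequence of the Lipschitz continuity of $C_k g$ on bounded sets, and would follow from any classical convergence theorem for one-step ODE integrators; however, since the observation is being used only as motivation for switching from the iterative map to the flow $\Sigma$, the heuristic derivation above is all that is needed.
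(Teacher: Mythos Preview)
Your proposal is correct and matches the paper's treatment exactly: the paper does not give a formal proof of this observation but simply remarks, immediately after stating it, that ``the SD method is precisely Euler's method to solve the flow of (\ref{eqn:vec}) with step-size $\Delta t = \lambda$.'' Your derivation just makes this explicit by writing out the difference quotient and identifying it with the forward Euler step, which is all the observation is intended to convey.
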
\noindent
The equilibria of Eq.~(\ref{eqn:vec}) are located at exactly the
same positions and share the same stability properties as the fixed
points of the associated systems $\bar{U}_k$ in the limit of small
$\lambda$. By transferring to the continuous setting, the dependency
on $\lambda$ has been removed, this allows us to use our favourite
off-the-shelf numerical integrator to enable us to detect UPOs of
$f$; the SD method is precisely Euler's method to solve the flow of
(\ref{eqn:vec}) with step-size $\Delta t = \lambda$.

\begin{figure}[t]
 \begin{center}
  \psset{xunit=1.5cm,yunit=2.25cm}
  \begin{pspicture}(-3,-2)(3,2)
   \psaxes[Dx=1,Dy=1,Ox=-3]{-}(-3,0)(-3,-1.25)(2.99999,1.25)
   \psline(-3,-2)(3,-2)\psline(-3,1.25)(3,1.25)\psline(-3,-1.25)(-3,-2)\psline(3,1.25)(3,-2)
   \psline[linewidth=1pt](-3,-1.25)(3,-1.25)\psline[linewidth=1pt](-3,-1.5)(3,-1.5)
   \psline[linewidth=1pt](-3,-1.75)(3,-1.75)
   \psplot[linecolor=blue,plotstyle=curve,plotpoints=3000]{-3}{3}{/k 180 3.14159265 div def x 2 exp k
   mul cos}

   \psset{arrowscale=1.25,linewidth=1pt,linecolor=red,tbarsize=4pt,dotscale=1}

   \psdot(-2.8,-1.25)\psdot(-2.8,-1.75)\psdot(-2.175,-1.25)\psdot(-2.175,-1.5)
   \psdot(-2.175,-1.75)\psdot(-1.25,-1.25)\psdot(-1.25,-1.75)\psdot(1.25,-1.25)
   \psdot(1.25,-1.5)\psdot(1.25,-1.75)\psdot(2.175,-1.25)\psdot(2.175,-1.75)
   \psdot(2.8,-1.25)\psdot(2.8,-1.5)\psdot(2.8,-1.75)

   \psline{->}(-3,-1.25)(-2.8,-1.25)
   \psline[ArrowInside=-<]{-|}(-2.8,-1.25)(-2.55,-1.25)
   \psline{|->}(-2.4,-1.25)(-2.175,-1.25)
   \psline[ArrowInside=-<]{-|}(-2.175,-1.25)(-1.9,-1.25)
   \psline[ArrowInside=->]{|-}(-1.7,-1.25)(-1.25,-1.25)
   \psline[ArrowInside=-<]{-|}(-1.25,-1.25)(-0.75,-1.25)
   \psline{->}(3,-1.25)(2.8,-1.25)
   \psline[ArrowInside=-<]{-|}(2.8,-1.25)(2.55,-1.25)
   \psline{|->}(2.4,-1.25)(2.175,-1.25)
   \psline[ArrowInside=-<]{-|}(2.175,-1.25)(1.9,-1.25)
   \psline[ArrowInside=->]{|-}(1.7,-1.25)(1.25,-1.25)
   \psline[ArrowInside=-<]{-|}(1.25,-1.25)(0.75,-1.25)
   \psline{<-|}(-3,-1.5)(-2.8,-1.5)
   \psline[ArrowInside=->](-2.8,-1.5)(-2.175,-1.5)
   \psline[ArrowInside=-<]{-|}(-2.175,-1.5)(-1.25,-1.5)
   \psline[ArrowInside=->,ArrowInsidePos=0.75](-1.25,-1.5)(1.25,-1.5)
   \psline[ArrowInside=-<]{-|}(1.25,-1.5)(2.175,-1.5)
   \psline[ArrowInside=->](2.175,-1.5)(2.8,-1.5)
   \psline{->}(3,-1.5)(2.8,-1.5)
   \psline{|->}(-2.95,-1.75)(-2.8,-1.75)
   \psline[ArrowInside=-<]{-|}(-2.8,-1.75)(-2.65,-1.75)
   \psline[ArrowInside=->,ArrowInsidePos=0.75]{|-}(-2.55,-1.75)(-2.175,-1.75)
   \psline[ArrowInside=-<]{-|}(-2.175,-1.75)(-1.55,-1.75)
   \psline[ArrowInside=->]{|-}(-1.45,-1.75)(-1.25,-1.75)
   \psline[ArrowInside=-<]{-|}(-1.25,-1.75)(-1.05,-1.75)
   \psline[ArrowInside=->,ArrowInsidePos=0.75]{|-}(-.95,-1.75)(1.25,-1.75)
   \psline[ArrowInside=-<]{-|}(1.25,-1.75)(1.875,-1.75)
   \psline{|->}(2.375,-1.75)(2.175,-1.75)
   \psline[ArrowInside=-<]{-|}(2.175,-1.75)(1.974,-1.75)
   \psline[ArrowInside=->,ArrowInsidePos=0.75]{|-}(2.475,-1.75)(2.8,-1.75)
   \psline{->}(3,-1.75)(2.8,-1.75)
   \rput(0,-1.125){(a)}\rput(0,-1.375){(b)}\rput(0,-1.625){(c)}
\end{pspicture}
\end{center}
\caption{(colour online) Shown in red are the basins of convergence
of (a) the Newton method, (b) the Schmelcher and Diakonos method
with $0<\lambda<0.3568$ and $C=\mathrm{I}$, and (c) the Davidchack
and Lai method with $\beta = 4.0$ and $C = \mathrm{I}$ to the zeros
of a function $g(x) = \cos(x^2)$ in the interval $(-3, 3)$. Arrows
indicate the direction of convergence, and large dots are the zeros
to which the method converges.}\label{fig:compbasin}
\end{figure}
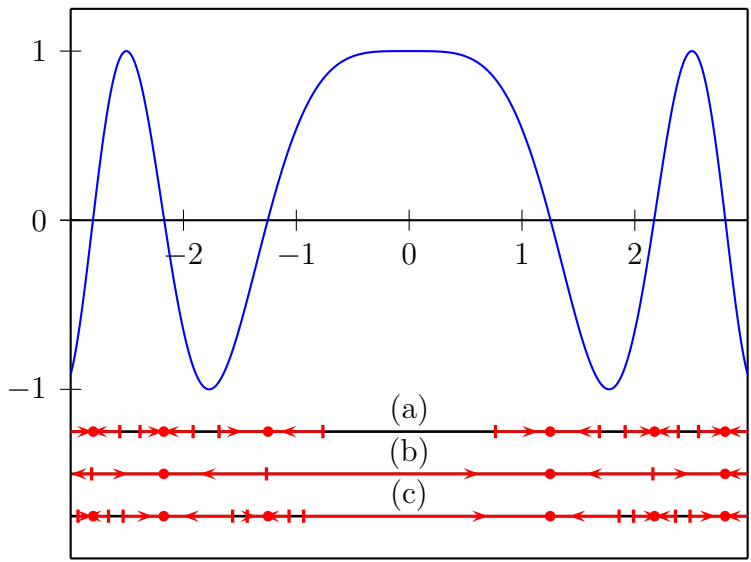

Taking into account the typical stiffness of the flow in
Eq.~(\ref{eqn:vec}), Davidchack and Lai proposed a modified scheme
employing a semi-implicit Euler method \cite{Davidchack99c}. The
semi-implicit scheme is obtained from the fully implicit Euler
method in the following way: starting from the implicit scheme as
applied to Eq.~(\ref{eqn:vec})
\begin{equation}\label{eqn:impeul}
    x_{i+1} = x_i + hCg(x_{i+1}),
\end{equation}
one obtains the semi-implicit Euler routine by expanding the term
$g(x_{i+1})$ in a Taylor series about $x_i$ and retaining only those
terms which are linear
\begin{eqnarray}
    x_{i+1}&=&x_i + hCg(x_i) + hCg'(x_i)\Delta x_i + O(\Delta x_i^2), \\
    &\doteq& x_i + [\frac{1}{h}C^{\mathsf{T}}-g'(x_i)]^{-1}g(x_i).
\end{eqnarray}
The Davidchack-Lai method is completed by choosing the step-size $h
= 1/(\beta s_i)$, here $\beta > 0$ is a scalar parameter and $s_i =
||g(x_i)||$ is an $L_2$ norm. This leads to the following algorithm
for detecting UPOs
\begin{equation}\label{eqn:davlai}
    x_{i+1} = x_i + [\beta s_i C^{\mathsf{T}} - G_i]^{-1}g(x_i),
\end{equation}
where $G_i = Dg(x_i)$ is the Jacobian matrix, and ``${\mathsf T}$''
denotes transpose.

In the vicinity of an UPO, the function $g(x)$ tends to zero and
Newton-Raphson is restored, indeed it can be shown that the method
retains quadratic convergence \cite{Klebanoff01}. Away from the UPO
and for sufficiently large $\beta$, the flow is accurately 
reproduced thus conserving the global convergence properties of the
SD method. A nice property of the above algorithm is that the basin
sizes can be controlled via the parameter $\beta$. In particular in
\cite{Davidchack01b} it is shown that increasing the value of the
parameter $\beta$ results in a larger basin size. A large value of
$\beta$, however, requires more time steps for the iteration to
converge to a UPO. This leads to a trade off between enlarging the
basins and the speed of convergence. A comparison of the basins of
attraction for the Newton-Raphson, the Schmelcher-Diakonos and the
Davidchack-Lai methods is given in Figure \ref{fig:compbasin}. Here
the roots of the function $g(x) = \cos(x^2)$ are shown on the
interval $(-3, 3)$ along with the corresponding basins of attraction
for the three methods; see \cite{Davidchack99c} for further details.

\subsection{Seeding with periodic orbits}
An important ingredient of the Davidchack-Lai algorithm lies in the
selection of initial seeds. Davidchack and Lai claim that the most
efficient strategy for detecting UPOs of period $p$ is to use UPOs
of other periods as seeds \cite{Davidchack99c}. This can be
understood due to the distribution of the UPOs. It is well known
that orbit points cover the attractor in a systematic manner, which
in turn reflects the foliation of the function $f^p(x)$ and its
iterates. For low-dimensional maps, Davidchack and Lai have
considerable success with the aforementioned seeding strategy as
compared to traditional seeding algorithms. Indeed, for the
H\'{e}non and the Ikeda-Hammel-Jones-Maloney maps, {\em all}
period-$p$ orbits were detected using period-$(p-1)$ orbits,
provided they exist; in the case that no period-$(p-1)$ orbits
exist, one can try either $(p-2)$ or $(p+1)$.

The use of UPOs as seeds is also crucial to our work. We shall see
that it is the information contained in the UPOs -- or rather their
Jacobian's  -- that will enable us to construct new sets of STs, in
turn allowing for efficient detection of UPOs in high-dimensional
systems.

\section{Stabilising transformations in two dimensions}
\label{sec:stab2d} The stability of a fixed point $x^\ast$ of the
flow
\begin{equation}
  \Sigma\!\!: \quad \frac{dx}{ds} = C g(x)\,,
\end{equation} is determined by the real parts of the eigenvalues
of the matrix $C G$, where $G = Dg(x^\ast)$ is the Jacobian matrix
of $g(x)$ evaluated at $x^\ast$. For $x^\ast$ to be a stable fixed
point of $\Sigma$, the matrix $C$ has to be such that all the
eigenvalues of $CG$ have negative real parts.  In order to
understand what properties of $G$ determine the choice of a
particular stabilising transformation $C$, we use the following
parametrisation for the general two-dimensional orthogonal matrices
\begin{equation}
C_{s,\alpha} = \left[\!\!\begin{array}{cc}
s \cos \alpha & \sin \alpha \\
-s \sin \alpha & \cos \alpha \end{array} \!\!\right],
\label{eq:par2d} \end{equation} where $s = \pm 1$ and $-\pi < \alpha
\le \pi$. When $\alpha = -\pi/2,~0,~\pi/2$, or $\pi$, we obtain the
set of matrices ${\mathcal C}_{\mathrm{SD}}$. For example,
\begin{equation}\label{eqn:mat1}
    C_{1,\pi/2} = \left[\begin{array}{cc}
                     0 & 1 \\
                     -1 & 0 \\
                  \end{array}\right],
\end{equation}
and
\begin{equation}\label{eqn:mat2}
    C_{-1,\pi} = \left[\begin{array}{cc}
                     1 & 0 \\
                     0 & -1 \\
                 \end{array}\right].
\end{equation}
If we write
\begin{equation}\label{eqn:Gjac}
    G := \{G_{ij}\},\quad i,j = 1,2,
\end{equation}
then the eigenvalues of $C_{s,\alpha}G$ are given by the following
equations:
\begin{equation}
  \sigma_{1,2} = -A\cos(\alpha - \theta) \pm
  \sqrt{A^2\cos^2(\alpha - \theta) - s \det G}
\label{eq:eigs1} \end{equation} where $\det G = G_{11} G_{22} -
G_{12} G_{21}$, $~A = \frac{1}{2}\sqrt{(s G_{11} + G_{22})^2
    + (s G_{12} - G_{21})^2}$, and
\begin{equation}\label{eq:theta}
  \tan\theta = \frac{s G_{12} - G_{21}}{-s G_{11} - G_{22}}\:,
  \qquad -\pi < \theta \le \pi\,.
\end{equation}
Note that the signs of the numerator and denominator are significant
for defining angle $\theta$ in the specified range and should not be
canceled out.  It is clear from Eq.(\ref{eq:eigs1}) that both
eigenvalues have negative real parts when
\begin{equation}\label{eq:cond}
  s = \bar{s} := {\mathrm{sgn}}\,\det G,\qquad
  \mbox{and} \qquad | \alpha - \theta | < \textstyle\frac{\pi}{2}\,,
\end{equation}
provided that $\det G \neq 0$.  This result proves the validity of
Conjecture \ref{conj:sd} for $n = 2$.  Moreover, it shows that there
are typically two matrices in ${\mathcal C}_{\mathrm{SD}}$ that
stabilise a given fixed point.

Parameter $\theta$ clearly plays an important role in the above
analysis.  The following two theorems show its relationship to the
eigenvalues and eigenvectors of the stability matrix of a fixed
point.
\begin{theorem}
Let $x^\ast$ be a saddle fixed point of $f^p(x): {\mathbb R}^2
\mapsto {\mathbb R}^2$ whose stability matrix $\Dfp(x^\ast)$ has
eigenvalues $\lambda_{1,2}$ such that $|\lambda_2| < 1 <
|\lambda_1|$ and eigenvectors defined by the polar angles $0 \le
\phi_{1,2} < \pi$, i.e. $v_{1,2} = (\cos\phi_{1,2},
\sin\phi_{1,2})^{\mathsf T}$. Then the following is true for the
angle $\theta$ defined in Eq.~(\ref{eq:theta}):
\begin{description}
  \item[{\sc Case 1.} $\lambda_1 < -1$:]
   \begin{equation}\label{eq:th1case1a}
    \textstyle \theta \in \left(-\frac{\pi}{2}, \frac{\pi}{2}\right)\,.
   \end{equation}
   Moreover, if $|\lambda_1| \gg 1$, then
   \begin{equation}
    \theta = (\phi_1 - \phi_2) (\mathrm{mod~}\pi) -
   \textstyle \frac{\pi}{2} + \mathcal{O}(\frac{1}{|\lambda_1|})\,.
  \label{eq:th1case1b} \end{equation}
 \item[{\sc Case 2.} $\lambda_1 > 1$:]
  \begin{equation}\label{eq:th1case2}
   \theta = \left\{\begin{array}{cc}
   \frac{3\pi}{2} - \phi_1 - \phi_2\,, & 0 < \phi_1 - \phi_2 < \pi\,,\\
   \frac{\pi}{2} - \phi_1 - \phi_2\,, & -\pi < \phi_1 - \phi_2 < 0\,.\\
   \end{array} \right.
  \end{equation}
\end{description}
\end{theorem}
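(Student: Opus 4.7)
The plan is to exploit the spectral decomposition of $G=Dg(x^{\ast})=Df^{p}(x^{\ast})-\mathrm{I}_{2}$, which has eigenvalues $\mu_{i}=\lambda_{i}-1$ and the same eigenvectors $v_{1,2}$ as $Df^{p}(x^{\ast})$. Writing $V=[v_{1}\;v_{2}]$ with the prescribed polar parametrisation and $G=V\,\mathrm{diag}(\mu_{1},\mu_{2})\,V^{-1}$, the entries $G_{ij}$ come out as explicit trigonometric combinations of $\phi_{1},\phi_{2},\mu_{1},\mu_{2}$ divided by $\sin(\phi_{2}-\phi_{1})$. A short computation gives the two identities I will need repeatedly:
\begin{equation*}
G_{11}+G_{22}=\mu_{1}+\mu_{2},\qquad G_{11}-G_{22}=\frac{(\mu_{1}-\mu_{2})\sin(\phi_{1}+\phi_{2})}{\sin(\phi_{2}-\phi_{1})},
\end{equation*}
together with
\begin{equation*}
G_{12}-G_{21}=\frac{-(\mu_{1}-\mu_{2})\cos(\phi_{1}-\phi_{2})}{\sin(\phi_{2}-\phi_{1})},\qquad G_{12}+G_{21}=\frac{-(\mu_{1}-\mu_{2})\cos(\phi_{1}+\phi_{2})}{\sin(\phi_{2}-\phi_{1})}.
\end{equation*}

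For Case~1, $\lambda_{1}<-1$ together with $|\lambda_{2}|<1$ forces $\mu_{1},\mu_{2}<0$, so $\det G=\mu_{1}\mu_{2}>0$ and $s=\bar{s}=+1$. Plugging $s=1$ into~\eqref{eq:theta} and using the identities above yields
\begin{equation*}
\tan\theta=\frac{(\mu_{1}-\mu_{2})\cos(\phi_{1}-\phi_{2})}{(\mu_{1}+\mu_{2})\sin(\phi_{2}-\phi_{1})}.
\end{equation*}
The denominator of the defining fraction in~\eqref{eq:theta} is $-(\mu_{1}+\mu_{2})>0$, which by the sign convention pins $\theta$ to $(-\pi/2,\pi/2)$; this establishes~\eqref{eq:th1case1a}. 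For the limit assertion~\eqref{eq:th1case1b} I observe that $(\mu_{1}-\mu_{2})/(\mu_{1}+\mu_{2})=1+\mathcal{O}(1/|\lambda_{1}|)$ as $|\lambda_{1}|\to\infty$, so $\tan\theta=-\cot(\phi_{1}-\phi_{2})+\mathcal{O}(1/|\lambda_{1}|)=\tan\!\bigl((\phi_{1}-\phi_{2})-\pi/2\bigr)+\mathcal{O}(1/|\lambda_{1}|)$. Since $\theta\in(-\pi/2,\pi/2)$ and $\phi_{1}-\phi_{2}\in(-\pi,\pi)$, lifting to the correct branch adds $0$ or $\pi$ according to $\mathrm{sgn}(\phi_{1}-\phi_{2})$, which is precisely the ``$\mathrm{mod}\,\pi$'' in the statement.

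For Case~2, $\lambda_{1}>1>|\lambda_{2}|$ gives $\mu_{1}>0>\mu_{2}$, hence $\det G<0$ and $s=-1$. With this sign the numerator of~\eqref{eq:theta} becomes $-G_{12}-G_{21}$ and the denominator becomes $G_{11}-G_{22}$, and the identities above collapse the ratio to the clean formula $\tan\theta=\cot(\phi_{1}+\phi_{2})=\tan\!\bigl(\tfrac{\pi}{2}-\phi_{1}-\phi_{2}\bigr)$. So $\theta=\tfrac{\pi}{2}-\phi_{1}-\phi_{2}+k\pi$ for some $k$, and the remaining work is to fix $k$ so that $\theta\in(-\pi,\pi]$ and to express the result in the two forms given in~\eqref{eq:th1case2}. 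I will do this by a quadrant-by-quadrant analysis of the signs of $\cos(\phi_{1}+\phi_{2})$, $\sin(\phi_{1}+\phi_{2})$ and $\sin(\phi_{2}-\phi_{1})$, noting that $\mu_{1}-\mu_{2}>0$. When $0<\phi_{1}-\phi_{2}<\pi$ the factor $\sin(\phi_{2}-\phi_{1})$ is negative, and the sign pattern across the four sub-intervals of $\phi_{1}+\phi_{2}\in(0,2\pi)$ reproduces $\theta=\tfrac{3\pi}{2}-\phi_{1}-\phi_{2}$ (reduced mod $2\pi$ into $(-\pi,\pi]$ when necessary); the symmetric analysis for $-\pi<\phi_{1}-\phi_{2}<0$ yields the other branch $\theta=\tfrac{\pi}{2}-\phi_{1}-\phi_{2}$.

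The routine work is the explicit multiplication $V\Lambda V^{-1}$ and collapsing it with addition-angle identities; the real obstacle is the bookkeeping in Case~2, where a single tangent equation has to be promoted to an identity in $(-\pi,\pi]$, and where the piecewise split in~\eqref{eq:th1case2} is dictated precisely by which of $\sin(\phi_{2}-\phi_{1})$, $\sin(\phi_{1}+\phi_{2})$, $\cos(\phi_{1}+\phi_{2})$ flip sign. A tidy way to organise this is to tabulate the signs of the numerator and denominator of~\eqref{eq:theta} in each of the four sub-cases and read off the quadrant of $\theta$ from them, which keeps the argument linear and avoids any ambiguity modulo $\pi$.
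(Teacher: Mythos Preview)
Your proposal is correct and follows essentially the same route as the paper: both arguments write $G=V\,\mathrm{diag}(\lambda_1-1,\lambda_2-1)\,V^{-1}$, compute the numerator and denominator of \eqref{eq:theta} in terms of $\phi_1,\phi_2$ and the eigenvalue differences, and then read off the sign and quadrant information. Your auxiliary identities for $G_{11}\pm G_{22}$ and $G_{12}\pm G_{21}$ and your explicit quadrant tabulation in Case~2 are a slightly more verbose packaging of the same calculation the paper carries out directly, but the substance is identical.
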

\begin{proof}
Matrix $G = \Dfp(x^\ast) - I_2$, where $I_2$ is the identity matrix,
can be written as follows:
\begin{equation}\label{eq:eigdec}
  G = \left[\!\!\begin{array}{cc} G_{11} \!&\! G_{12} \\
  G_{21} \!&\! G_{22}\end{array} \!\!\right] =
  \left[\!\!\begin{array}{cc} \cos\phi_1 \!&\! \cos\phi_2 \\
  \sin\phi_1 \!&\! \sin\phi_2\end{array} \!\!\right]
  \left[\!\!\begin{array}{cc} \lambda _{1}-1 \!&\! 0 \\
  0 \!&\! \lambda _{2}-1\end{array} \!\!\right]
  \left[\!\!\begin{array}{cc} \cos\phi_1 \!&\! \cos\phi_2 \\
  \sin\phi_1 \!&\! \sin\phi_2\end{array} \!\!\right]^{-1}
\end{equation}
\begin{description}
  \item[Case 1:] Since $\det G = (\lambda_1 - 1)(\lambda_2 - 1)
  > 0$ we set $s = 1$ and obtain from Eq.~(\ref{eq:theta}):
  \begin{equation}
    \tan\theta = \frac{(\lambda_1-\lambda_2)\cot(\phi_1-\phi_2)}
    {2-\lambda_1-\lambda_2}\,,
    \label{eq:theta_c1}
  \end{equation}
  where, just like in Eq.~(\ref{eq:theta}), as well as in
  Eqs.~(\ref{eq:theta_c2}) and
  (\ref{eq:theta_t1}) below, the signs of the numerator and
  denominator should not be canceled out. Since $2-\lambda_1-\lambda_2
  > 0$, we have that $\cos\theta > 0$ or
  \[  \textstyle\theta\in\left(-\frac{\pi}{2}, \frac{\pi}{2}\right)\,. \]
  For $|\lambda_1| \gg 1$, Eq.~(\ref{eq:theta_c1}) yields:
  \[  \tan\theta = \left[-1 + \mathcal{O}(\frac{1}{|\lambda_1|})\right]\cot(\phi_1-\phi_2)\,, \]
  and, given Eq.~(\ref{eq:th1case1a}), the result in
  Eq.~(\ref{eq:th1case1b}) immediately follows.

  \item[Case 2:] In this case $\det G = (\lambda_1 - 1)(\lambda_2
  - 1) < 0$, so, from Eq.~(\ref{eq:theta}) with $s = -1$:
  \begin{eqnarray}\label{eq:theta_c2}
  \tan\theta &=& \frac{(\lambda_2-\lambda_1)\cos(\phi_1+\phi_2)/
  \sin(\phi_1-\phi_2)}{(\lambda_2-\lambda_1)\sin(\phi_1+\phi_2)/
  \sin(\phi_1-\phi_2)}\\ &=& \frac{-\cos(\phi_1+\phi_2)/
  \sin(\phi_1-\phi_2)}{-\sin(\phi_1+\phi_2)/\sin(\phi_1-\phi_2)}\,,
  \nonumber
  \end{eqnarray}
  since $\lambda_2-\lambda_1 < 0$. The result in
  Eq.~(\ref{eq:th1case2}) follows.
\end{description}
\end{proof}
\begin{theorem}
Let $x^\ast$ be a spiral fixed point of $f^p(x): {\mathbb R}^2
\mapsto {\mathbb R}^2$ whose stability matrix $\Dfp(x^\ast)$ has
eigenvalues $\lambda_{1,2} = \lambda \pm \mathrm{i}\omega$.  Then
\begin{eqnarray}\label{eq:th2}
  \theta & \in & \textstyle \left(-\frac{\pi}{2},
\frac{\pi}{2}\right)
  \hspace{1.9cm}\mathrm{if}\quad \lambda < 1\,,\\
  \theta & \in & \textstyle \left(-\pi, -\frac{\pi}{2}\right)\cup
  \left(\frac{\pi}{2}, \pi\right)
  \quad\mathrm{if}\quad \lambda > 1\,.\nonumber
\end{eqnarray}
\end{theorem}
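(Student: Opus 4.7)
The plan is to exploit the fact that for a spiral the determinant of $G = Df^p(x^*)-I_2$ is manifestly positive, which pins down $s$, and then to read off the range of $\theta$ directly from the sign of the denominator in Eq.~(\ref{eq:theta}).

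First I would compute $\det G$ for complex conjugate eigenvalues. Since the eigenvalues of $G$ are $\lambda_{1,2}-1 = (\lambda-1)\pm i\omega$, we have
\begin{equation*}
  \det G = (\lambda_1 - 1)(\lambda_2 - 1) = (\lambda-1)^2 + \omega^2 > 0\,,
\end{equation*}
where the strict inequality uses $\omega\neq 0$, which is the defining feature of a spiral. Hence $\bar s = \mathrm{sgn}\,\det G = +1$, and with this choice Eq.~(\ref{eq:theta}) reduces to
\begin{equation*}
  \tan\theta = \frac{G_{12}-G_{21}}{-G_{11}-G_{22}}\,.
\end{equation*}

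Next I would analyse the denominator using the trace. Since $\mathrm{tr}(G) = \mathrm{tr}(Df^p(x^*)) - 2 = (\lambda_1+\lambda_2) - 2 = 2(\lambda-1)$, the denominator equals $-\mathrm{tr}(G) = 2(1-\lambda)$. The convention following Eq.~(\ref{eq:theta}) (and invoked again in the proof of Theorem~1) is that the signs of the numerator and denominator of $\tan\theta$ individually determine the quadrant of $\theta$; equivalently, $\cos\theta$ carries the same sign as the denominator. Therefore $\cos\theta > 0$ precisely when $\lambda < 1$, giving $\theta\in(-\pi/2,\pi/2)$, while $\cos\theta < 0$ when $\lambda > 1$, placing $\theta\in(-\pi,-\pi/2)\cup(\pi/2,\pi)$. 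This is exactly Eq.~(\ref{eq:th2}).

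There is no real obstacle here; the only thing that needs a moment's care is the sign convention in the definition of $\theta$, which must be read as specifying both the numerator's and the denominator's signs rather than merely their ratio. Once that is observed, the statement is essentially a two-line consequence of the trace and determinant of the perturbed Jacobian. I would note in passing that, unlike the saddle case treated in Theorem~1, the result for spirals is independent of the eigenvectors because the diagonalising basis is complex; only the real invariants $\mathrm{tr}(Df^p)$ and $\det(Df^p - I_2)$ enter.
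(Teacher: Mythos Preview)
Your proof is correct. Both your argument and the paper's conclude by reading the range of $\theta$ from the sign of the denominator in Eq.~(\ref{eq:theta}), which in each case is proportional to $1-\lambda$. The difference lies in how that denominator is identified. The paper introduces an explicit real canonical-form parametrization of $\Dfp(x^\ast)$, conjugating the block $\left(\begin{smallmatrix}\lambda & \omega\\ -\omega & \lambda\end{smallmatrix}\right)$ by a two-parameter matrix in $(\phi,\eta)$, computes the entries of $G$ from this, and arrives at $\tan\theta = \dfrac{-\omega\cosh\eta/\sin\phi}{1-\lambda}$. You bypass the parametrization entirely by observing that, once $\bar s=+1$ is fixed by $\det G>0$, the denominator is simply $-\mathrm{tr}(G)=2(1-\lambda)$, a basis-independent invariant. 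Your route is shorter and more transparent; the paper's computation additionally yields an explicit formula for the numerator, but that extra information plays no role in establishing Eq.~(\ref{eq:th2}).
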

\begin{proof}
 The stability matrix can be decomposed as follows:
\begin{equation}
  \Dfp(x^\ast)=\left[\!\!\begin{array}{cc}
  \cos\phi & \mathrm{e}^\eta \\
  \sin\phi & 0\end{array} \!\!\right]
  \left[\!\!\begin{array}{cc} \lambda & \omega \\
  -\omega & \lambda \end{array} \!\!\right]
  \left[\!\!\begin{array}{cc} \cos\phi & \mathrm{e}^\eta \\
  \sin\phi & 0\end{array} \!\!\right]^{-1}\,,
\end{equation}
where $\eta \in \mathbb{R}$.  Given that $G = \Dfp(x^\ast) - I_2$,
we have from Eq.~(\ref{eq:theta}):
\begin{equation}\label{eq:theta_t1}
  \tan\theta = \frac{-\omega\cosh\eta/\sin\phi}{1-\lambda}\,.
\end{equation}
The result in Eq.~(\ref{eq:th2}) follows from the sign of the
denominator.\qquad\end{proof}

The key message of the above theorems is that the ST matrix depends
mostly on the directions of the eigenvectors and the signs of the
unstable\footnote{That is, eigenvalues whose magnitude is larger
than one.} eigenvalues of $Df^p$ (or their real parts), and only marginally 
on the actual magnitudes of the eigenvalues. This means that a
transformation that stabilises a given fixed point $x^\ast$ of $f^p$
will also stabilise fixed points of all periods with similar
directions of eigenvectors and signs of the unstable eigenvalues. In
the next Section, we will show how this observation can be used to
construct STs for efficient detection of periodic orbits in systems
with $n > 2$.

\section{Extension to higher-dimensional systems}
\label{sec:ext} To extend the analysis of the preceding Section to
higher-dimensional systems, we note that the matrix
$C_{\bar{s},\theta}$, as defined by Eqs.~(\ref{eq:par2d}),
(\ref{eq:theta}),  and (\ref{eq:cond}), is closely related to the
orthogonal part of the {\em polar decomposition} of $G$; see
Appendix \ref{ch:appendix4}.  Recall that any non-singular $n\times
n$ matrix can be uniquely represented as a product
\begin{equation}\label{eq:polar}
  G = QB\,,
\end{equation}
where $Q$ is an orthogonal matrix and $B$ is a symmetric positive
definite matrix.  The following theorem provides the link between
$C_{\bar{s},\theta}$ and $Q$ for $n = 2$:
\begin{theorem}
Let $G \in \mathbb{R}^{2\times 2}$ be a non-singular matrix with the
polar decomposition $G = QB$, where $Q$ is an orthogonal matrix and
$B$ is a symmetric positive definite matrix. Then matrix
$C_{\bar{s},\theta}$, as defined by Eqs.~(\ref{eq:par2d}),
(\ref{eq:theta}) and (\ref{eq:cond}), is related to $Q$ as follows:
\begin{equation}\label{eq:qt1}
  C_{\bar{s},\theta} = -Q^{\mathsf T}
\end{equation}
\end{theorem}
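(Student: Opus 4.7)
The plan is to invoke the uniqueness of the polar decomposition rather than computing $Q$ directly. To establish $C_{\bar{s},\theta} = -Q^{\mathsf{T}}$, I would instead show that $-C_{\bar{s},\theta}^{\mathsf{T}}$ is the orthogonal factor in the polar decomposition of $G$. Since $C_{\bar{s},\theta}$ is orthogonal, so is $-C_{\bar{s},\theta}^{\mathsf{T}}$, and one has the trivial identity $G = (-C_{\bar{s},\theta}^{\mathsf{T}})(-C_{\bar{s},\theta}G)$. If the candidate symmetric factor $M := -C_{\bar{s},\theta}\,G$ can be shown to be symmetric positive definite, uniqueness of the polar decomposition of $G$ forces $-C_{\bar{s},\theta}^{\mathsf{T}} = Q$, i.e. $C_{\bar{s},\theta} = -Q^{\mathsf{T}}$.

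Verifying symmetry of $M$ reduces, after expanding $-C_{\bar{s},\theta}\,G$ using the parametrisation~(\ref{eq:par2d}) with $(s,\alpha)=(\bar{s},\theta)$, to the single algebraic identity $\cos\theta\,(G_{21}-\bar{s}G_{12}) = \sin\theta\,(\bar{s}G_{11}+G_{22})$. Multiplying numerator and denominator of the defining equation~(\ref{eq:theta}) by $-1$, this is precisely $\tan\theta$ as prescribed by~(\ref{eq:theta}). So symmetry of $M$ is equivalent to the defining relation for $\theta$ and requires nothing more.

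For positive definiteness I would use Sylvester's criterion for a $2\times 2$ symmetric matrix: both $\det M$ and $\operatorname{tr} M$ must be positive. The determinant is immediate, $\det M = \det(-C_{\bar{s},\theta})\det G = \bar{s}\det G = |\det G|>0$, by the definition of $\bar{s}$ in~(\ref{eq:cond}) and the easy computation $\det C_{\bar{s},\theta}=\bar{s}$. The trace, after substituting the expressions for $\sin\theta$ and $\cos\theta$, should collapse to $\sqrt{(\bar{s}G_{11}+G_{22})^2 + (G_{21}-\bar{s}G_{12})^2}>0$; this collapse relies on the sign convention articulated in~(\ref{eq:theta}), namely that one has $\cos\theta$ proportional to $-(\bar{s}G_{11}+G_{22})$ and $\sin\theta$ proportional to $-(G_{21}-\bar{s}G_{12})$ with a \emph{common positive} proportionality constant, so that the cross terms in $\operatorname{tr} M$ add rather than cancel.

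The main obstacle is this sign bookkeeping around the quadrant convention for $\theta$. If the signs of numerator and denominator in~(\ref{eq:theta}) are cancelled, one loses the distinction between $\theta$ and $\theta+\pi$; symmetry of $M$ and the value of $\det M$ are unaffected, but the trace flips sign and positive definiteness fails. Everything else is short symbolic manipulation followed by a single appeal to uniqueness of the polar decomposition.
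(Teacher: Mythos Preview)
Your proposal is correct and follows essentially the same strategy as the paper: show that $C_{\bar{s},\theta}G$ is symmetric via the defining relation~(\ref{eq:theta}), establish that it is negative definite (equivalently, your $M=-C_{\bar{s},\theta}G$ is positive definite), and then invoke uniqueness of the polar decomposition. The only cosmetic difference is in the definiteness step: the paper simply appeals to the already-derived eigenvalue formula~(\ref{eq:eigs1}) together with~(\ref{eq:cond}) to conclude the eigenvalues are negative, whereas you verify $\det M>0$ and $\operatorname{tr} M>0$ from scratch---a self-contained route that avoids referring back to~(\ref{eq:eigs1}) but amounts to the same computation.
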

\begin{proof}
Since $C_{\bar{s}, \theta}$ is an orthogonal matrix by definition,
it is sufficient to prove that $C_{\bar{s}, \theta}G$ is symmetric
negative definite.  Then, by the uniqueness of the polar
decomposition, it must be equal to $-B$.

Denote by $b_{ij}$ the element in the $i$-th row and $j$-th column
of $C_{\bar{s}, \theta}G$.  We must show that $b_{12} = b_{21}$.
Using Eq.~(\ref{eq:theta}), we have that
\begin{eqnarray}
  b_{12} &=& \bar{s}G_{12}\cos\theta + G_{22}\sin\theta\\
  &=& \left[\bar{s}G_{12} + G_{22}\frac{\bar{s}G_{12}-G_{21}}
  {-\bar{s}G_{11}-G_{22}}\right]\cos\theta \nonumber \\
  &=& \left[\frac{G_{11}G_{12}+G_{21}G_{22}}
  {\bar{s}G_{11}+G_{22}}\right]\cos\theta\,, \nonumber
\end{eqnarray}
and similarly
\begin{eqnarray}
  b_{21} &=& G_{21}\cos\theta - \bar{s}G_{11}\sin\theta\\
  &=& \left[G_{21} - \bar{s}G_{11}\frac{\bar{s}G_{12} -
  G_{21}}{-\bar{s}G_{11} - G_{22}}\right]\cos\theta\nonumber \\
  &=& \left[\frac{G_{11}G_{12}+G_{21}G_{22}}
  {\bar{s}G_{11}+G_{22}}\right]\cos\theta\,, \nonumber
\end{eqnarray}
hence the matrix $C_{\bar{s}, \theta}G$ is symmetric.  Since, by
definition, $\theta$ and $\bar{s}$ are chosen such that the
eigenvalues of $C_{\bar{s}, \theta}G$ are negative, the matrix
$C_{\bar{s}, \theta}G$ is negative definite.  Finally, by the
uniqueness of the polar decomposition,
\[ C_{\bar{s}, \theta}G = -B = -Q^{\mathsf T}G\,, \]
which completes the proof.\qquad\end{proof}

For $n > 2$, we can always use the polar decomposition to construct
a transformation that will stabilise a given fixed point. Indeed, if
a fixed point $x^\ast$ of an $n$-dimensional flow has a non-singular
matrix $G = Dg(x^\ast)$, then we can calculate the polar
decomposition $G = QB$ and use
\begin{equation}\label{eq:qt2}
  C = -Q^{\mathsf T}\;,
\end{equation}
to stabilise $x^\ast$. Moreover, by analogy with the two-dimensional
case, we can expect that the same matrix $C$ will also stabilise
fixed points $\tilde{x}$ with the matrix $\tilde{G} =
Dg(\tilde{x})$, as long as the orthogonal part $\tilde{Q}$ of the
polar decomposition $\tilde{G} = \tilde{Q}\tilde{B}$ is sufficiently
close to $Q$. More precisely,
\begin{observation} \label{obs:stab}
$C$ will stabilise $\tilde{x}$, if all eigenvalues of the product
$Q\tilde{Q}^\mathsf{T}$ have positive real parts.
\end{observation}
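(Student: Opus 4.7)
The plan is to translate the stability condition into a statement about the positivity of the real parts of eigenvalues of a product of an orthogonal matrix and a symmetric positive definite matrix, and then exploit the real block-diagonal form of orthogonal matrices.

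First, I would unfold the definitions. Stability of $\tilde{x}$ for the flow $\Sigma$ with $C = -Q^{\mathsf T}$ requires all eigenvalues of $C\tilde{G} = -Q^{\mathsf T}\tilde{Q}\tilde{B}$ to have negative real parts, i.e.\ all eigenvalues of $M := Q^{\mathsf T}\tilde{Q}\tilde{B}$ to have positive real parts. Since $Q^{\mathsf T}\tilde{Q}$ and $Q\tilde{Q}^{\mathsf T}$ share the same eigenvalues (via the $AB$--$BA$ identity together with the fact that a matrix and its transpose are cospectral), the hypothesis of the observation may equivalently be stated for $R := Q^{\mathsf T}\tilde{Q}$.

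Next, I would symmetrise. Because $\tilde{B}$ is symmetric positive definite, it admits a symmetric positive definite square root $\tilde{B}^{1/2}$, and the similarity $M = \tilde{B}^{-1/2}\bigl(\tilde{B}^{1/2}R\tilde{B}^{1/2}\bigr)\tilde{B}^{1/2}$ shows that $M$ and $N := \tilde{B}^{1/2}R\tilde{B}^{1/2}$ have identical spectra. The symmetric part of $N$ is $\tfrac{1}{2}(N+N^{\mathsf T}) = \tilde{B}^{1/2}\bigl(\tfrac{1}{2}(R+R^{\mathsf T})\bigr)\tilde{B}^{1/2}$, which is positive definite precisely when $\tfrac{1}{2}(R+R^{\mathsf T})$ is positive definite (a congruence by the SPD matrix $\tilde{B}^{1/2}$ preserves positive definiteness). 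Once the symmetric part of $N$ is positive definite, the standard Rayleigh-quotient argument closes the argument: if $Nv=\lambda v$ for $v\neq 0$, then $\operatorname{Re}(\lambda)\|v\|^2 = \operatorname{Re}(v^*Nv) = v^*\bigl(\tfrac{1}{2}(N+N^{\mathsf T})\bigr)v > 0$, so every eigenvalue of $N$, and hence of $M$, has strictly positive real part.

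The key step, and the one I expect to be the technical heart of the proof, is showing that a real orthogonal $R$ whose eigenvalues all have positive real parts necessarily has positive definite symmetric part $\tfrac{1}{2}(R+R^{\mathsf T})$. I would use the real canonical form $R = O D O^{\mathsf T}$ in which $O$ is orthogonal and $D$ is block diagonal with $2\times 2$ rotation blocks
\begin{equation*}
    \begin{bmatrix} \cos\theta_k & \sin\theta_k \\ -\sin\theta_k & \cos\theta_k \end{bmatrix}
\end{equation*}
together with $\pm 1$ entries on the diagonal. Then $\tfrac{1}{2}(R+R^{\mathsf T}) = O\,\tfrac{1}{2}(D+D^{\mathsf T})O^{\mathsf T}$, and $\tfrac{1}{2}(D+D^{\mathsf T})$ is itself diagonal with entries $\cos\theta_k$ (twice, for each rotation block) and $\pm 1$ (for each real eigenvalue). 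The hypothesis that every eigenvalue of $R$ has strictly positive real part amounts to $\cos\theta_k>0$ for every rotation block and the exclusion of any $-1$ entry, which makes $\tfrac{1}{2}(D+D^{\mathsf T})$ positive definite and therefore $\tfrac{1}{2}(R+R^{\mathsf T})$ positive definite as well.

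Finally, I would assemble the chain: hypothesis on $Q\tilde{Q}^{\mathsf T}$ $\Rightarrow$ positive definiteness of the symmetric part of $R$ $\Rightarrow$ positive definiteness of the symmetric part of $N$ $\Rightarrow$ eigenvalues of $N$ (hence of $M$) have positive real parts $\Rightarrow$ eigenvalues of $C\tilde{G}$ have negative real parts, which is exactly the stability condition for $\tilde{x}$. The only moderately delicate piece is the orthogonal-canonical-form step above; everything else is a direct calculation or a standard linear-algebra fact.
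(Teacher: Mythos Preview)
Your argument is correct and reaches the goal, but it takes a different route from the paper. The paper reduces the observation to a corollary of Lyapunov's stability theorem: it applies that theorem to $A = R\tilde{B}$ (with $R = Q^{\mathsf T}\tilde{Q}$) using the specific Lyapunov matrix $G = \tfrac{1}{2}R\tilde{B}^{-1}R^{\mathsf T}$, which collapses $GA + A^{\mathsf T}G$ to $\tfrac{1}{2}(R + R^{\mathsf T})$, and then notes (as you also do) that for an orthogonal matrix the eigenvalues of $\tfrac{1}{2}(R + R^{\mathsf T})$ are exactly the real parts of the eigenvalues of $R$. Your approach instead conjugates by $\tilde{B}^{1/2}$ and appeals directly to the Rayleigh-quotient inequality for matrices with positive definite symmetric part, and you spell out the orthogonal canonical form to justify the key fact about $\tfrac{1}{2}(R + R^{\mathsf T})$ rather than merely asserting it. The Lyapunov-theorem route is slicker once one has the right ansatz for $G$ and gives an ``if and only if'' for free; your square-root/Rayleigh-quotient route is more self-contained and avoids invoking the Lyapunov theorem, at the cost of a few more lines. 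Both arguments pivot on the same intermediate fact, so the difference is one of packaging rather than of underlying idea.
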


We base this observation on the following corollary of Lyapunov's
stability theorem; see Appendix \ref{ch:appendix4}
\begin{corollary}
Let $B \in \mathbb{R}^{n\times n}$ be a positive definite symmetric
matrix.  If $Q \in \mathbb{R}^{n\times n}$ is an orthogonal matrix
such that all its eigenvalues have positive real parts, then all the
eigenvalues of the product $QB$ have positive real parts as well.
\label{corr:lyapunov}
\end{corollary}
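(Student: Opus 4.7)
The plan is to exploit a similarity transformation together with a symmetric-part argument. I would first observe that $QB$ is similar to $M := B^{1/2}QB^{1/2}$ via conjugation by $B^{-1/2}$ (using that $B^{1/2}$ exists and is invertible because $B$ is symmetric positive definite). Hence $QB$ and $M$ share the same spectrum, and it suffices to prove that every eigenvalue of $M$ has positive real part. For this I would aim to show that the symmetric part $\tfrac{1}{2}(M+M^T)$ is positive definite, whence the standard identity $v^*(M+M^T)v = 2\,\mathrm{Re}(\lambda)\|v\|^2$ (valid for any real $M$ and any eigenvector $v$ associated with $\lambda$) immediately forces $\mathrm{Re}(\lambda)>0$.

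Expanding gives $M+M^T = B^{1/2}(Q+Q^T)B^{1/2}$, so the symmetric part of $M$ is congruent to $Q+Q^T$; since congruence preserves positive definiteness, it is enough to show $Q+Q^T$ is positive definite. This is the bridge between the spectral hypothesis on $Q$ and the matrix inequality we need, and is the heart of the argument. The key observation is that real orthogonal matrices are normal, hence unitarily diagonalisable over $\mathbb{C}$: write $Q = U\Lambda U^{*}$ with $U$ unitary and $\Lambda = \mathrm{diag}(e^{\mathrm{i}\theta_j})$. Because $Q$ is real and orthogonal, $Q^T = Q^{-1} = U\bar\Lambda U^{*}$, and therefore
\begin{equation*}
  Q + Q^T \;=\; U\bigl(\Lambda + \bar\Lambda\bigr)U^{*} \;=\; U\,\mathrm{diag}(2\cos\theta_j)\,U^{*}.
\end{equation*}
The assumption that every eigenvalue of $Q$ has positive real part is precisely $\cos\theta_j > 0$ for all $j$, so $Q+Q^T$ has all positive (real) eigenvalues and is positive definite.

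Combining these pieces yields the corollary. The main obstacle is the middle step: connecting a purely spectral hypothesis on the complex eigenvalues of $Q$ to a concrete positive-definiteness statement about the real symmetric matrix $Q+Q^T$. This is exactly where the orthogonality of $Q$ (and hence its normality) is indispensable; without normality one cannot read off the eigenvalues of $Q+Q^T$ directly from those of $Q$, and the claim could fail. Everything else — the similarity reduction, the congruence, and the sign-of-eigenvalue argument from a positive definite symmetric part — is routine linear algebra of the Lyapunov-theorem flavour invoked in the statement.
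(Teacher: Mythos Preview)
Your argument is correct and lands on the same decisive fact as the paper's proof: that $Q+Q^{\mathsf T}$ is positive definite because, for an orthogonal (hence normal) $Q$, the eigenvalues of $\tfrac12(Q+Q^{\mathsf T})$ are exactly the real parts of the eigenvalues of $Q$. Where you differ is in how you link this to the spectrum of $QB$. The paper invokes Lyapunov's theorem directly on $A=QB$, choosing the Lyapunov matrix $G=\tfrac12 QB^{-1}Q^{\mathsf T}$ (verified positive definite via Sylvester's inertia law), and computes $GA+A^{\mathsf T}G=\tfrac12(Q+Q^{\mathsf T})$. You instead conjugate by $B^{1/2}$ to pass to the similar matrix $M=B^{1/2}QB^{1/2}$ and show that its symmetric part $\tfrac12(M+M^{\mathsf T})=\tfrac12 B^{1/2}(Q+Q^{\mathsf T})B^{1/2}$ is positive definite, which forces $\mathrm{Re}\,\lambda>0$ by the elementary eigenvector identity. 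Your route is more self-contained---it is effectively a bare-hands proof of the relevant direction of Lyapunov's theorem for this particular $M$---while the paper's version is shorter if one is willing to quote Lyapunov and Sylvester as black boxes. Both approaches use congruence with $Q+Q^{\mathsf T}$ as the crux; neither has any real advantage in generality.
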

\begin{proof}
According to Lyapunov's theorem, a matrix $A \in \mathbb{R}^{n\times
n}$ has all eigenvalues with positive real parts if and only if
there exists a symmetric positive definite $G \in
\mathbb{R}^{n\times n}$ such that $GA + A^\mathsf{T}G = H$ is
positive definite.

Let $A = QB$ and let's choose $G$ in the form $G =\frac{1}{2}
QB^{-1}Q^\mathsf{T}$.  Since $B$ is positive definite, its inverse
$B^{-1}$ is also positive definite, and, since $G$ and $B^{-1}$ are
related by a congruence transformation, according to Sylvester's
inertia law -- see Appendix \ref{ch:appendix4} -- $G$ is also positive
definite. Now,
\[ GQB + (QB)^\mathsf{T}G =
\textstyle\frac{1}{2}QB^{-1}Q^\mathsf{T}QB +
\frac{1}{2}BQ^\mathsf{T}QB^{-1}Q^\mathsf{T} = \frac{1}{2}[Q +
Q^\mathsf{T}]\;.\] Therefore, $QB$ has eigenvalues with positive
real parts if and only if $\frac{1}{2}[Q + Q^\mathsf{T}]$ is
positive definite.  The proof is completed by observing that, for
orthogonal matrices, the eigenvalues of $\frac{1}{2}[Q +
Q^\mathsf{T}]$ are equal to the real parts of the eigenvalues of
$Q$.\qquad\end{proof}

Note that Observation \ref{obs:stab} is a direct generalisation of
conditions in Eq.~(\ref{eq:cond}) which are equivalent to requiring
that the eigenvalues of $C_{s,\alpha}C_{\bar{s}, \theta}^\mathsf{T}$
have positive real parts.

In the scheme where already detected periodic orbits are used as
seeds to detect other orbits~\cite{Davidchack99c}, we can use $C$ in
Eq.~(\ref{eq:qt2}) as a stabilising matrix for the seed $x^\ast$.
Based on the analysis in \S\ref{sec:stab2d}, this will allow us to
locate a periodic orbit in the neighbourhood of $x^\ast$ with
similar invariant directions and the same signs of the unstable
eigenvalues.  Note, however, that the neighbourhood of the seed
$x^\ast$ can also contain periodic orbits with the similar invariant
directions but with some eigenvalues having the opposite sign
(i.e.~orbits with and without reflections).  To construct
transformations that would stabilise such periodic orbits, we can
determine the eigenvalues and eigenvectors of the stability matrix
of $x^\ast$
\begin{equation}
  \Dfp(x^\ast) = V\Lambda V^{-1}\,,
\label{eq:gev} \end{equation} where $\Lambda :=
\mathrm{diag}(\lambda_1,\ldots,\lambda_n)$ is the diagonal matrix of
eigenvalues of $\Dfp(x^\ast)$ and $V$ is the matrix of eigenvectors,
and then calculate the polar decomposition of the matrix
\begin{equation}
  \hat{G} = V(S\Lambda - \mathrm{I}_n)V^{-1}\,,
\label{eq:gpev} \end{equation} where $S = \mathrm{diag}(\pm 1, \pm
1,\ldots,\pm 1)$.  Note that, as follows from the analysis in
\S\ref{sec:stab2d} for $n = 2$ and numerical evidence for $n > 2$,
changing the sign of a stable eigenvalue will not result in a
substantially different ST.  Therefore, we restrict our attention to
the following subset of $S$:
\begin{equation}\label{eq:signs}
  S_{ii} = \left\{ \begin{array}{rl} \pm 1, & |\lambda_i| > 1\,,\\
  1, & |\lambda_i| < 1\,,   \end{array} \right.
  \quad\mathrm{for}\quad i=1,\ldots,n\,.
\end{equation}
For a seed with $k$ real unstable eigenvalues, this results in $2^k$
possible transformations.  Note that, on the one hand, this set is
much smaller than ${\mathcal C}_{\mathrm{SD}}$, while, on the other
hand, it allows us to target all possible types of periodic orbits
that have invariant directions similar to those at the seed.

\section{Numerical results}\label{sec:numres}

In this Section we illustrate the performance of the new STs on a
four-dimensional kicked double rotor map~\cite{Romeiras92} and a
six-dimensional system of three coupled H\'{e}non
maps~\cite{Politi92}. Both systems are highly chaotic and the number
of UPOs is expected to grow rapidly with increasing period. The goal
is to locate {\em all} UPOs of increasingly larger period. Of
course, the completeness of the set of orbits for each period cannot
be guaranteed, but it can be established with high degree of
certainty by using the plausibility criteria outlined in the
Introduction.

In order to start the detection process, we need to have a small set
of periodic orbits (of period $p > 1$) that can be used as seeds.
Such orbits can be located using, for example, random seeds and the
standard Newton-Raphson method (or the scheme in
Eq.~(\ref{eqn:davlai}) with $\beta = 0$).  We can then use these
periodic orbits as seeds to construct the STs and detect more UPOs
with higher efficiency.  The process can be iterated until we find
no more orbits of a given period. In previous work
DL~\cite{Davidchack99c,Davidchack01b} showed that for
two-dimensional maps such as H\'{e}non and Ikeda it is sufficient to
use period-$(p-1)$ orbits as seeds to locate plausibly all
period-$p$ orbits.  For higher-dimensional systems, such as those
considered in the present work, these seeds may not be sufficient.
However, it is always possible to use more seeds by, for example,
locating some of the period-$(p+1)$ orbits, which can then be used
as seeds to complete the detection of period-$p$ orbits. The
following recipe can be used as a general guideline for developing a
specific detection scheme for a given system: {\em
\begin{enumerate}
\item Find a set of orbit points of low period using random seeds
and the iterative scheme in Eq.~(\ref{eqn:davlai}) with $\beta = 0$
(i.e. the Newton-Raphson scheme).
\item To locate period-$p$ orbits, first use period-$(p-1)$ orbits
as seeds. For each seed $x_0$, construct $2^{k}$ STs $C$ using
Eqs.~(\ref{eq:gev}-\ref{eq:signs}), where $k$ is the number of
unstable eigenvalues of $D\!f^{p-1}(x_0)$.
\item Starting from $x_0$ and with a fixed value of $\beta > 0$
use the iteration scheme in Eq.~(\ref{eqn:davlai}) to construct a
sequence $\{x_{i}\}$ for each of the $2^{k}$ STs. If a sequence
converges to a point, check whether it is a new period-$p$ orbit
point, and if so, proceed to find a complete orbit by iterating the
map $f$.
\item Repeat steps $2-3$ for several $\beta$ in order to determine
the optimal value of this parameter (see explanation below).
\item Repeat steps $2-4$ using newly found period-$p$ points
as seeds to search for period-$(p+1)$ orbits.
\item Repeat steps $2-4$ using incomplete set of period-$(p+1)$
orbits as seeds to find any missing period-$p$ orbits.
\end{enumerate} }

Although we know that the action of $\beta$ is to increase the basin
size of the stabilised points, it is not known {\em a priori} what
values of $\beta$ to use for a given system and period.  Monitoring
the fraction of seeds that converge to periodic orbits, we observe
that it grows with increasing $\beta$ until it reaches saturation,
indicating that the iterative scheme faithfully follows the flow
$\Sigma$.  On the other hand, larger $\beta$ translates into smaller
integration steps and, therefore, longer iteration sequences.  Thus
the optimal value of $\beta$ is just before the saturation point. As
demonstrated previously by DL~\cite{Davidchack99c} and observed in
the numerical examples presented in the following sections, this
value appears to scale exponentially with the period and can be
estimated based on the information about the detection pattern at
lower periods.

The stopping criteria in step 3, which we use in the numerical
examples discussed below, are as follows.  The search for UPOs is
conducted within a rectangular region containing a chaotic invariant
set.  The sequence $\{x_{i}\}$ is terminated if (i) $x_{i}$ leaves
the region, (ii) $i$ becomes larger than a pre-defined maximum
number of iterations (we use $i > 100+5\beta$ ), (iii) the sequence
converges, such that $\|g(x_i)\| < \mbox{\em Tol}_g$.  In cases (i)
and (ii) a new sequence is generated from a different seed and/or
with a different stabilising matrix.  In case (iii) five Newton
iterations are applied to $x_i$ to allow convergence to a fixed
point to within the round-off error.  A point $x^\ast$ for which
$\|g(x^\ast)\|$ is the smallest is identified with a fixed point of
$f^p$.  The maximum round-off error over the set ${\cal X}_p$ of all
detected period-$p$ orbit points
\begin{equation}\label{eq:maxe}
  \epsilon_\mathrm{max}(p)=\max\{\|g(x^\ast)\|: x^\ast\in{\cal X}_p\}
\end{equation}
is monitored in order to assess the accuracy of the detected orbits.

To check if the newly detected orbit is different from those already
detected, its distance to other orbit points is calculated: if
$\|x^\ast-y^\ast\|_\infty > \mbox{\em Tol}_x$ for all previously
detected orbit points $y^\ast$, then $x^\ast$ is a new orbit point.
Even for a large number of already detected UPOs, this check can be
done very quickly by pre-sorting the detected orbit points along one
of the system coordinates and performing a binary search for the
points within $\mbox{\em Tol}_x$ of $x^\ast$.  The infinity norm in
the above expression is used for the computational efficiency of
this check.

The minimum distance between orbit points
\begin{equation}\label{eq:mind}
    d_\mathrm{min}(p)=\min\{\|x^\ast-y^\ast\|_\infty :
    x^\ast, y^\ast \in{\cal X}_p\}
\end{equation}
is monitored and the algorithm is capable of locating all isolated
UPOs of a given period $p$ as long as $\epsilon_\mathrm{max}(p) <
\mbox{\em Tol}_g \lesssim \mbox{\em Tol}_x < d_\mathrm{min}(p)$.
Since typically $\epsilon_\mathrm{max}(p)$ increases and
$d_\mathrm{min}(p)$ decreases with $p$ (see Tables \ref{tab:drmp}
and \ref{tab:chm}), the above conditions can be satisfied up to some
period, after which higher-precision arithmetics needs to be used in
the evaluation of the map. For the numerical examples presented in
the following sections we use double-precision computation with
$\mbox{\em Tol}_g = 10^{-6}$ and $\mbox{\em Tol}_x = 10^{-5}$.

\subsection{Kicked double rotor map}\label{sec:drm}
The kicked double rotor map describes the dynamics of a mechanical
system known as the double rotor under the action of a periodic
kick; a derivation is given in Appendix \ref{ch:appendix3}. It is a
4-dimensional map defined by
\begin{equation}
  \left[\!\!\begin{array}{c}x_{i+1}\\y_{i+1}\end{array}\!\!\right] =
  \left[\!\!\begin{array}{l}My_{i} + x_{i}~(\mbox{mod}~2\pi)\\
  Ly_{i} + c\sin{x_{i+1}}\end{array}\!\!\right],
\end{equation}
where $x_{i}\in \mathbb{S}^2$ are the angle coordinates and
$y_{i}\in\mathbb{R}^2$ are the angular velocities after each kick.
Parameters $L$ and $M$ are constant $2\times 2$ matrices that depend
on the masses, lengths of rotor arms, and friction at the pivots,
while $c\in\mathbb{R}^2$ is a constant vector whose magnitude is
proportional to the kicking strength $f_0$.  In our numerical tests
we have used the same parameters as in~\cite{Romeiras92}, with the
kicking strength $f_{0} = 8.0$.

The following example illustrates the stabilising properties of the
transformations constructed on the basis of periodic orbits.  Let us
take a typical period-3 orbit point $x^\ast = (0.6767947,
5.8315697)$, $y^\ast = (0.9723920, -7.9998313)$ as a seed for
locating period-4 orbits.  The Jacobian matrix
$D\!f^3(x^\ast,y^\ast)$ of the seed has eigenvalues $\Lambda =
\mbox{diag}(206.48, -13.102, -0.000373, 0.000122)$. Therefore, based
on the scheme discussed in \S\ref{sec:ext}
Eqs.~(\ref{eq:gev}-\ref{eq:signs}), we can construct four STs $C$
corresponding to $(S_{11},S_{22})$ in Eq.~(\ref{eq:signs}) being
equal to $(+,+)$, $(-,+)$, $(+,-)$ and $(-,-)$.  Of the total of
2190 orbit points of period-4 (see Table \ref{tab:drmp}), the
transformations $C_1$, $C_2$, $C_3$, and $C_4$ stabilise \#$(1) =
532$, \#$(2)=544$, \#$(3)=474$, and \#$(4)=516$ orbit points,
respectively, and these sets of orbits are almost completely
non-overlapping.  That is, the number of orbits stabilised by both
$C_1$ and $C_2$ is \#$(1\cap 2) = 2$.  Similarly, \#$(1\cap 3) =
16$, \#$(1\cap 4) = 0$, \#$(2\cap 3) = 0$, \#$(2\cap 4) = 14$, and
\#$(3\cap 4) = 0$. On the other hand, the number of period-4 orbits
stabilised by at least one of the four transformations is \#$(1\cup
2\cup 3\cup 4) = 2034$.  This is a typical picture for other seeds
of period-3 as well as other periods.

This example provides evidence for the validity of our approach to
constructing the STs in high-dimensional systems based on periodic
orbits.  It also shows that, in the case of the double rotor map, a
single seed is sufficient for constructing transformations that
stabilise majority of the UPOs. Of course, in order to locate the
UPOs, we need to ensure that the seeds are in the convergence basins
of the stabilised periodic orbits.  That is why we need to use more
seeds to locate plausibly all periodic orbits of a given period.
Still, because of the enlarged basins of the stabilised orbits, the
number of seeds is much smaller than that required with iterative
schemes that do not use the STs.

\begin{table}
\caption{Number $n(p)$ of prime period-$p$ UPOs, and the number
$N(p)$ of fixed points of $p$-times iterated map for the kicked
double rotor map.  The asterisk for $p=8$ indicates that this set of
orbits is not complete.  Parameters $\epsilon_\mathrm{max}(p)$ and
$d_\mathrm{min}(p)$ are defined in Eqs.~(\ref{eq:maxe}) and
(\ref{eq:mind}).}
\begin{center}
\begin{tabular}{|l|r|r|c|c|} \hline
$p$ & $n(p)$~~ & $N(p)$~~ & $\epsilon_\mathrm{max}(p)$ &
$d_\mathrm{min}(p)$\\\hline
1 &      12 &       12 & $1.0\cdot10^{-14}$ & $1.3\cdot10^{0}$\\
2 &      45 &      102 & $5.9\cdot10^{-14}$ & $3.4\cdot10^{-1}$\\
3 &     152 &      468 & $5.8\cdot10^{-13}$ & $6.2\cdot10^{-2}$\\
4 &     522 &     2190 & $2.7\cdot10^{-12}$ & $6.9\cdot10^{-3}$\\
5 &    2200 &  11\,012 & $2.6\cdot10^{-11}$ & $1.1\cdot10^{-3}$\\
6 &    9824 &  59\,502 & $1.6\cdot10^{-10}$ & $1.8\cdot10^{-4}$\\
7 & 46\,900 & 328\,312 & $9.7\cdot10^{-10}$ & $9.1\cdot10^{-5}$\\
8$^\ast\!\!$ & 229\,082 & 1\,834\,566 & $1.2\cdot10^{-8}$ &
$5.5\cdot10^{-5}$\\\hline
\end{tabular}
\end{center}
\label{tab:drmp}
\end{table}

Compared to the total of 384 matrices in ${\mathcal
C}_{\mathrm{SD}}$, we use only two or four transformations for each
seed, depending on the number of unstable directions of the seed
orbit points. Yet, the application of the detection scheme outlined
in \S\ref{sec:numres} allows us to locate plausibly all periodic
orbits of the double rotor map up to period 7.  Table \ref{tab:drmp}
also includes the number of detected period-8 orbits that were used
as seeds to complete the detection of period 7.

The confidence with which we claim to have plausibly complete sets
of periodic orbits for each period is enhanced by the symmetry
consideration. That is, since the double rotor map is invariant
under the change of variables $(x, y) \mapsto (2\pi - x, -y)$, a
necessary condition for the completeness of the set of orbits for
each period is that for any orbit point $(x^{*},y^{*})$ the set also
contains an orbit point $(2\pi - x^{*},-y^{*})$.  Even though this
condition was not used in the detection scheme, we find that the
detected sets of orbits (apart from period 8) satisfy this symmetry
condition.  Of course, this condition is not sufficient to prove the
completeness of the detected sets of UPOs, but, combined with the
exhaustive search procedure presented above, provides a strong
indication of the completeness.

\subsection{Coupled H\'{e}non maps}\label{sec:chm}
Another system we use to test the efficacy of our approach is a
six-dimensional system of three coupled H\'{e}non maps (CHM)
\newabb{CHM},
\begin{equation}\label{eq:coupled1}
  x^{j}_{i+1} = a - (\tilde{x}^{j}_{i})^{2} + bx^{j}_{i-1},
  \quad\mathrm{for}\quad j=1,2,3\,,
\end{equation}
where $a = 1.4$ and $b = 0.3$ are the standard parameter values of
the H\'{e}non map and the coupling is given by
\begin{equation}
  \tilde{x}^{j}_{i} = (1-\epsilon)x^{j}_{i} + \frac{1}{2}\epsilon(
   x^{j+1}_{i} + x^{j-1}_{i}),
\end{equation}
with $x^0_i = x^3_i$ and $x^4_i = x^1_i$. We have chosen the
coupling parameter $\epsilon = 0.15$. Our choice of this system is
motivated by the work of Politi and Torcini~\cite{Politi92} in which
they locate periodic orbits in CHM for a small coupling parameter by
extending the method of Biham and Wenzel (BW)~\cite{Biham89}.  This
makes the CHM an excellent test system, since we can compare our
results against those for the BW method. The BW method defines the
following artificial dynamics
\begin{equation}\label{eq:EBWflow}
\dot{x}_{i}^{j}(t) = (-1)^{s(i,j)}\{x^{j}_{i+1}(t) - a +[
                      \tilde{x}^{j}_{i}(t)]^{2} - bx^{j}_{i-1}(t)\},
\end{equation}
with $s(i,j)\in\{0,1\}$.  Given the boundary condition $x_{p+1}^j =
x_1^j$, the equilibrium states of Eq.~(\ref{eq:EBWflow}) are the
period-$p$ orbits for the CHM.  The BW method is based on the
property that every equilibrium state of Eq.~(\ref{eq:EBWflow}) can
be made stable by one of the $2^{3p}$ possible sequences of $s(i,j)$
and, therefore, can be located by simply integrating
Eq.~(\ref{eq:EBWflow}) to convergence starting from the same initial
condition $x_i^j = 0.0$.  It is also found that, for the vast
majority of orbits, each orbit is stabilised by a unique sequence of
$s(i,j)$.

In order to reduce the computational effort, Politi and Torcini
suggest reducing the search to only those sequences $s(i,j)$ which
are allowed in the uncoupled system, i.e. with $\epsilon = 0$. This
reduction is possible because the introduction of coupling has the
effect of pruning some of the orbits found in the uncoupled
H\'{e}non map without creating any new orbits.

\begin{table}
\caption{The number of prime UPOs for the system of three coupled
H\'{e}non maps (CHM) detected by three different methods: BW -- full
Biham-Wenzel, BW-r -- reduced Biham-Wenzel, ST -- our method based
on stabilising transformations, Max -- maximum number of detected
UPOs obtained from all three methods and the system symmetry. See
text for details.}
\begin{center}\footnotesize
\begin{tabular}{|l|r|r|r|r|c|c|} \hline
$p$ &  BW & BW-r &   ST &  Max & $\epsilon_\mathrm{max}(p)$ &
$d_\mathrm{min}(p)$\\ \hline
 1 &    8 &    8 &    8 &    8 & $1.3\cdot10^{-14}$ & $9.9\cdot10^{-1}$\\
 2 &   28 &   28 &   28 &   28 & $4.6\cdot10^{-14}$ & $5.2\cdot10^{-1}$\\
 3 &    0 &    0 &    0 &    0 &    -               &       -          \\
 4 &   34 &   34 &   40 &   40 & $2.7\cdot10^{-8}$  & $4.2\cdot10^{-2}$\\
 5 &    0 &    0 &    0 &    0 &    -               &       -          \\
 6 &   74 &   74 &   72 &   74 & $9.5\cdot10^{-10}$ & $8.6\cdot10^{-3}$\\
 7 &   28 &   28 &   28 &   28 & $1.0\cdot10^{-8}$  & $5.6\cdot10^{-3}$\\
 8 &  271 &  271 &  285 &  286 & $1.1\cdot10^{-6}$  & $5.5\cdot10^{-3}$\\
 9 &    - &   63 &   64 &   66 & $9.9\cdot10^{-7}$  & $2.6\cdot10^{-4}$\\
10 &    - &  565 &  563 &  568 & $1.3\cdot10^{-8}$  & $4.1\cdot10^{-4}$\\
11 &    - &  272 &  277 &  278 & $7.1\cdot10^{-9}$  & $5.4\cdot10^{-4}$\\
12 &    - & 1972 & 1999 & 1999 & $2.5\cdot10^{-6}$  & $4.3\cdot10^{-4}$\\
13$^\ast$& - & - & 1079 &   -  & $8.6\cdot10^{-8}$  & $4.0\cdot10^{-4}$\\
14$^\ast$& - & - & 6599 &   -  & $2.3\cdot10^{-6}$  & $3.5\cdot10^{-4}$\\
15$^\ast$& - & - & 5899 &   -  & $7.0\cdot10^{-6}$  &
$1.5\cdot10^{-4}$\\\hline
\end{tabular}
\end{center}
\label{tab:chm}
\end{table}

We have implemented the BW method with both the full search and the
reduced search (BW-r) up to as high a period as is computationally
feasible (see Table~\ref{tab:chm}).  In the case of the full search
we detect UPOs up to period 8 and in the case of the reduced search
up to period 12.  The seed $x_i^j = 0.0$ was used for all periods
except for period 4, where it was found that with this seed both BW
and BW-r located only 28 orbits.  We found a maximum of 34 orbits
using the seed $x_i^j = 0.5$.  It is possible that more orbits can
be found with different seeds for other periods as well, but we have
not investigated this.  The example of period 4 illustrates that,
unlike for a single H\'{e}non map, the Biham-Wenzel method fails to
detect all orbits from a single seed.

Even though our approach (labeled ``ST'' in Table~\ref{tab:chm}) is
general and does not rely on the special structure of the H\'{e}non
map, its efficiency far surpasses the full BW method and is
comparable to the reduced BW method.  Except for periods 6 and 10,
the ST method locates the same or larger number of
orbits.\footnote{The precise reason for the failure of the ST method
to detect all period 6 and 10 orbits needs further investigation. We
believe that the orbits that were not detected have
uncharacteristically small convergence basins with any of the
applied stabilising transformations.}

Unlike the double rotor map, the CHM possesses very few periodic
orbits for small $p$, particularly for odd values of $p$. Therefore,
we found that the direct application of the detection strategy
outlined at the beginning of \S\ref{sec:numres} would not allow us
to complete the detection of even period orbits. Therefore, for even
periods $p$ we also used $p+2$ as seeds and, in case of period 12, a
few remaining orbits were located with seeds of period 15.  We did
not attempt to locate a maximum possible number of UPOs for $p >
12$.  The numbers of such orbits (labeled with asterisks) are listed
in Table~\ref{tab:chm} for completeness.

As with the double rotor map, we used the symmetry of the CHM to
test the completeness of the detected sets of orbits.  It is clear
from the definition of the CHM that all its UPOs are related by the
permutation symmetry (i.e., six permutations of indices $j$).  The
column labeled ``Max'' in Table~\ref{tab:chm} lists the maximum
number of UPOs that we were able to find using all three methods and
applying the permutation symmetry to find any UPOs that might have
been missed.  As can be seen in Table~\ref{tab:chm}, only a few
orbits remained undetected by the ST method.

Concluding this Section, we would like to point out that the high
efficiency of the proposed method is primarily due to the fact that
each ST constructed based on the stability properties of the seed
orbit substantially increases the basins of convergence of orbits
stabilised by this transformation.  This is apparent in a typical
increase of the fraction of converged seeds with the increasing
value of parameter $\beta$ in Eq.~(\ref{eqn:davlai}). For example,
when detecting period-10 orbits of CHM using period-12 orbits as
seeds, the fraction of seeds that converge to periodic orbits grows
from 25-30\% for small $\beta$ (essentially the Newton-Raphson
method) to about 70\% for the optimal value of $\beta$.

\section{Summary}
In this chapter we have presented a new scheme for constructing
stabilising transformations~\cite{Crofts06} which can be used 
to locate periodic orbits in chaotic maps with the iterative scheme 
given by Eq.~(\ref{eqn:davlai}).  The scheme is based on the understanding 
of the relationship between the STs and the properties  of eigenvalues 
and eigenvectors of the stability matrices of the periodic orbits. 
Of particular significance is the observation that only the unstable 
eigenvalues are important for determining the STs. 
Therefore, unlike the original set of transformations proposed by 
Schmelcher and Diakonos, which grows with the system size as $2^n n!$, 
our set has cardinality of at most $2^k$, where $k$ is the maximum 
number of unstable eigenvalues (i.e.~the maximum dimension of the 
unstable manifold). It is also apparent that, while the SD set contains 
a large fraction of transformations that do not stabilise any UPOs of
a given system, all of our transformations stabilise a significant
subset of UPOs. The dependence of the number of transformations on
the dimensionality of the unstable manifold rather than on the
system dimensionality is especially important in cases when we study
low-dimensional chaotic dynamics embedded in a high-dimensional
phase space.  This is often the case in systems obtained from
time-space discretisation of nonlinear partial differential
equations (e.g. the Kuramoto-Sivashinsky equation). Application of
the STs approach to such high-dimensional chaotic systems will be
the subject of Chapter \ref{ch:kse}.

\chapter{Extended systems: Kuramoto-Sivashinsky equation}
\label{ch:kse}
\begin{quote}
The mind uses its faculty for creativity only when experience forces
it to.\\ \emph{H. J.~Poincar\'{e}}
\end{quote}
In this chapter we extend the ideas presented in Chapter
\ref{ch:stabtrans} so as to efficiently compute unstable periodic
orbits (UPO) in large-scale dynamical systems arising from the
spatial discretisation of parabolic PDEs~\cite{Crofts07}. Following the
approach often adopted in subspace iteration methods (see~\cite{Lust98,Shroff93}
and references therein) we construct a decomposition of the tangent
space into unstable and stable orthogonal subspaces. On the unstable
subspace we apply the method of stabilising transformations (ST),
whilst Picard iteration is performed on the complement. The method is
extremely effective when the dimension of the unstable subspace is
small compared to the system dimension. We apply the new scheme to the
model example of a chaotic spatially extended system -- the
Kuramoto-Sivashinsky equation.

\section{Subspace decomposition}\label{sec:sub}
Consider the solution of the nonlinear system
\begin{equation}\label{ch3:eqn:nlin}
    f(x) - x = 0, \quad x\in\mathbb{R}^n, \quad
    f:\mathbb{R}^n\rightarrow\mathbb{R}^n,
\end{equation}
where $f(x)$ is assumed twice differentiable in the neighbourhood of
$x^*$, an isolated root of Eq.~(\ref{ch3:eqn:nlin}). We can
approximate the solution of (\ref{ch3:eqn:nlin}) by a recursive {\em
fixed point} procedure of the form
\begin{equation}\label{ch3:eqn:picard1}
    x_{i+1} = f(x_i), \quad i = 1,2,3,\dots.
\end{equation}
It is well known that the iteration (\ref{ch3:eqn:picard1})
converges locally in the neighbourhood of a solution $x^*$, as long
as all the eigenvalues of the Jacobian $Df(x^*)$ lie within the unit
disc $\{z\in\mathbb{C} : |z| < 1\}$. In contrast,
(\ref{ch3:eqn:picard1}) typically diverges if $Df(x^*)$ has an
eigenvalue outside the unit disc. In that case, a popular
alternative -- as discussed in Chapter \ref{ch:detection} -- is to
employ Newton's method
\begin{eqnarray}\label{ch3:eqn:newton}
    (Df(x_i) - I_n)\delta x_i &=& -(f(x_i) - x_i),\\
    x_{i+1} &=& x_i +\delta x_i, \quad i = 1, 2, 3, \dots.
\end{eqnarray}

The idea of subspace iterations is to exploit the fact that the
divergence of the fixed point iteration (\ref{ch3:eqn:picard1}) is
due to a small number of eigenvalues, $n_u$, lying outside the unit
disc. By decomposing the space $\mathbb{R}^n$ into the direct sum of
the unstable subspace spanned by the eigenvectors of $Df(x^*)$
\begin{equation}\label{ch3:eqn:unstable}
    \mathbb{P} = \mathrm{Span}\{e_k\in\mathbb{R}^n: Df(x^*)e_k = \lambda_k e_k, |\lambda_k| > 1\}
\end{equation}
and its orthogonal complement, $\mathbb{Q}$, a modified iterative
scheme is obtained. The application of Newton's method to the
subspace $\mathbb{P}$ whilst continuing to use the relatively cheap
fixed point iteration on the subspace $\mathbb{Q}$, results in a
highly efficient scheme provided $\mathrm{dim}(\mathbb{P}) \ll
\mathrm{dim}(\mathbb{Q})$.

To this end, let $V_p\in\mathbb{R}^{n\times n_u}$ be a basis for the
subspace $\mathbb{P}\subset\mathbb{R}^n$ spanned by the eigenvectors
of $DF(x^*)$ corresponding to those eigenvalues lying outside the
unit disc, and $V_q\in\mathbb{R}^{n\times n_s}$ a basis for
$\mathbb{Q}$, where $n_u+n_s=n$. Then, we can define orthogonal
projectors $P$ and $Q$ onto the respective subspaces, $\mathbb{P},
\mathbb{Q}$, as follows
\begin{eqnarray}
  P &=& V_pV_p^{\mathsf{T}},\\
  \label{ch3:eqn:orthog}
  Q &=& V_qV_q^{\mathsf{T}} = I_n - P.
\end{eqnarray}
Note that any $x\in\mathbb{R}^n$ admits the following unique
decomposition
\begin{equation}\label{ch3:eqn:xdec}
    x = V_p\bar{p} + V_q\bar{q} = p + q, \quad p := V_p\bar{p} = Px,
    \quad q := V_q\bar{q} = Qx,
\end{equation}
with $\bar{p}\in\mathbb{R}^{n_u}$ and $\bar{q}\in\mathbb{R}^{n_s}$.
Substituting (\ref{ch3:eqn:xdec}) in Eq.~(\ref{ch3:eqn:newton}) and
multiplying the result by $[V_q, V_p]^{\mathsf{T}}$ on the left, one
obtains
\begin{equation}\label{ch3:eqn:newtsub}
\left[
  \begin{array}{cc}
    V_q^{\mathsf{T}}DfV_q - I_{n_s} & 0 \\
    V_p^{\mathsf{T}}DfV_q & V_p^{\mathsf{T}}DfV_p - I_{n_u} \\
  \end{array}
\right] \left[
  \begin{array}{c}
    \Delta\bar{q} \\
    \Delta\bar{p} \\
  \end{array}
\right] = -\left[
  \begin{array}{c}
    V_q^{\mathsf{T}}f - \bar{q} \\
    V_p^{\mathsf{T}}f - \bar{p}\\
  \end{array}
\right].
\end{equation}
Here we have used the fact that $V_p^{\mathsf{T}}V_q = 0_{n_u\times
n_s}$, $V_q^{\mathsf{T}}V_p = 0_{n_s\times n_u}$, and
$V_q^{\mathsf{T}}DfV_p = 0_{n_s\times n_u}$ the latter holding due
to the invariance of $Df$ on the subspace $\mathbb{P}$. Now, the
first $n_s$ equations in (\ref{ch3:eqn:newtsub}) may be solved using
the following fixed point iteration scheme
\begin{eqnarray}\label{ch3:eqn:picard2}\nonumber
  \Delta\bar{q}^{[0]} &=& 0, \\
  \Delta\bar{q}^{[i]} &=& V_q^{\mathsf{T}}DfV_q\Delta\bar{q}^{[i-1]} + V_q^{\mathsf{T}}f - \bar{q},
  \\\nonumber
  \Delta\bar{q} &=& \Delta\bar{q}^{[l]} =
  \sum_{i=0}^{l-1}(V_q^{\mathsf{T}}DfV_q)^i(V_q^{\mathsf{T}}f - \bar{q}),
\end{eqnarray}
where $l$ denotes the number of fixed point iterations taken per
Newton-Raphson step. Since $r_{\sigma}[V_q^{\mathsf{T}}DfV_q] < 1$
by construction, the iteration (\ref{ch3:eqn:picard2}) will be
locally convergent on $\mathbb{Q}$ in some neigbourhood of
$\Delta\bar{q}$ -- here $r_{\sigma}[\cdot]$ denotes the spectral
radius. In order to determine $\Delta\bar{p}$ one solves
\begin{equation}
    (V_p^{\mathsf{T}}DfV_p - I_{n_u})\Delta\bar{p} =
    -V_p^{\mathsf{T}}f + \bar{p} - V_p^{\mathsf{T}}DfV_q\Delta\bar{q}.
\end{equation}
Note that in practice only one iteration of (\ref{ch3:eqn:picard2})
is performed~\cite{Lust98}, i.e. $l = 1$, this leads to the
following simplified system to solve for the correction
$[\Delta\bar{q}, \Delta\bar{p}]^{\mathsf{T}}$
\begin{equation}
\left[\begin{array}{cc}
    -I_{n_s} & 0 \\
    V_p^{\mathsf{T}}DfV_q & V_p^{\mathsf{T}}DfV_p - I_{n_u} \\
  \end{array}
\right]
\left[\begin{array}{c}
    \Delta\bar{q} \\
    \Delta\bar{p} \\
  \end{array}
\right] = -
\left[\begin{array}{c}
    V_q^{\mathsf{T}}f - \bar{q} \\
    V_p^{\mathsf{T}}f - \bar{p}\\
  \end{array}
\right].
\end{equation}

Key to the success of the above algorithm is the accurate
approximation of the eigenspace corresponding to the unstable modes.
In order to construct the projectors $P$, $Q$, the Schur
decomposition is used. However, primary concern of the work
in~\cite{Lust98,Shroff93} is the continuation of branches of
periodic orbits, where it is assumed that a reasonable approximation
to a UPO is known. Since we have no knowledge {\em a priori} of the
orbits whereabouts we shall need to accommodate this into our
extension of the method to detecting UPOs.

\subsection{Stabilising transformations}\label{ch3:sec:stabtrans}
An alternative approach is supplied by the method of STs, where in
order to detect equilibrium solutions of Eq.~(\ref{ch3:eqn:nlin}) we
introduce the associated flow
\begin{equation}\label{ch3:eqn:flow1}
    \frac{dx}{ds} = g(x).
\end{equation}
Here $g(x) = f(x) - x$. With this setup we are able to stabilise all
UPOs $x^*$ of Eq.~(\ref{ch3:eqn:nlin}) such that all the eigenvalues
of the Jacobian $Df(x^*)$ have real part smaller than one. In order
to stabilise all possible UPOs we study the following flow
\begin{equation}\label{ch3:eqn:flow2}
 \frac{dx}{ds} = Cg(x),
\end{equation}
where $C\in\mathbb{R}^{n\times n}$ is a constant matrix introduced
in order to stabilise UPOs with the Jacobians that have eigenvalues
with real parts greater than one.

Substituting (\ref{ch3:eqn:xdec}) in Eq.~(\ref{ch3:eqn:flow1}) and
multiplying the result by $[V_q, V_p]^{\mathsf{T}}$ on the left, one
obtains
\begin{eqnarray}\label{ch3:eqn:1}
  \frac{d\bar{q}}{ds} &=&   V_q^{\mathsf{T}}g, \\\label{ch3:eqn:2}
  \frac{d\bar{p}}{ds} &=& V_p^{\mathsf{T}}g.
\end{eqnarray}
Thus we have replaced the original Eq.~(\ref{ch3:eqn:flow1}) by a
pair of coupled equations, Eq.~(\ref{ch3:eqn:1}) of dimension $n_s$
and Eq.~(\ref{ch3:eqn:2}) of dimension $n_u$. Since
\begin{eqnarray*}
  \frac{\partial}{\partial\bar{q}}(V_q^{\mathsf{T}}g) &=& V_qDg\frac{\partial x}{\partial\bar{q}}, \\
            &=& V_q^{\mathsf{T}}DgV_q,
\end{eqnarray*}
and $r_{\sigma}[V_q^{\mathsf{T}}DgV_q] < 0$ by construction, it
follows that in order to detect all UPOs of
Eq.~(\ref{ch3:eqn:nlin}), it is sufficient to solve
\begin{eqnarray}\label{ch3:eqn:3}
  \frac{d\bar{q}}{ds} &=& V_q^{\mathsf{T}}g, \\\label{ch3:eqn:4}
  \frac{d\bar{p}}{ds} &=& \tilde{C}V_p^{\mathsf{T}}g,
\end{eqnarray}
where $\tilde{C}\in\mathbb{R}^{n_u\times n_u}$ is a constant matrix.
\begin{figure}[t]
\begin{center}
\subfigure[Schur decomposition]{
\includegraphics[height=8.5cm,width=6.5cm]{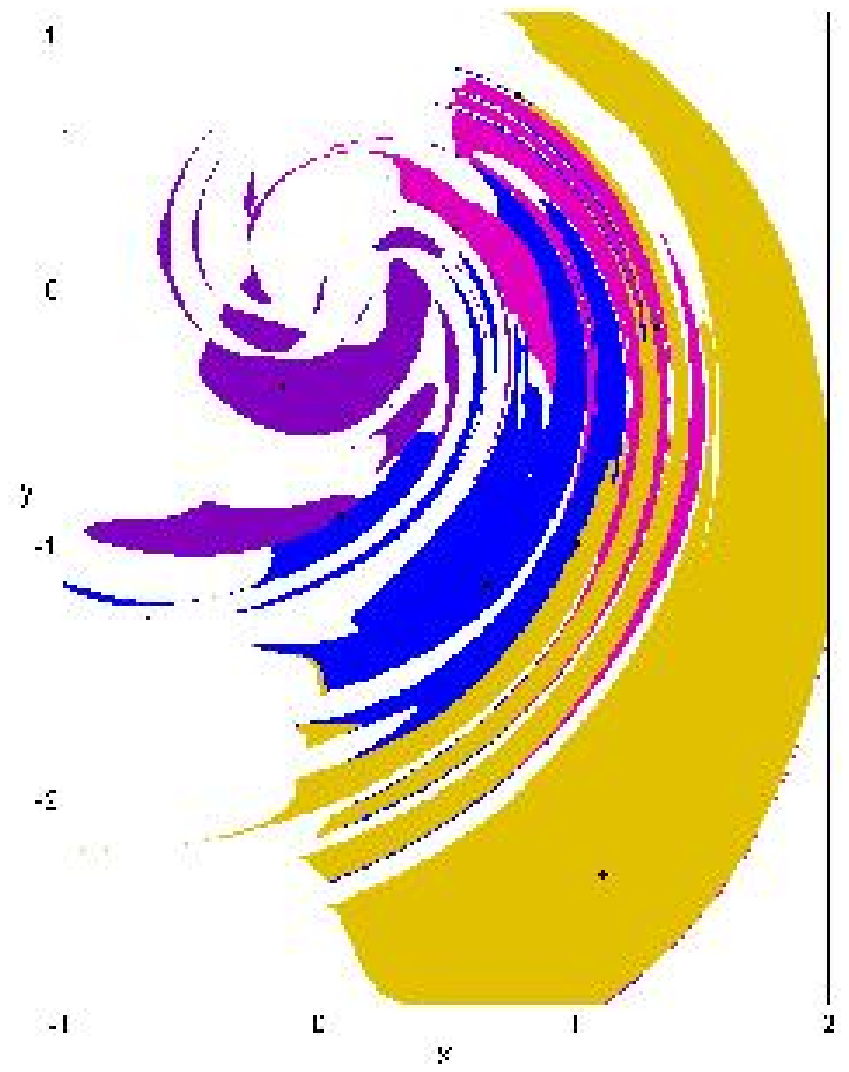}}
\hspace{0.5cm} \subfigure[Singular value decomposition]{
\includegraphics[height=8.5cm,width=6.5cm]{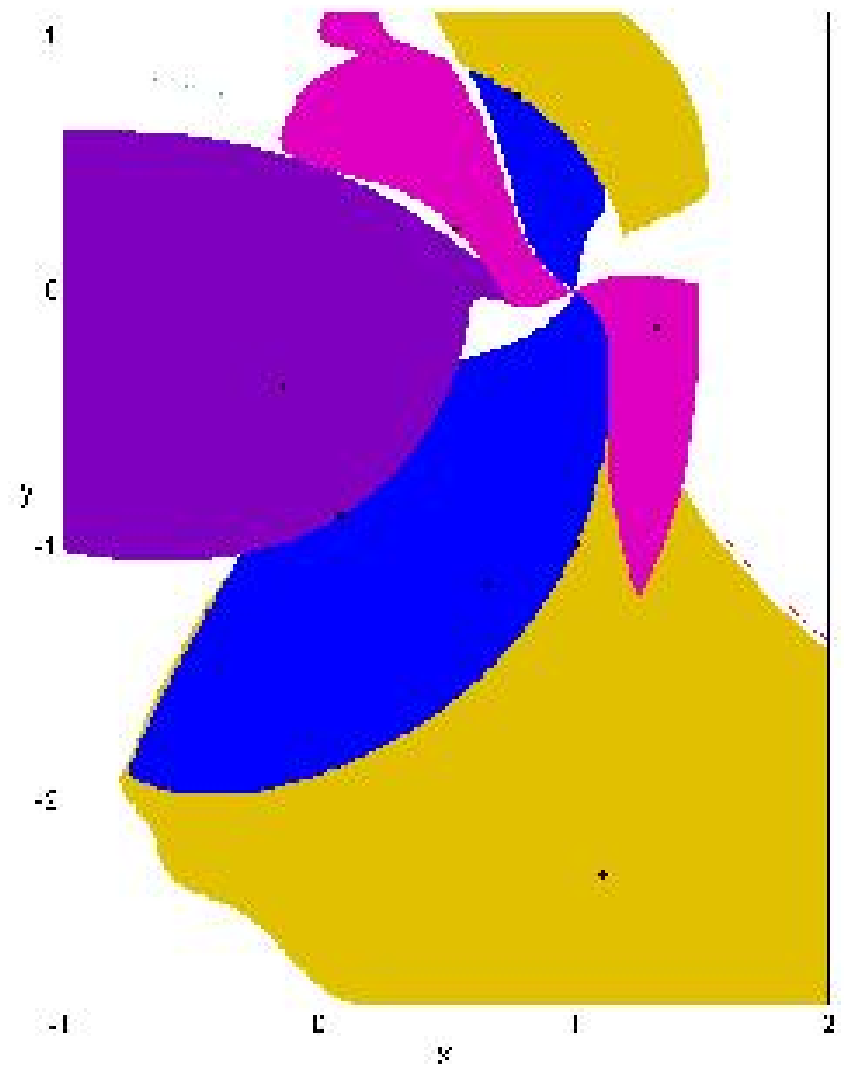}}
\caption{Basins of attraction for the period$-3$ orbits of the Ikeda
map with parameter values $a = 1.0$, $b = 0.9$, $k = 0.4$ and $\eta
= 6.0$. Here we have chosen $\tilde{C} = -1$ since in this example
the unstable subspace is one-dimensional.} \label{ch3:fig:basin}
\end{center}
\end{figure}

In~\cite{Lust98,Shroff93} the Schur decomposition (SchD)
\newabb{SchD} is used in order to construct the projectors $P$ and
$Q$. This is fine for continuation problems since one may assume
from the offset that they possess an initial condition $x_0$
sufficiently close to a UPO such that the SchD of $Df(x_0)$ gives a
good approximation to the eigenspace of $Df(x^*)$. However, it is
well known that the eigenvectors of the perturbed Jacobian
$Df(x^*+\delta x)$ behave erratically as we increase $\delta x$. In
order to enlarge the basins of attraction for the UPOs we propose
that singular value decomposition (SVD)
\newabb{SVD} be used instead. That is we choose an initial condition
$x_0$ and construct the SVD of its preimage, i.e. $Df(f^{-1}(x_0)) =
USW^{\mathsf{T}}$ (or in the continuous case
$D\phi^T(\phi^{-T}(x_0)) = USW^{\mathsf{T}}$ for some time $T$), the
columns of $U$ give the stretching directions of the map at $x_0$,
whilst the singular values determine whether the directions are
expanding or contracting. It is these directions which we use to
construct the projectors $P$ and $Q$. Due to the robustness of the
SVD we expect the respective basins of attraction to increase.

It is not necessary in practice to decompose
Eq.~(\ref{ch3:eqn:flow1}) in order to apply the new ST. Rather we
can express $C$ in terms of $\tilde{C}$ and $V_p$. To see this we
add $V_q$ times Eq.~(\ref{ch3:eqn:3}) to $V_p$ times
Eq.~(\ref{ch3:eqn:4}) to get
\begin{eqnarray}\nonumber
\frac{dx}{ds} &=& V_qV_q^{\mathsf{T}}g(x) + V_p\tilde{C}V_p^{\mathsf{T}}g(x),\\
              &=& [I_n + V_p(\tilde{C} - I_{n_u})V_p^{\mathsf{T}}]g(x),\label{ch3:eqn:C}
\end{eqnarray}
where the second line follows from Eq.~(\ref{ch3:eqn:orthog}). From
this we see that the following choice of $C$ is equivalent to the
preceding decomposition
\begin{equation}\label{ch3:eqn:Cmat}
C = I_n + V_p(\tilde{C} - I_{n_u})V_p^{\mathsf{T}}.
\end{equation}
Thus in practice we compute $V_p$ and $\tilde{C}$ at the seed $x_0$
in order to construct $C$ and then proceed to solve
Eq.~(\ref{ch3:eqn:flow2}).

The advantage of using the SVD rather than the SchD can be
illustrated by the following example. Consider the Ikeda
map~\cite{Ikeda79}:
\begin{equation}
f(\mathbf{x}) := \left[\!\!\begin{array}{c} x_{i+1}\\
y_{i+1}\end{array}
\!\!\right] = \left[\!\!\begin{array}{c} a + b(x_i\cos{(\phi_i)} - y_i\sin{(\phi_i)})\\
 b(x_i\sin{(\phi_i)} + y_i\cos{(\phi_i)})\end{array}\!\!\right],\label{ch3:eqn:ikeda}
\end{equation}
where $\phi_i = k-\eta/(1+x_i^2+y_i^2)$ and the parameters are
chosen such that the map has a chaotic attractor: $a=1.0$, $b=0.9$,
$k=0.4$ and $\eta = 6.0$.  For this choice of parameters the Ikeda
map possesses eight period$-3$ orbit points (two period-3 orbits and
two fixed points, one of which is on the attractor basin boundary).
In our experiments we have covered the attractor for
Eq.~(\ref{ch3:eqn:ikeda}) with a grid of initial seeds and solved
the associated flow for $p=3$, i.e., $g(x) = f^3(x) - x$. This is
done twice, firstly in the case where the projections $P$ and $Q$
are constructed via the SchD and secondly when they are constructed
through the SVD. Since all UPOs of the Ikeda map are of saddle type,
the unstable subspace is one-dimensional and we need only two
transformations: $\tilde{C} = 1$ and $\tilde{C} = -1$. Figure
\ref{ch3:fig:basin} shows the respective basins of attraction for
the two experiments with $\tilde{C} = -1$. It can be clearly seen
that the use of SVD corresponds to a significant increase in basin
size compared to the SchD. Note that with $\tilde{C} = -1$ we
stabilise four out of eight fixed points of $f^3$. The other four
are stabilised with $\tilde{C} = 1$. The corresponding basins of
attraction are shown in Figure \ref{ch3:fig:id}. Note that for this
choice of ST the choice of basis vectors is not important, since
Eq.~(\ref{ch3:eqn:Cmat}) yields $C = I$, so that the associated flow
is given by Eq.~(\ref{ch3:eqn:flow1}).
\begin{figure}[t]
\begin{center}
\includegraphics[height=9cm,width=7cm]{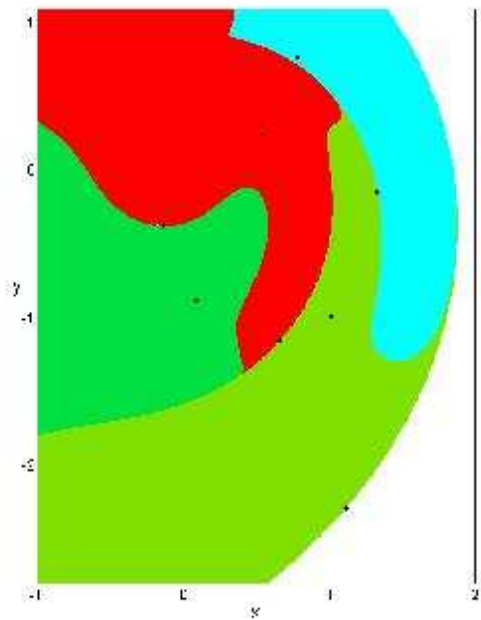}
\caption{The basins of attraction for the Ikeda map for the choice
of $\tilde{C} = 1$. Fixed points of $f^3$ with negative unstable
eigenvalues are stable stationary solutions of the associated flow,
while those with positive eigenvalues are saddles located at the
basin boundaries.} \label{ch3:fig:id}
\end{center}
\end{figure}

\section{Implementation} \label{sec:imp}
We now wish to apply these ideas to parabolic PDEs of the form
(\ref{eqn:evolution}). The numerical solution of such PDEs are
typically based upon their representation in terms of a truncated
system of nonlinear ODEs. Thus, we will be concerned with the
detection of UPOs for large systems of ODEs:
\begin{equation}\label{ch3:eqn:ode1}
    \frac{dx}{dt} = v(x), \quad x\in \mathbb{R}^n,
\end{equation}
where $v$ is derived from an appropriate discretisation procedure
(i.e. finite differences, finite element, spectral method) and $n\gg
1$.

A typical approach in the determination of UPOs for flows is via a
Poincar\'e surface of section (PSS). By ``clever'' placement of an
$(n-1)$-dimensional manifold in the phase space, the problem is
reduced to a discrete map defined via intersections with the
manifold. However, a correct choice of PSS is a challenging problem
in itself. Due to the complex topology of a high-dimensional phase
space, the successful detection of UPOs will be highly dependent
upon the choice of section. When the choice of a suitable PSS is not
obvious {\em a priori}, we found it preferable to work with the full
flow, adding an auxiliary equation to determine the integration time
$T$.

Let $x\mapsto\phi^t(x)$ denote the flow map of
Eq.~(\ref{ch3:eqn:ode1}). Then we define the associated flow as
follows
\begin{eqnarray}\label{ch3:eqn:assflow}\nonumber
    \frac{dx}{ds} &=& Cg\\ &=& C(\phi^T(x) - x).
\end{eqnarray}
The additional equation for $T$ is constructed such that $T$ is
always changing in the direction that decreases $|\phi^T(x) - x|$,
i.e.
\begin{equation*}
    \frac{dT}{ds} \varpropto -\frac{\partial |g|^2}{\partial T},
\end{equation*}
or, more precisely
\begin{equation}\label{ch3:eqn:Teqn}
    \frac{dT}{ds} = -\alpha
    v(\phi^T(x))\cdot(\phi^T(x)-x).
\end{equation}
Here $\alpha > 0$ is a constant which controls the relative speed of
convergence of Eq.~(\ref{ch3:eqn:Teqn}). This leads to the following
augmented flow which must be solved to detect UPOs of
(\ref{ch3:eqn:ode1}):
\begin{equation}\label{ch3:eqn:augflow}
\frac{d}{ds}
    \left[
      \begin{array}{c}
        x \\
        T \\
      \end{array}
    \right] =
    \left[
      \begin{array}{c}
        C(\phi^T(x) - x) \\
        -\alpha v(\phi^T(x))\cdot(\phi^T(x) - x) \\
      \end{array}
    \right].
\end{equation}
Note that the augmented flow (\ref{ch3:eqn:augflow}) is derived by
integrating a nonlinear PDE -- in our case the KSE -- for some time
$T$ and will become increasingly stiff for larger $T$.

Several approaches have been proposed for the solution of stiff
systems of ODEs; see for example, the review by Shampine and Gear
\cite{Gear79}. Of all these techniques the general-purpose codes
contained within the ODEPACK software package \cite{ODEPACK} are
regarded as some of the best available routines for the solution of
such systems.  Thus, in our numerical experiments we use the stiff
solver {\tt dlsodar} from the ODEPACK toolbox to integrate
(\ref{ch3:eqn:augflow}). {\tt dlsodar} is a variable step-size
solver which automatically changes between stiff and nonstiff modes.
In particular, as we approach a steady state of
Eq.~(\ref{ch3:eqn:augflow}) {\tt dlsodar} will take increasingly
larger time-steps, leading to super linear convergence in the
neighbourhood of the solution.

To use the solver {\tt dlsodar}, we must provide a routine that
returns the value of the vector field (\ref{ch3:eqn:augflow})
evaluated at a given point $(x, T)$. Here we need the solution of
Eq.~(\ref{ch3:eqn:ode1}) which is obtained by applying a suitable
numerical integration scheme; see the next section and Appendix
\ref{ch:appendix5} for further details. The ODEPACK software package
makes use of the Jacobian matrix of the system being integrated and
provides the option of computing the Jacobian via finite differences
or via a user supplied routine. Note that in the case that the flow
is expected to be stiff much of the time, it is recommended that a
routine for the Jacobian is supplied and we do this. The derivative
of (\ref{ch3:eqn:augflow}) with respect to $(x, T)$ is given by
\begin{equation}\label{ch3:eqn:jacaug}
 \left[ \begin{array}{cc}
  C(J_T - I_n) & Cv_T \\
  -\alpha( v_T^{\mathsf{T}}(J_T - I_n) + g^{\mathsf{T}}DvJ_T) & -\alpha(g^{\mathsf{T}}Dvv_T
  + v_T^{\mathsf{T}}v_T) \\
 \end{array}\right],
\end{equation}
where $J_T = \partial\phi^T(x)/\partial x $, $v_T = v(\phi^T(x))$,
$Dv = dv/dx$ and $g = \phi^T(x) - x$ as usual.

Quite often one might wish to terminate simulation before the usual
stopping criteria of, for example, a maximum number of steps taken
or certain tolerances having been reached. A useful feature of the
{\tt dlsodar} algorithm is that it allows the optional user supplied
routine to do just this. To be more exact, it determines the roots
of any of a set of user supplied functions
$$h_i = h_i(t,x_1,\dots,x_n),\quad i = 1,\dots,m,$$
and returns the solution of (\ref{ch3:eqn:augflow}) at the root if
it occurs prior to the normal stopping criteria.

Note that, to increase the efficiency of the algorithm we wish to
avoid the following two instances: firstly, due to the local nature
of the STs, we should stop the search if we wander too far from the
initial condition, and secondly, since our search is governed by the
dynamics of Eq.~(\ref{ch3:eqn:assflow}) and not by those of
(\ref{ch3:eqn:ode1}), we might move off the attractor after some
time period so the convergence to a UPO becomes highly unlikely. In
our numerical experiments we supply the following function
\begin{equation}\label{ch3:eqn:normg}
    h_1  = a - |g|,
\end{equation}
where $a\in\mathbb{R}$ is a constant and $|\cdot|$ denotes the $L_2$
norm. In practice, we have found that there exists a threshold value
of $a$, such that convergence is highly unlikely once the norm of
$g$ surpasses it. Note that we also restrict the maximum number of
allowed integration steps since the convergence becomes less likely
if the associated flow is integrated for a long time.

We must also provide two tolerances, rtol and atol, which control
the local error of the ODE solver. In particular, the estimated
local error in $X = (x, T)$ will be controlled so as to be less than
$$\mathrm{rtol}\cdot||X||_{\infty} + \mathrm{atol}.$$
Thus the local error test passes if, in each component, either the
absolute error is less than atol or the relative error is less than
rtol. The accuracy with which we would like to solve the flow
(\ref{ch3:eqn:augflow}) is determined by the stability properties of
(\ref{ch3:eqn:ode1}). To understand this, we note that in evaluating
the RHS of (\ref{ch3:eqn:augflow}) it is the solution of
Eq.~(\ref{ch3:eqn:ode1}) at time $T$, i.e. $\phi^T(x)$, which is
critical for error considerations. Suppose that our initial point
lies within $\delta x$ of a true trajectory $x$. Then
$\phi^T(x+\delta x)$ lies approximately within $e^{\lambda T}\delta
x$ of the true trajectory, $\phi^T(x)$, where $\lambda$ is the
largest Lyapunov exponent of the system. Since $\lambda$ is positive
for chaotic systems, the error grows exponentially with the period
and we should take this into account when setting the tolerances
rtol and atol. This leads us to the following settings for the
tolerances
\begin{equation}\label{ch3:eqn:tolerances}
    \mathrm{rtol} = \mathrm{atol} = 10^{-5}/e^{\lambda T_0},
\end{equation}
where $T_0$ is the initial period and $\lambda$ is the largest
Lyapunov exponent of the flow $v$. We have computed the Lyapunov
exponent using the algorithm due to Benettin {\em et al}
\cite{Benettin80}; see Appendix \ref{ch:appendix6} for a description
of the routine as well as a brief review of Lyapunov exponents.

\subsection{Kuramoto-Sivashinsky equation}\label{sec:kse}
We have chosen the Kuramoto-Sivashinsky equation (KSE)\newabb{KSE}
for our numerical experiments. It is the simplest example of
spatiotemporal chaos and has been studied in a similar context
in~\cite{Christansen97,Lan04,Zoldi98}, where the detection of many
UPOs has been reported. We work with the KSE in the form
\begin{equation}\label{ch3:eqn:kse}
    u_t = -\frac{1}{2}(u^2)_x - u_{xx} - u_{xxxx},
\end{equation}
where $x\in [0, L]$ is the spatial coordinate, $t\in\mathbb{R}^{+}$
is the time and the subscripts $x$, $t$ denote differentiation with
respect to space and time.  For $L < 2\pi$, $u(x,t) = 0$ is the
global attractor for the system and the resulting long time dynamics
are trivial. However, for increasing $L$ the system undergoes a
sequence of bifurcations leading to complicated dynamics; see for
example~\cite{Kevrekidis90}.

Our setup will be close to that found in~\cite{Lan04}. In what
follows we assume periodic boundary conditions: $u(x,t) = u(x+L,t)$,
and restrict our search to the subspace of antisymmetric solutions,
i.e. $u(x,t) = -u(L-x,t)$. Due to the periodicity of the solution,
we can solve Eq.~(\ref{ch3:eqn:kse}) using the pseudo-spectral
method~\cite{HairerBook,TrefethenBook}. Representing the function
$u(x,t)$ in terms of its Fourier modes:
\begin{equation}
  \quad u(x,t) := {\cal F}^{-1}[\hat{u}] = \sum_{k\in{\mathbb Z}}
\hat{u}_k e^{-ikqx},
\end{equation}
where
\begin{equation}
\hat{u}:=(\dots,\hat{u}_{-1},\hat{u}_0,\hat{u}_1,\dots)^{\mathsf{T}}\,,\qquad
     \hat{u}_k := {\cal F}[u]_k = \frac{1}{L}\int_0^L u(x,t)
e^{ikqx}dx,
\end{equation}
we arrive at the following system of ODEs
\begin{equation}
  \frac{d\hat{u}_k}{dt} =[(kq)^2-(kq)^4]\hat{u}_k +
  \frac{ikq}{2}{\cal F}[({\cal F}^{-1}[\hat{u}])^2]_k\,.
\end{equation}
Here $q = 2\pi/L$ is the basic wave number. Since $u$ is real, the
Fourier modes are related by $\hat{u}_{-k} = \hat{u}^\ast_k$.
Furthermore, since we restrict our search to the subspace of odd
solutions, the Fourier modes are pure imaginary, i.e.
$\mathfrak{Re}(\hat{u}_{k}) = 0$.

The above system is truncated as follows: the Fourier transform
${\cal F}$ is replaced by its discrete equivalent
\begin{equation}
  a_k := {\cal F}_N[u]_k = \sum_{j = 0}^{N-1} u(x_j)
  e^{ikqx_j}\,,\qquad u(x_j) := {\cal F}_N^{-1}[a]_j
  = \frac{1}{N}\sum_{k = 0}^{N-1} a_j e^{-ikqx_j}\,,
\end{equation}
where $x_j = L/N$ and $a_{N-k} = a^\ast_k$. Since $a_0 = 0$ due to
Galilean invariance and setting $a_{N/2} = 0$ (assuming $N$ is
even), the number of independent variables in the truncated system
is $n = N/2-1$.  The truncated system looks as follows:
\begin{equation}\label{ch3:eqn:finite}
  \dot{a}_k =[(kq)^2-(kq)^4]a_k +
  \frac{ikq}{2}{\cal F}_N[({\cal F}_N^{-1}[a])^2]_k\,,
\end{equation}
with $k = 1,\ldots,n$, although in the Fourier transform we need to
use $a_k$ over the full range of $k$ values from 0 to $N-1$.

The discrete Fourier transform ${\cal F}_N$ can be computed using
fast Fourier transform (FFT).\newabb{FFT} In Fortran and C, the
routine {\tt REALFT} from Numerical Recipes~\cite{NumericalRecipes}
can be used. In Matlab, it is more convenient to use complex
variables for $a_k$. Note that Matlab function {\tt fft} is, in
fact, the inverse Fourier transform.

To derive the equation for the matrix of variations, we use the fact
that ${\cal F}_N$ is a linear operator to obtain
\begin{equation}\label{ch3:eqn:varform}
  \frac{\partial \dot{a}_k}{\partial a_j} =
  [(kq)^2-(kq)^4]\delta_{kj} +
  ikq{\cal F}_N[{\cal F}_N^{-1}[a]\otimes{\cal
  F}_N^{-1}[\delta_{kj}]]\,,\quad j = 1,\ldots,n,
\end{equation}
where $\otimes$ indicates componentwise product, and the inverse
Fourier transform is applied separately to each column of
$\delta_{kj}$. Here, $\delta_{kj}$ is not a standard Kronecker
delta, but the $N\times n$ matrix:
\begin{equation}
  \delta_{kj} = \left(
  \begin{array}{ccc}
  0 &  0 &\cdots\\
  1 &  0 &\cdots\\
  0 &  1 &\cdots\\
\multicolumn{3}{c}\dotfill \\
  0 & 0 &\cdots\\
\multicolumn{3}{c}\dotfill \\
  0  & -1 &\cdots\\
  -1 & 0 &\cdots\\
  \end{array}  \right),
\end{equation}
with index $k$ running from 0 to $N-1$.

In practice the number of degrees of freedom $n$ should be
sufficiently large so that no modes important to the dynamics are
truncated, whilst on the other hand, an increase in $n$ corresponds
to an increase in computation. To determine the order of the
truncation in our numerical experiments, we initially chose $n$ to
be large and integrated a random initial seed onto the attractor. By
monitoring the magnitude of the harmonics an integer $k$ was
determined such that $a_j<10^{-5}$ for $j>k$. The value of $n$ was
then chosen to be the smallest integer such that: (i) $n\geq k$, and
(ii) $N = 2n+2$ was an integer power of two. The second condition
ensures that the FFT is applied to vectors of size which is a power
of two resulting in optimal performance.

Note that in the numerical results to follow we work entirely in
Fourier space and use the ETDRK4 time-stepping to solve
(\ref{ch3:eqn:finite}) and (\ref{ch3:eqn:varform}); for further
details concerning the method of exponential-time-differencing see
Appendix \ref{ch:appendix5}. In particular the method uses a fixed
step-size ($h = 0.25$ in our calculations) thus it is necessary to
use an interpolation scheme in order to integrate up to arbitrary
times. In our work we have implemented cubic
interpolation~\cite{NumericalRecipes}. More precisely, to integrate
up to time $t\in[t_i, t_i+h]$, where the $t_i$ are integer multiples
of the step-size $h$. We construct the unique third order polynomial
passing through the two points $a(t_i)$ and $a(t_i+h)$, with
derivatives $a'(t_i)$ and $a'(t_i+h)$ at the respective points. In
this way we obtain the following cubic model:
\begin{align}\label{eqn:cubic}
    p(s) = [&\,2a(t_i)+a'(t_i)+a'(t_i+h)-2a(t_i+h)]s^3 +
           [3a(t_i+h)-3a(t_i)\notag\\&-2a'(t_i)-a'(t_i+h)]s^2 +
           a'(t_i)s + a(t_i),
\end{align}
where the parameter $s = (t-t_i)/h\in[0, 1]$.

\subsection{Numerical results}\label{sec:num}
We now present the results of our numerical experiments. We begin
with a comparison between our method and the nonlinear least squares
solver {\tt lmder} from the MINPACK software package \cite{Minpack}.
Note that {\tt lmder} is an implementation of the
Levenburg-Marquardt algorithm -- see \S\ref{sec:optim}. We have
chosen it because it has recently been applied successfully to
detect many UPOs of the closely related complex Ginzburg-Landau
equation.

In order to determine UPOs via the {\tt lmder} routine we introduce
the following augmented system
\begin{equation}\label{ch3:eqn:augsys}
      F(x,T) =  \left[
      \begin{array}{c}
        \phi^T(x) - x \\
        v(\phi^T(x))\cdot(\phi^T(x) - x)\\
      \end{array}
    \right] = 0,
\end{equation}
where the second equation is motivated in analogous fashion to the
auxiliary equation (\ref{ch3:eqn:Teqn}). To use the {\tt lmder}
solver we must provide a routine that returns the value of $F$
evaluated at a given point $(x, T)$. Also, we provide a routine to
compute the Jacobian analytically rather than use finite
differences, the formula of which differs from
Eq.~(\ref{ch3:eqn:jacaug}) only by multiplication by a constant
matrix. In addition to this the user must supply three tolerances:
ftol, xtol, and gtol. Here, ftol measures the relative error desired
in the sum of squares, xtol the relative error desired in the
approximate solution, and gtol measures the orthogonality desired
between the vector function $F$ and the columns of the Jacobian. In
the computations performed in the next section we set the tolerances
to the recommended values:
$$\mathrm{ftol} = 10^{-8}, \quad \mathrm{xtol} = 10^{-8}\quad \mathrm{and} \quad
\mathrm{gtol} = 0.$$ Finally, we specify a maximum number of
function evaluations allowed during each run of the {\tt lmder}
algorithm in order to increase its efficiency.

For our method we use the set of matrices proposed in
\S\ref{ch3:sec:stabtrans} with
\begin{equation}\label{ch3:eqn:subSD}
    \tilde{C} = \mathcal{C}_{\mathrm{SD}},
\end{equation}
since within the low-dimensional unstable subspace it is possible to
apply the full set of Schmelcher-Diakonos (SD) matrices. The UPOs
determined from our search will then be used as seeds to determine
new cycles. Here we proceed in analogous fashion to Chapter
\ref{ch:stabtrans} by constructing STs from the monodromy matrix,
$D\phi^{T^*}(a^*)$, of the cycle $(a^*, T^*)$. We then solve the
augmented flow (\ref{ch3:eqn:augflow}) from the new initial
condition $(a^*, \tilde{T})$, where the time $\tilde{T}$ is chosen
such that
\begin{equation}\label{ch3:eqn:uporet}
  a^*(0)\approx a^*(\tilde{T}),\quad \mathrm{and} \quad
  \tilde{T}~(\mathrm{mod}T^*) \neq 0.
\end{equation}
Note that any given cycle may exhibit many close returns,
particularly longer cycles, thus in general a periodic orbit may
produce many new initial seeds. This is an especially useful
feature, since we do not have to recompute the corresponding STs.

Initially, to determine that a newly detected cycle, $(x^*, T^*)$,
was different from those already found, we first checked whether the
periods differed, that is, we determined whether $|T^* - T'| >
T_{\mathrm{tol}}$ for all previously detected orbits. If two orbits
where found to have the same period, we then calculated the distance
between the first components of $x^*$, and all other detected orbits
$y^*$. However, in practice we found, that if two orbits have the
same period, then either they are the same or they are related via
symmetry, recall that if $u(x,t)$ is a solution then so is
$-u(L-x,t)$. Thus, in order to avoid the convergence of UPOs that
are trivially related by symmetry we will consider two orbits as
being equal if their periods differ by less than the tolerance
$T_{\mathrm{tol}}$. Of course, this criterion can, in theory, lead
us to discard cycles incorrectly, however, this is highly unlikely
in practice.

\subsubsection{Stabilising transformations vs Levenberg-Marquardt}
The search for UPOs is conducted within a rectangular region
containing the chaotic invariant set. Initial seeds are obtained by
integrating a random point within the region for some transient time
$\tau$. Once on the attractor, the search for close returns within
chaotic dynamics is implemented. That is, we integrate the system
from the initial point on the attractor until $a(t_0) \approx
a(t_1)$ for some times $t_0 < t_1$, and use the close return,
$(a(t_0), T_0)$, where $T_0 = t_1 - t_0$, as our initial guess to a
time-periodic solution. In order to build the STs we solve the
variational equations for each seed starting from the random initial
point, $a(0)$, for time $\tau + t_0$. In order to construct the
matrix $V_p$, we apply the SVD to the matrix
\begin{equation}\label{ch3:eqn:dfsvd}
    D\phi^{\tau+t_0}(a(0))= USW^{\mathsf{T}},
\end{equation}
the corresponding ST is given by
\begin{equation}\label{eqn:Cmatt}
    C = I_n + V_p(\tilde{C} - I_{n_u})V_p^{\mathsf{T}},
\end{equation}
where $V_p = U_{jk}$, $j=1,\dots,n$, $k = 1,\dots,n_u$, i.e. the
first $n_u$ columns of $U$ in (\ref{ch3:eqn:dfsvd}). Here $n_u$ is
the number of expanding directions which is determined by the number
of singular values with modulus greater than one.
\begin{table}
 \caption{The number of distinct periodic solutions for the
 Kuramoto-Sivashinsky equation detected by the method of STs.
 Here $L = 38.5$ and $\alpha = 0.25$.}
    \begin{center}
        \begin{tabular}{|c|c||c|c|c|c|c|}
            \hline
            Period & C & $N_{\mathrm{po}}$ & $N_{\mathrm{hit}}$ & $N_{\mathrm{fev}}$
            & $N_{\mathrm{jev}}$ & Work\\
            \hline
            \multirow{3}{*}{$10 - 100$} & $C_0$ & $28$ & $498$ & $252$ & $10$ & $412$ \\
            &  $C_1$ & $16$ & $296$ & $684$ & $32$ & $1196$
            \\ \cline{3-7}
            &  $\{C_i\}$ & $44$ & $794$ & $936$ & $42$ & $1608$ \\ \hline
            \hline
            \multirow{3}{*}{$100 - 250$} & $C_0$ & $235$ & $395$ & $903$ & $78$ & $2151$ \\
            &  $C_1$ & $64$ & $256$ & $1294$ & $112$ & $3086$
            \\ \cline{3-7}
            &  $\{C_i\}$ & $299$ & $651$ & $2197$ & $190$ & $5237$ \\ \hline
        \end{tabular}
    \end{center}\label{table1}
\end{table}
\begin{table}[t]
 \caption{The number of distinct periodic solutions for the
 Kuramoto-Sivashinsky equation detected by the Levenberg-Marquardt algorithm {\tt
 lmder} with $L = 38.5$.}
 \begin{center}
        \begin{tabular}{|c||c|c|c|c|c|}
            \hline
            Period & $N_{\mathrm{po}}$ & $N_{\mathrm{hit}}$ & $N_{\mathrm{fev}}$ & $N_{\mathrm{jev}}$ &
            Work\\\hline
            $10 - 100$ & $42$ & $497$ & $20$ & $15$ & $260$ \\
            $100 - 250$ & $208$ & $291$ & $500$ & $452$ & $7732$ \\
            \hline
        \end{tabular}
    \end{center}\label{table2}
\end{table}

We examine two different system sizes: $L = 38.5$, for which the
detected UPOs typically have one positive Lyapunov exponent, and $L
= 51.4$, for which the UPOs have either one or two positive Lyapunov
exponents. The corresponding systems sizes are $n = 15$ and $n =31$
respectively. Our experiments where conducted over two separate
ranges. We began by looking for shorter cycles with period $T\in[10,
100]$, the lower bound here was determined {\em a posteriori} so as
to be smaller than the shortest detected cycle. We then searched for
longer cycles, $T\in[100, 250]$ to be more precise, where the
maximum of $T=250$ was chosen in order to reduce the computational
effort.

In our calculations we set the positive constant $\alpha=0.25$ in
Eq.~(\ref{ch3:eqn:augflow}). Using the solver {\tt dlsodar} we
integrated $500$ random seeds over both ranges for time $s = 150$,
the seeds where chosen such that $|\phi^{T_0}(a(0))-a(0)| < 1.0$. If
the flow did not converge within $1000$ integration steps, we found
it more efficient to terminate the solver and to re-start with a
different ST or a new seed. As mentioned in the previous section we
choose a constant $a = 50$ experimentally so that integration is
terminated if the norm of $g$ grows to large, i.e. $|g(x)| =
|\phi^T(x) - x| > a$.

Typically on convergence of the associated flow the UPO is
determined with accuracy of about $10^{-7}$ at which point we
implement two or three iterations of the Newton-Armijo rule to the
following system
\begin{equation}\label{eqn:newtarmijo}
\left(\begin{array}{cc}
        M_T-I_n & v(\phi^T(x)) \\
        v(x) & 0 \\
      \end{array}\right)
\left(\begin{array}{c}
        \delta x \\
        \delta T \\
      \end{array}\right) =
-\left(\begin{array}{c}
        \phi^T(x) - x \\
        0 \\
\end{array}\right),
\end{equation}
in order to allow convergence to a UPO to within roundoff error. The
linear system (\ref{eqn:newtarmijo}) was introduced by Zoldi and
Greenside~\cite{Zoldi98} as a method for determining UPOs for
extended systems in its own right, and in particular, has been
applied successfully in the detection of UPOs for the KSE.
\begin{figure}[t]
\begin{center}
\includegraphics[scale=0.8]{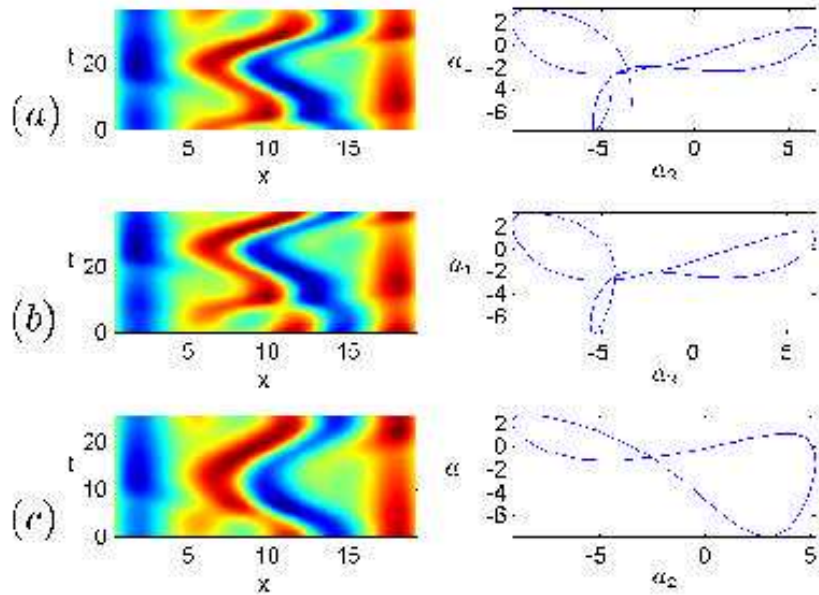}
\caption{Illustration of two UPOs of KSE detected from a single
seed. We show both a level plot for the solutions and a projection
onto the first two Fourier components. Since $u(x,t)$ is
antisymmetric on $[0,L]$, it is sufficient to display the space-time
evolution of $u(x,t)$ on the interval $[0, L/2]$: (a) Seed with time
$T = 37.0$, (b) a periodic solution of length $T = 36.9266$ detected
with stabilising transformation $\tilde{C} = +1$ and (c) a periodic
solution of length $T = 25.8489$ detected with stabilising
transformation $\tilde{C} = -1$.}\label{fig:kse}
\end{center}
\end{figure}
\begin{table}[t]
 \caption{The number of distinct periodic solutions for the
 Kuramoto-Sivashinsky equation detected by the method of stabilising
 transformations. Here $L = 51.4$ and $\alpha = 0.25$.}
    \begin{center}
        \begin{tabular}{|c|c||c|c|c|c|c|}
            \hline
            Period & C & $N_{\mathrm{po}}$ & $N_{\mathrm{hit}}$ & $N_{\mathrm{fev}}$
            & $N_{\mathrm{jev}}$ & Work\\
            \hline
            \multirow{9}{*}{$10 - 100$} & $C_0$ & $11$ & $366$ & $382$ & $16$ & $894$ \\
            &  $C_1$ & $1$ & $221$ & $410$ & $23$ & $1146$ \\
            &  $C_2$ & $7$ & $108$ & $456$ & $20$ & $1096$ \\
            &  $C_3$ & $2$ & $81$  & $357$ & $13$ & $773$ \\
            &  $C_4$ & $2$ & $157$ & $464$ & $26$ & $1296$ \\
            &  $C_5$ & $0$ & $171$ & $654$ & $35$ & $1774$ \\
            &  $C_6$ & $5$ & $174$ & $666$ & $36$ & $1818$ \\
            &  $C_7$ & $2$ & $139$ & $496$ & $36$ & $1648$ \\ \cline{3-7}
            &  $\{C_i\}$ & $30$ & $1417$ & $3885$ & $205$ & $10445$ \\ \hline
            \hline
            \multirow{9}{*}{$100 - 250$} & $C_0$ & $51$ & $330$ & $628$ & $17$ & $1172$ \\
            & $C_1$ & $7$  & $209$ & $807$ & $27$ & $1671$ \\
            & $C_2$ & $17$ & $138$ & $936$ & $31$ & $1928$ \\
            & $C_3$ & $21$ & $177$ & $917$ & $26$ & $1749$ \\
            & $C_4$ & $11$ & $161$ & $877$ & $36$ & $2029$ \\
            & $C_5$ & $0$  & $116$ & $960$ & $43$ & $2336$ \\
            & $C_6$ & $1$  & $117$ & $975$ & $42$ & $2319$ \\
            & $C_7$ & $6$  & $161$ & $879$ & $35$ & $1999$ \\ \cline{3-7}
            & $\{C_i\}$ & $114$ & $1409$ & $6979$ & $257$ & $15203$ \\
            \hline
        \end{tabular}
    \end{center}\label{table3}
\end{table}
\begin{table}[t]
 \caption{The number of distinct periodic solutions for the
 Kuramoto-Sivashinsky equation detected by the Levenberg-Marquardt algorithm {\tt
 lmder} with $L = 51.4$.}
    \begin{center}
        \begin{tabular}{|c||c|c|c|c|c|}
            \hline
            Period & $N_{\mathrm{po}}$ & $N_{\mathrm{hit}}$ & $N_{\mathrm{fev}}$ & $N_{\mathrm{jev}}$ &
            Work\\\hline
            $10 - 100$ & $20$ & $480$ & $64$ & $52$ & $1728$ \\
            $100 - 250$ & $86$ & $356$ & $374$ & $332$ & $10998$ \\
            \hline
        \end{tabular}
    \end{center}\label{table4}
\end{table}

Similarly, we run the {\tt lmder} routine from the same $500$ seeds
over the two different time ranges. The routine terminates if one of
the following three scenarios arise: (i) a predefined maximum number
of function evaluations is exceeded, we set the maximum number of
function evaluations equal to $1000$, (ii) the error between two
consecutive steps is less than xtol, but the sum of squares is
greater than ftol, indicating a local minimum has been detected, or
(iii) both xtol and ftol are satisfied indicating that convergence
to a UPO has been obtained.

In order to make a comparison between the efficiency of the two
methods we introduce the measure of the work done per seed:
\begin{equation}\label{ch3:eqn:cost}
    \mathrm{Work} = N_{\mathrm{fev}} + n\times N_{\mathrm{jev}}.
\end{equation}
Here $N_{\mathrm{fev}}$ is the average number of function
evaluations per seed, $N_{\mathrm{jev}}$ the average number of
Jacobian evaluations per seed and $n$ is the size of the system
being solved. The expression in (\ref{ch3:eqn:cost}) takes into
account the fact that evaluation of the Jacobian is $n$ times more
expensive than evaluation of the function itself.

We present the results of our experiments in Tables \ref{table1} --
\ref{table4}. Here $N_{\mathrm{po}}$ denotes the number of distinct
orbits found, $N_{\mathrm{hit}}$ gives the number of times we
converged to a UPO, and $N_{\mathrm{fev}}$, $N_{\mathrm{jev}}$, and
Work, are as defined above. In Tables \ref{table1} and \ref{table3}
the performance of the stabilising transformations are analysed both
collectively and on an individual basis; here the different $C_i$
denote the different SD matrices. In Table \ref{table1} $C_0 = +1$
and $C_1 = -1$, whilst in Table \ref{table3} we have
\begin{eqnarray*}
   \left\{
    C_{0}=\left[\begin{array}{cc}
    1&0\\0&1\\
    \end{array}\right],\right.
    C_{1}=\left[\begin{array}{cc}
    -1&0\\0&1\\
    \end{array}\right],
     C_{2}=\left[\begin{array}{cc}
    1&0\\0&-1\\
    \end{array}\right],
     C_{3}=\left[\begin{array}{cc}
    -1&0\\0&-1\\
    \end{array}\right],\\
    C_{4}=\left[\begin{array}{cc}
    0&1\\1&0\\
    \end{array}\right],
    C_{5}=\left[\begin{array}{cc}
     0&-1\\1&0\\
    \end{array}\right],
     C_{6}=\left[\begin{array}{cc}
    0&1\\-1&0\\
    \end{array}\right],
    \left.C_{7}=\left[\begin{array}{cc}
    0&-1\\-1&\\
    \end{array}\right]\right\}
\end{eqnarray*}
The total work done per seed is given by the sum over all the SD
matrices and is denoted by $\{C_i\}$.

For $L = 38.5$ we detected a total of $343$ UPOs using the ST method
as compared to $250$ UPOs using the {\tt lmder} algorithm, whilst
for $L = 51.4$ we found $144$ and $106$ UPOs with the respective
methods. Both methods found roughly the same number of orbits when
searching for shorter period cycles. However, as can be seen by
comparing the work done per seed in Tables \ref{table1} --
\ref{table4} {\tt lmder} performs better than the ST method in that
case. The situation changes, however, when we look at the detection
of longer cycles. For $L = 38.5$ in particular, we see that the ST
method computes many more orbits with a considerable reduction in
the cost. In fact, using the identity matrix alone, the ST method
detects more UPOs than {\tt lmder} at approximately a quarter of the
cost. For larger system size, $L = 51.4$, the ST method still
detects more orbits than {\tt lmder}. However, in doing so a
considerable amount of extra work is done. It is important to note
here, that the increase in work is not due to any deficiency in the
ST method. Rather, the rise in the computational cost is brought
about due to the fact that not all the SD matrices work well. For
example, the subset of matrices $\{C_0, C_2, C_3, C_4\} \subset
\mathcal{C}_{ \mathrm{SD}}$, detect approximately $90\%$ of the
longer period UPOs, yet they account for less than $50\%$ of the
overall work. Indeed, for this particular choice of matrices the ST
method still detects more orbits, but more importantly, its
performance now exceeds that of {\tt lmder}.

Note that it is not surprising that the SD matrices do not perform
equally well, this is, after all, what we would have expected based
upon our experience with maps. However, it is still important to try
all SD matrices -- when possible -- in order to compare their
efficiency, especially if one wishes to construct minimal sets of ST
matrices.

For detecting longer period orbits the performance of {\tt lmder}
suffers due to the large increase in the number of Jacobian
evaluations needed; see Tables \ref{table2} and \ref{table4}.
Recall, that {\tt lmder} is an implementation of the
Levenberg-Marquardt algorithm, and that in particular, it computes
the Jacobian once on each iterate. Thus, we may view the number of
Jacobian evaluations as the number of steps required in order to
converge. The increase in the number of iterations necessary to
obtain convergence can be understood by considering the dependance
of $|\phi^T(x)-x|$ on x. For increasing period the level curves of
$|g|^2$ are squeezed along the unstable manifold of $\phi^T(x)$,
resulting in a complicated surface with many minima, both local and
global, embedded within long, winding, narrow ``troughs''. Note that
this is a common problem with all methods that use a cost function
to obtain ``global'' convergence, since these methods only move in
the direction that decreases the cost function.

This can be explained by the following heuristics: for simplicity
let us assume that we are dealing with a map $x_{n+1} = f(x_n)$,
whose unstable manifold is a one-dimensional object. In that case,
we may define a one-dimensional map, locally, about a period-p
orbit, $x^*$, of the map $f$ as
\begin{equation}\label{eqn:localmap}
    \bar{h}(s) = |g(x^*+\delta x)|^2.
\end{equation}
Here $g = f^p(x)-x$ as usual, $\delta x = x(s) - x^*$ is small, and
we only allow $x(s)$ to vary along the unstable manifold. Expanding
$g$ in a Taylor series about the periodic orbit, $x^*$, we obtain
\begin{eqnarray*}\label{eqn:quad}
    \bar{h}(s) &=& |g(x^*) + Dg(x^*)\delta x + O(\delta x^2)|^2\\
               &=& |(Df^p(x^*)-I_n)\delta x + O(\delta x^2)|^2\\
               &=& \delta x^{\mathsf{T}}(\Lambda^p-I_n)^2\delta x + O(\delta x^3),
\end{eqnarray*}
where the third line follows since $\delta x$ is an eigenvector of
$Df(x^*)$ with corresponding eigenvalue $\lambda$, and $\Lambda =
\mathrm{diag}(\lambda,\dots\lambda)$. Note that in the above we
assume that $p$ is large but finite so that the term
$(\Lambda^p-I_n)^2$ remains bounded. Now, close to the periodic
orbit, $\bar{h}$ is approximately a quadratic form with slope of the
order $\lambda^p$, and since $|\lambda|>1$, it follows that if we
move along the unstable manifold $|g|^2$ will grow quicker for
larger periods, or, in other words, the level curves are compressed
along the unstable direction.

Since the {\tt lmder} algorithm reduces the norm of $f$ at each
step, it will typically follow the gradient to the bottom of the
nearest trough, where it will start to move along the narrow base
towards a minimum. Once at the base of a trough, however, {\tt
lmder} is forced into taking very small steps, this follows due to
the nature of the troughs, i.e. the base is extremely narrow and
winding, and since {\tt lmder} chooses its next step-size based on
the straight line search. One of the benefits of the ST method is
that it does not need to decrease the norm at each step and thus,
does not suffer from such considerations.

Another advantage of our method is that we can converge to several
different UPOs from just one seed, depending upon which ST is used.
Figure \ref{fig:kse} shows one of such cases, where
Eq.~(\ref{ch3:eqn:augflow}) is solved from the same seed for each of
the $2^{n_u}$ STs ($n_u = 1$ in this example), with two of them
converging to two different UPOs. Figure \ref{fig:kse}a shows the
level plot of the initial condition and a projection onto the first
two Fourier components. Figures \ref{fig:kse}b and \ref{fig:kse}c
show two unstable spatiotemporally periodic solutions which where
detected from this initial condition, the first of period $T =
36.9266$ was detected using $\tilde{C} = +1$, whilst the second of
period $T = 25.8489$ was detected using $\tilde{C} = -1$. The
ability to detect several orbits from one seed increases the
efficiency of the algorithm.

\subsubsection{Seeding with UPOs}
\begin{table}[t]
 \caption{Number of distinct orbits detected using the method of
 stabilising transformations with periodic orbits as seeds The number
 of seeds is $489$. $L = 38.5$, $\alpha = 0.25$.}
    \begin{center}
        \begin{tabular}{|c||c|c|c|c|c|}
            \hline
            C & $N_{\mathrm{po}}$ & $N_{\mathrm{hit}}$ & $N_{\mathrm{fev}}$ & $N_{\mathrm{jev}}$ &
            Work\\\hline
            $C_0$ & $46$ & $209$ & $2814$ & $136$ & $4854$ \\
            $C_1$ & $52$ & $198$ & $2819$ & $130$ & $4769$ \\\cline{2-6}
            $\{C_i\}$ & $98$ & $407$ & $5633$ & $266$ & $9623$ \\
            \hline
        \end{tabular}
    \end{center}\label{table5}
\end{table}
\begin{table}[t]
 \caption{Number of distinct orbits detected using the method of
 stabilising transformations with periodic orbits as seeds. The number
 of seeds is $123$. $L = 51.4$, $\alpha = 0.25$.}
    \begin{center}
        \begin{tabular}{|c||c|c|c|c|c|}
            \hline
            C & $N_{\mathrm{po}}$ & $N_{\mathrm{hit}}$ & $N_{\mathrm{fev}}$ & $N_{\mathrm{jev}}$ &
            Work\\\hline
            $C_0$ & $1$ & $21$ & $2797$ & $48$ & $4333$ \\
            $C_1$ & $2$ & $37$ & $2815$ & $40$ & $4095$ \\
            $C_2$ & $0$ & $0$ & $216$ & $3$ & $312$ \\
            $C_3$ & $0$ & $1$ & $187$ & $2$ & $251$ \\\cline{2-6}
            $\{C_i\}$ & $3$ & $59$ & $6015$ & $93$ & $8691$ \\
            \hline
        \end{tabular}
    \end{center}\label{table6}
\end{table}
In order to construct a seed from an already detected orbit, $(a^*,
T^*)$, we begin by searching for close returns. That is, starting
from a point on the orbit, $a^*(t_0)$, we search for a time $t_1$
such that $a(t_1)$ is close to $a(t_0)$. As long as
$\tilde{T}\mathrm{mod}T\neq 0$, where $\tilde{T} = t_1 - t_0$, we
take $(a^*(t_0), \tilde{T})$ as our initial guess to a time periodic
solution. For longer period cycles, we can find many close returns
by integrating just once over the period of the orbit. Shorter
cycles, however, produce fewer recurrences and, in general, must be
integrated over longer times to find good initial seeds. In our
experiments we searched for close returns $\tilde{T}\in(10, 2T^*)$
if $T^*<150.0$; otherwise, we chose $\tilde{T}\in(10, T^*)$.

Stabilising transformations are constructed by applying the polar
decomposition to the matrix $\tilde{G} = \tilde{Q}\tilde{B}$, where
$\tilde{G}$ is defined by
\begin{equation*}
   \tilde{G} := V(S\Lambda - \mathrm{I}_n)V^{-1}.
\end{equation*}
Here $V$ and $\Lambda$ are defined through the eigen decomposition
of $D\phi^{T^*}(a^*(t_0)) = V\Lambda V^{-1}$, and $S =
\mathrm{diag}(\pm 1, \pm 1,\ldots,\pm 1)$.  The different
transformations are then given by $C = -\tilde{Q}^{\mathsf{T}}$.
Recall from Chapter \ref{ch:stabtrans} that changing the signs of
the stable eigenvalues is not expected to result in a substantially
different stabilising transformation, thus we use the subset of $S$
such that $S_{ii} = 1$ for $i>n_u$. For $L = 38.5$, we have just two
such transformations, since all periodic orbits have only one
unstable eigenvalue. Whilst for $L=51.4$, we will have either two or
four transformations depending upon the number of unstable
eigenvalues of $D\phi^{T^*}(a^*(t_0))$.

In the calculations to follow, a close return was accepted if
$|a^*(t_1) - a^*(t_0)|<2.5$. Note that, this was the smallest value
to produce a sufficient number of recurrences to initiate the
search. A particular cycle may exhibit many close returns. In the
following, we set the maximum number of seeds per orbit equal to
$5$. To obtain convergence within tolerance $10^{-7}$ we had to
increase the integration time to $s = 250$ and the maximum number of
integration steps allowed to $2000$. If we converged to a UPO then
as in the preceding section, we apply two or three Newton-Armijo
steps to the linear system (\ref{eqn:newtarmijo}) in order to
converge to machine precision.

The results of our experiments are given in Tables \ref{table5} and
\ref{table6}. As in the previous section, $N_{\mathrm{po}}$ denotes
the number of distinct orbits found, $N_{\mathrm{hit}}$ the number
of times we converged to a UPO, $N_{\mathrm{fev}}$ the average
number of function evaluations per seed, and $N_{\mathrm{jac}}$ the
average number of jacobian evaluations per seed. The computational
cost per seed is measured in terms of the average number of function
evaluations per seed and is defined as in Eq.~(\ref{ch3:eqn:cost}).
In Tables \ref{table5} and \ref{table6} the different $C_i$ can be
uniquely identified by the signature of pluses and minuses defined
through the corresponding matrix $S$. For $L = 38.5$, the matrices
$C_0$ and $C_1$ correspond to the signatures $(+,\ldots,+)$ and
$(-,+,\ldots,+)$, respectively, whilst for $L = 51.4$, the matrices
$C_0$, $C_1$, $C_2$ and $C_3$ correspond to $(+,\ldots,+)$,
$(-,+,\ldots,+)$, $(-,-,+,\ldots,+)$, and $(+,-,\ldots,+)$
respectively. As before, the total work done per seed is defined as
the sum over all matrices and is denoted by $\{C_i\}$.

\begin{figure}[t]
\begin{center}
\includegraphics[scale=0.8]{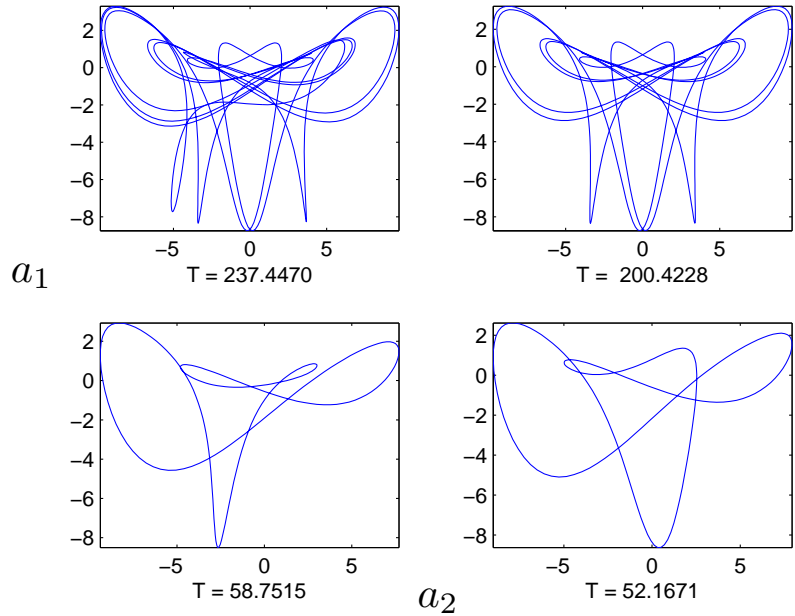}
\caption{Illustration of how a UPO can be used as a seed to detect
 new cycles. We show the projection onto the first two Fourier
 components of the initial seed ($T = 237.4470$) and the three detected
 orbits. Here $L = 38.5$ and $n = 15$.}\label{fig:seed}
\end{center}
\end{figure}

For $L = 38.5$ we where able to construct $489$ seeds from $343$
previously detected periodic orbits. From the new seeds we found a
further $98$ distinct UPOs, bringing the total number of distinct
orbits detected for $L = 38.5$ to $433$. An important observation to
make is that both matrices perform equally well, as can be seen from
Table \ref{table5}. This is in contrast to the SD matrices for which
the performance of the different matrices varies greatly; see Tables
\ref{table1} and \ref{table3}. The result of using periodic orbits
as seeds has, however, increased the cost per seed as compared with
the cost of seeding with close returns within a chaotic orbit, see
Tables \ref{table1} and \ref{table5}. The increased computation is
due mainly to the fact that the close returns obtained from periodic
orbits are not as good as those obtained from a chaotic orbit.
Recall that the stabilising transformations are based on the local
invariant directions of the orbit, and we would expect the
performance to suffer as we move further from the seed.

For larger system size, where the system becomes more chaotic, the
construction of seeds from close returns becomes increasingly
difficult. For $L = 51.4$, we constructed $123$ initial seeds from
the $144$ periodic orbits using the method of near recurrences. Here
we detected only $3$ new distinct UPOs. It is important to note,
however, that although we do not find many new UPOs for the case $L
= 51.4$, the method is still converging for approximately $20\%$ of
all initial seeds. Also, in Table \ref{table6}, the poor performance
of the matrices $C_2$, $C_3$ as compared to that of $C_0$, $C_1$, is
due to the fact that only a small percentage of seeds were
constructed from orbits with two positive Lyapunov exponents.

One advantage of using periodic orbits as seeds is that we can
construct many seeds from a single orbit. Figure \ref{fig:seed}
gives an illustration of three periodic orbits which were detected
from a long periodic orbit used as seed. Figure \ref{fig:seed} shows
the projection onto the first two Fourier components of the initial
seed (of period $T = 237.4470$) and the three orbits detected whose
periods in descending order are $T = 200.428$, $T = 58.7515$ and $T
= 52.1671$. For each seed determined from a particular periodic
orbit the stability transformations are the same. Hence, the ability
to construct several seeds from a single orbit, increases the
efficiency of the scheme.

\section{Summary}\label{sec:summary}
In this chapter we have presented a scheme for detecting unstable
periodic orbits (UPOs) in high-dimensional chaotic
systems~\cite{Crofts05,Crofts07} based upon the stabilising transformations
(STs) proposed in~\cite{Davidchack99c,Schmelcher97}. Due to the fact that
high-dimensional systems studied in dynamical systems typically
consist of low-dimensional dynamics embedded within a
high-dimensional phase space, it is possible to increase the
efficiency of the STs approach by restricting the construction of
such transformations only to the low-dimensional unstable subspace.
Following the approach often adopted in subspace iteration
methods~\cite{Lust98}, we construct a decomposition of the tangent
space into unstable and stable orthogonal subspaces.  We find that
the use of singular value decomposition to approximate the
appropriate subspaces is preferable to that of Schur decomposition,
which is usually employed within the subspace iteration approach. As
illustrated with the example of the Ikeda map, the decomposition
based on singular value decomposition is less susceptible to
variations in the properties of the tangent space away from a seed
and thus produce larger basins of attraction for stabilised periodic
orbits. Within the low-dimensional unstable subspace, the number of
useful STs is relatively small, so it is possible to apply the full
set of Schmelcher-Diakonos matrices. The detected orbits were then
used as seeds in order to search for new UPOs, thus enabling us to
apply the STs introduced in~\cite{Crofts06}.

\chapter{Summary and outlook}
\label{ch:summary}
\begin{quote}
In mathematics you don't understand things. You just get used to
them.\\ \emph{J. von~Neumann}
\end{quote}

\section{Summary}
In this thesis we have discussed in detail the application to
high-dimensional systems of the method of stabilising
transformations (ST). The main advantage of the ST approach as
compared to other methods is its excellent convergence properties.
In particular, the basins of attraction are much larger than the
basins produced by other iterative schemes, and are simply connected
regions in phase space. The convergence properties for methods of
Newton or secant type, can be improved upon by using a line search,
the step-size being determined by a suitable cost function. However,
such methods have no way of differentiating between true roots and
local minima of the cost function; a problem which increases
significantly with system dimension due to the topology of
multi-dimensional flows.

The application to high-dimensional systems however, is not
straightforward. The set of transformations conjectured by
Schmelcher and Diakonos (SD) have two major failings: (i) the
complete set of transformations has cardinality $2^nn!$ (here $n$ is
the system dimension), and (ii) this set contains a certain
redundancy, in that, not all matrices are useful. The main focus has
been on trying to understand the theoretical foundations of the
stabilising transformations in order to determine a new minimal set
of such transformations enabling efficient detection for systems
with $n\geq 4$.

Our approach is based on the understanding of the relationship
between the stabilising transformations and the properties of the
eigenvalues and eigenvectors of the stability matrices of the
periodic orbits. Of particular significance is the observation that
only the unstable eigenvalues are important for determining the
stabilisation matrices. Therefore, unlike the SD matrices, whose
numbers grow with system dimensionality as $2^nn!$, our set has at
most $2^k$ elements, where $k$ is the dimension of the unstable
manifold. The dependence of the number of transformations on the
dimensionality of the unstable manifold is especially important in
cases when we study low-dimensional chaotic dynamics embedded within
a high-dimensional phase space.

The ST method was originally developed for the detection of periodic
orbits in time-discrete maps. However, the periodic orbits can be
used similarly to infer the properties of time-continuous dynamical
systems. Here, the ST approach is typically applied directly to the
corresponding Poincar\'{e} map, where the Poincar\'{e} surface of
section (PSS) can be a stroboscopic map or defined in terms of phase
space intersections. For high-dimensional systems, however, the
correct choice of a PSS is a challenging problem in itself. Due to
the complex topology of a high-dimensional phase space, the
successful detection of unstable periodic orbits (UPOs) will depend
upon the choice of PSS. A major advantage of our method is that it
does not require a PSS in order to apply the ST method to determine
UPOs for flows.

One apparent drawback of the new scheme is that a small set of UPOs
needs to be available for the construction of the stabilising
transformation at the start of the detection process. However, we
have shown that it is possible to construct the stabilising
transformations without the knowledge of UPOs. Recalling that the
stabilising transformations depend mostly on the properties of the
unstable subspace, we use the fact that the decomposition into
stable and unstable subspaces can be defined at any, not just
periodic, point on the chaotic set. The decomposition is done in a
process similar to that used in the subspace iteration algorithm
\cite{Lust98}, the full set of $\mathcal{C}_{\mathrm{SD}}$ matrices
can then be applied in the low-dimensional unstable subspace.

The new transformations were tested on a kicked double rotor map,
three symmetrically coupled H\'{e}non maps and the
Kuramoto-Sivashinsky equation (KSE). For the time-discrete maps our
aim was to achieve the detection of plausibly complete sets of
periodic orbits of low periods up to as high a period as was
computationally feasible. In both cases our algorithm was able to
detect large numbers of UPOs with high degree of certainty that the
sets of UPOs for each period were complete. We have used the
symmetry of the systems in order to test the completeness of the
detected sets.  On the other hand, when the aim is to detect as many
UPOs as possible without verifying the completeness, the symmetry of
the system could be used to increase the efficiency of the detection
of UPOs: once an orbit is detected, additional orbits can be located
by applying the symmetry transformations.

In the case of the KSE our goal was somewhat different~\cite{Crofts05,Crofts07}.
Here the aim was to construct and implement stabilising transformations
determined from an arbitrary point in phase space. The proposed
method for detecting UPOs has been tested on both a $15$ and
$31$-dimensional system of ODEs representing odd solutions of the
KSE. Using the new set of stabilising transformations we have been
able to detect many periodic solutions of the KSE in an efficient
manner. The newly detected orbits were then used as seeds to detect
new orbits by extending the ideas in~\cite{Crofts06} to the
continuous case.

\section{Outlook} The method of stability transformations is a
powerful tool in the detection of unstable periodic orbits for a
large class of dynamical systems. Until now, its application has
been restricted to low-dimensional dynamical systems. In this thesis
we have used the insights gained from our two-dimensional analysis
in order to extend the ST method efficiently in higher dimensions.
However, the understanding of the theoretical foundations in more
than two dimensions remains a challenge. For example, the SD
Conjecture \ref{conj:sd} still remains an open problem for $n>2$.
One possible approach to this question would be to show that for any
orthogonal matrix $Q$ there exists at least one matrix, $C\in
\mathcal{C}_{ \mathrm{SD}}$, which is close to $Q$ for all $n$. Here
the measure of closeness is that defined in~\cite{Crofts06}.

The detection of spatial and temporal patterns in the time evolution
of nonlinear PDEs is an area of research currently receiving a lot
of attention; see~\cite{Kida01,Kerswell04,Viswanath06} and
references therein. Here, an important question is whether or not
the results of the periodic orbit theory generalise, and more
importantly, remain useful for such high-dimensional dynamical
systems. The numerical treatment of such systems, however, is
limited due to computational constraints. Thus, the construction of
efficient tools for detecting spatiotemporal patterns is extremely
important. Our numerical results for the KSE are a first step in
this direction. In the future, an exploration of the full solution
space should be conducted; the restriction to odd subspace is a
computational convenience, rather than a physically meaningful
constraint. Importantly, for the full system, solutions are
invariant under translations along the $x$-axis, i.e. if $u(x,t)$ is
a solution then so is $u(x+\Delta,t)$, $\forall \Delta \in
\mathbb{R}$. Now UPOs only tell part of the story and it is expected
that relative periodic solutions, i.e. $u(x+\Delta,t+T) = u(x,T)$,
will become important for a proper description of the dynamics. One
way to eliminate this arbitrariness is to supply an extra equation,
this may be done, for example, in analogy to the auxiliary equation
(\ref{ch3:eqn:Teqn}) of Chapter \ref{ch:kse}:
\begin{equation}\label{eqn:deltaaux}
    \frac{d\Delta}{ds} \varpropto -\frac{\partial |g|^2}{\partial\Delta},
\end{equation}
where $g(x,\Delta,T) = \phi^T(x+\Delta) - x$.

Also, it is clear that as we start to study more and more complex
systems, that it becomes increasingly difficult to find UPOs from
which to initialise the detection process. Thus, the idea of
constructing matrices from an arbitrary point in phase space becomes
increasingly important, and although numerical results indicate that
the approximation of unstable subspace via singular value
decomposition is preferable to that of Schur decomposition, the
theory at present holds only for Schur decomposition. Future work
should concentrate on the mathematical analysis needed to formulate
similar ideas for singular value decomposition; a clearer
mathematical understanding should enable the development of an
efficient strategy for systematic detection of periodic orbits in
high-dimensional systems.

We construct STs by approximating the local stretching rates of the
system at the initial seed $x$. As we evolve the associated flow
however, we can wander far away from the initial seed, in that case
the ST matrix is no longer valid due to its local nature. One might
try to improve the applicability of the method by constructing
transformations adaptively, which amounts to continually updating
our approximation of the unstable subspace. However, since we
approximate the unstable subspace at $x$ by evaluating the singular
value decomposition at its preimage, i.e. $\phi^{-t}(x)$, for some
time $t$, we must be able to integrate backwards in time.
Unfortunately, for dissipative systems such as the KSE, evolving the
solution for negative times is not possible. However, this is not the 
case for reversible systems, for example Hamiltonian systems, in that 
case we can integrate either forward or backwards in time in a
straightforward manner.

Another possible avenue of exploration would be the application of
ST approach to Hamiltonian systems, which are most relevant for the
study of quantum dynamics of classically chaotic systems and
celestial mechanics. This should allow for further improvements --
particularly theoretically -- to be achieved through the
consideration of the symmetry properties of the Hamiltonian systems.
Indeed, since the choice of the transformations is directly related
to the eigenvalues of the orbit monodromy matrix, one of the obvious
properties that can be exploited in this case is the time-reversible
nature of the Hamiltonian systems, which manifests itself in the
symmetry of the eigenvalues of the corresponding monodromy matrix.
Other symmetry properties related to the symplectic structure could
also be explored.

An important consideration when applying our method to Hamiltonian
problems is that for a given energy the search for UPOs take place
on the corresponding energy surface. However, in the determination
of UPOs we solve an associated flow which, in general, will cause us
to move away from the energy surface. One possible approach to avoid
this, would be to project out any part of the vector field
transverse to the surface and to solve the resulting system of ODEs,
thus remaining on the energy surface. Alternatively, we could add an
extra equation, which would act so as to force the associated flow
onto the energy surface in an asymptotic manner. This could work,
for example, in a similar fashion to penalty merit functions in
nonlinear programming problems, where one typically has to make a
tradeoff between minimising the function of interest and remaining
on the manifold of feasible points.

\appendix

\achapter{Frequently used notation}\label{ch:appendix1} \addnotation
p:{Period of a discrete system}{p}
\\
\addnotation f:{Discrete map}{f}
\\
\addnotation g:{$f^p(x)-x$}{g}
\\
\addnotation n:{System dimension}{n}
\\
\addnotation U:{A discrete dynamical system}{U}
\\
\addnotation \Sigma:{Associated flow}{sigma}
\\
\addnotation C:{Stabilising transformation}{C}
\\
\addnotation \mathcal{C}_{\mathrm{SD}}:{Stabilising transformations
proposed by Schmelcher and Diakonos}{CSD}
\\
\addnotation G:{The Jacobian of $g(x)$}{G}
\\
\addnotation \mathrm{I}_n:{$n\times n$ identity matrix}{I}
\\
\addnotation Df^p:{The jacobian of $f^p(x)$}{Df}
\\
\addnotation Dg:{Alternative notation for the Jacobian of
$g(x)$}{Dg}
\\
\addnotation \mathsf{T}:{Matrix transpose}{T}
\\
\addnotation \phi^t:{The flow map of a differential equation}{phi}
\\
\addnotation T: {Period of a continuous system}{TT}
\\
\addnotation J:{The Jacobian for the flow $J\equiv D\phi^t(x)$}{J}

\achapter{Abbreviations}\label{ch:appendix2} \addabbreviation
\mathrm{UPO}:{Unstable periodic orbit}{UPO}
\\
\addabbreviation \mathrm{SD}:{Schmelcher-Diakonos}{SD}
\\
\addabbreviation \mathrm{ODE}:{Ordinary differential equation}{ODE}
\\
\addabbreviation \mathrm{PDE}:{Partial differential equation}{PDE}
\\
\addabbreviation \mathrm{ST}:{Stabilising transformation}{ST}
\\
\addabbreviation \mathrm{PSS}:{Poincar\'{e} surface of section}{PSS}
\\
\addabbreviation \mathrm{BW}:{Biham-Wenzel}{BW}
\\
\addabbreviation \mathrm{CB}:{Characteristic bisection}{CB}
\\
\addabbreviation \mathrm{NR}:{Newton-Raphson}{NR}
\\
\addabbreviation \mathrm{GN}:{Gauss-Newton}{GN}
\\
\addabbreviation \mathrm{LM}:{Levenberg-Marquardt}{LM}
\\
\addabbreviation \mathrm{CHM}:{Coupled H\'{e}non map}{CHM}
\\
\addabbreviation \mathrm{SchD}:{Schur decomposition}{SchD}
\\
\addabbreviation \mathrm{SVD}:{Singular value decomposition}{SVD}
\\
\addabbreviation \mathrm{KSE}:{Kuramoto-Sivashinsky equation}{KSE}
\\
\addabbreviation \mathrm{FFT}:{Fast Fourier transform}{FFT}

\achapter{Derivation of the kicked double rotor map}
\label{ch:appendix3} The double rotor map is a four-dimensional
discrete system, physically it describes the dynamics of a double
rotor under the influence of a periodic kick; see Figure
\ref{fig:drmp}. Note that our discussion closely follows that given
in~\cite{Grebogi86}, we start by giving a description of the
mechanical device that is the double rotor, before deriving the
equations of motion and the corresponding map.

The double rotor is made up of two thin, massless rods, connected as
in figure \ref{fig:drmp}. The first rod, of length $l_1$, pivots
about the fixed point $P_1$, whilst the second rod, of length $l_2$,
pivots about the moving point $P_2$. The position at time $t$ of the
two rods is given by the angular variables $\theta_1(t)$ and
$\theta_2(t)$ respectively. A mass $m_1$ is attached to the first
rod at $P_2$, and masses $m_2/2$ are attached at either end of the
second rod. Friction at $P_1$ is assumed to slow the first rod at a
rate proportional to $\dot{\theta}_1$, and friction at $P_2$ slows
the second rod at a rate proportional to $\dot{\theta}_2 -
\dot{\theta}_1$. The end of the second rod, marked $K$, receives a
periodic kick at times $t = T, 2T, 3T\dots$, the force of which is
constant.

We may construct the equations of motion for the system by noting
that the kick represents a potential energy of the system which is
given by
\begin{equation}\label{eqn:pot}
    V(t) = (l_1\cos{\theta_1} + l_2\cos{\theta_2})f(t),
\end{equation}
here $f(t)$ consists of periodic delta function kicks, i.e. $f(t) =
\sum_kf_0\delta(t-k\mathsf{T})$ . The kinetic energy of the system
is easily seen to be
\begin{equation}\label{eqn:kin}
    K(t) = \frac{1}{2}(\mathrm{I}_1\dot{\theta}_1^2 + \mathrm{I}_2\dot{\theta}_2^2)
\end{equation}
where $\mathrm{I}_1 = (m_1 + m_2)l_1^2$, $\mathrm{I}_2 = m_2l_2^2$
and $\dot{\theta}_i = \mathrm{d}\theta_i/\mathrm{d}t$ for $i = 1,2$.
Using the Lagrangian formulation, $L = K - V$, with
\begin{equation}\label{eqn:lang}
    \frac{d}{dt}\left(\frac{\partial L}{\partial \dot{q}_i}\right) - \frac{\partial L}{\partial
    q_i} = - \frac{\partial F}{\partial\dot{\theta}_i},
\end{equation}
where Rayleighs dissipation function is given by 
$F = \frac{1}{2}\nu_1I_1\dot{\theta_1}^2 +\frac{1}{2}\nu_2I_2(\dot{\theta_2}-\dot{\theta_1})^2$, 
yields the following system of ODEs
\begin{eqnarray}  \label{eqn:ode1}
  \mathrm{I}_1\frac{\mathrm{d}\dot{\theta}_1}{dt} &=& \mathrm{I}_1f(t)\sin{\theta_1} -\nu_1\mathrm{I}_1\dot{\theta}_1
  + \nu_2\mathrm{I}_2(\dot{\theta}_2 - \dot{\theta}_1)\,,
  \label{eqn:ode1} \\
  \mathrm{I}_2\frac{\mathrm{d}\dot{\theta}_2}{dt} &=&
  \mathrm{I}_2f(t)\sin{\theta_2} -\nu_2\mathrm{I}_2(\dot{\theta}_2 -
  \dot{\theta}_1)\,,
  \label{eqn:ode2}
\end{eqnarray}
here $\nu_1, \nu_2$ are the coefficients of friction.

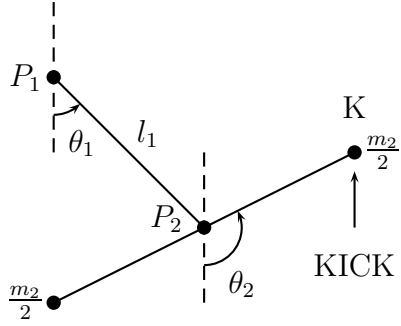
\begin{figure}[t]
 \begin{center}
  \psset{xunit=1cm,yunit=1cm}
  \begin{pspicture}(0,3)(8,8)
  \psline[linestyle=dashed](2,8)(2,6)\psline[linestyle=dashed](4,6)(4,4)
  \psdot[dotscale=1.5](2,7)
  \psline(2,7)(4,5)
  \psdot[dotscale=1.5](4,5)
  \psline(2,4)(6,6)\psdot[dotscale=1.5](2,4)\psdot[dotscale=1.5](6,6)
  \psline{->}(6,5)(6,5.75)
  \psarc{->}(2,7){0.5}{270}{315} \psarc{->}(4,5){0.5}{270}{26.5651}

  \rput(2.375,6.125){$\theta_1$} \rput(4.5,4.25){$\theta_2$}
  \rput(1.625,7){$P_1$} \rput(3.5,5.125){$P_2$}
  \rput(3.25,6.25){$l_1$} 
  \rput(1.625,4){$\frac{m_2}{2}$} \rput(6.375,6){$\frac{m_2}{2}$}
  \rput(6,6.6){K} \rput(6,4.5){KICK}
  \end{pspicture}
 \end{center}
 \caption{The double rotor under the influence of a periodic kick.}\label{fig:drmp}
\end{figure}

Since the kicks are treated as being instantaneous, i.e. $f(t) = 0$,
except for times, $t$, which are integer multiples of the period,
the equations of motion (\ref{eqn:ode1}) -- (\ref{eqn:ode2}) reduce
to the following linear system for $t\in((n-1)T, nT)$,
$n\in\mathbb{Z}^+$,
\begin{equation}\label{eqn:lindrmp}
    \left(\begin{array}{c}
        \mathrm{d}\dot{\theta}_1/\mathrm{d}t \\
        \mathrm{d}\dot{\theta}_2/\mathrm{d}t \\
    \end{array}\right) =
    \left(\begin{array}{cc}
        -(\nu_1 + \nu_2) & \nu_2 \\
        \nu_2 & -\nu_2 \\
    \end{array}\right)
    \left(\begin{array}{c}
        \dot{\theta}_1 \\
        \dot{\theta}_2 \\
      \end{array}\right).
\end{equation}
Given the initial conditions $(\dot{\theta}_1(0),
\dot{\theta}_2(0))^\mathsf{T}$, Eq.~(\ref{eqn:lindrmp}) admits the
following solution
\begin{equation}\label{eqn:sol}
    \left(\begin{array}{c}
        \dot{\theta}_1(t) \\
        \dot{\theta}_2(t) \\
    \end{array}\right) = L(t)
    \left(\begin{array}{c}
        \dot{\theta}_1(0) \\
        \dot{\theta}_2(0)\\
    \end{array}\right),
\end{equation}
where
\begin{equation}\label{eqn:L}
    L(t) = \left(\begin{array}{cc}
      u_{11}^2e^{-s_1t} + u_{21}^2e^{-s_2t} & u_{11}u_{12}e^{-s_1t} + u_{21}u_{22}e^{-s_2t} \\
      u_{11}u_{12}e^{-s_1t} + u_{21}u_{22}e^{-s_2t} & u_{12}^2e^{-s_1t} + u_{22}^2e^{-s_2t} \\
    \end{array}\right),
\end{equation}
$s_1$, $s_2$ are eigenvalues of the matrix in
Eq.~(\ref{eqn:lindrmp}) with corresponding orthonormal eigenvectors
$\textbf{u}_1 = (u_{11}, u_{12})^{\mathsf{T}}$, $\textbf{u}_2 =
(u_{21}, u_{22})^{\mathsf{T}}$. For the initial condition
$(\theta_1(0), \theta_2(0))$, we may integrate Eq.~(\ref{eqn:sol})
to determine the position of the rods
\begin{equation}\label{eqn:int}
    \left(\begin{array}{c}
        \theta_1(t) \\
        \theta_2(t) \\
    \end{array}\right) = M(t)
    \left(\begin{array}{c}
        \dot{\theta}_1(0) \\
        \dot{\theta}_2(0)\\
    \end{array}\right) +
    \left(\begin{array}{c}
        \theta_1(0) \\
        \theta_2(0) \\
    \end{array}\right),
\end{equation}
here $M(t) = \int_0^t L(\xi)\mathrm{d}\xi$. Note that
Eqs.~(\ref{eqn:L}) -- (\ref{eqn:int}) completely describe the
dynamics of the double rotor for $t\in((n-1)T, nT)$,
$n\in\mathbb{Z}^+$.

At $t=T$ the angular velocity of each rod changes instantaneously,
thus the angular velocity $\dot{\theta}_i$ of each rod is
discontinuous. Denote the limits from the left and right as
$\dot{\theta_i}(T^-)$, $\dot{\theta_i}(T^+)$ respectively, then the
discontinuity is given by
\begin{equation}\label{eqn:disc}
    \dot{\theta_i}(T^+) - \dot{\theta_i}(T^-) = \frac{l_if_0}{\mathrm{I}_i}\sin{\theta_i}(T),\quad i=1, 2.
\end{equation}
The position however, will vary continuously for all time, i.e.
$\theta_i(T^+) = \theta_i(T^-)$, $i = 1, 2$. In this way the
solution of (\ref{eqn:ode1}) -- (\ref{eqn:ode2}) is a composition of
the solution of the linear system (\ref{eqn:lindrmp}) and the
periodic kicks at $t = T, 2T,\dots$. Thus to understand the dynamics
of the double rotor it suffices to study the following map obtained
from Eqs.~(\ref{eqn:sol}) -- (\ref{eqn:disc}),
\begin{equation}\label{eqn:drmp}
 \left(\begin{array}{c}
        \theta_1^{(n+1)} \\
        \theta_2^{(n+1)} \\
        \dot{\theta}_1^{(n+1)} \\
        \dot{\theta}_2^{(n+1)} \\
 \end{array}\right) =
 \left(\begin{array}{cc}
     M(t) & \textbf{0} \\
     \textbf{0} & L(t) \\
 \end{array}\right)
 \left(\begin{array}{c}
     \dot{\theta}_1^{(n)} \\
     \dot{\theta}_2^{(n)} \\
     \dot{\theta}_1^{(n)} \\
     \dot{\theta}_2^{(n)} \\
 \end{array}\right) +
 \left(\begin{array}{c}
    \theta_1^{(n)} \\
    \theta_2^{(n)} \\
    \frac{l_1f_0}{\mathrm{I}_1}\sin{\theta_1^{(n+1)}}\\
    \frac{l_2f_0}{\mathrm{I}_2}\sin{\theta_2^{(n+1)}}\\
  \end{array}\right)
\end{equation}
which give the position and the angular velocity of the rods
immediately after each kick.

\achapter{A modicum of linear algebra} \label{ch:appendix4} Below we
define some of the theory from linear algebra used throughout the
current piece of work. Proofs are omitted for brevity but can be
found in the books on matrix analysis by Horn and Johnson from which
the results have been taken~\cite{HornBook1,HornBook2}. In what
follows we denote by $M_n$ the set of $n\times n$, possibly complex
valued matrices. Further we denote by $A^*$ the conjugate transpose
of $A$, this of course equals the usual transpose for real matrices.

We begin with two theorems concerning the matrix decompositions
which have been used so frequently in our discussions
\begin{theorem}
If $G\in M_n$, then it may be written in the form
\begin{equation}
G = QB\label{eqn:polar}
\end{equation}
where $B$ is positive semi-definite and $Q$ is unitary. The matrix
$B$ is always uniquely defined as $B\equiv (GG^*)^{1/2}$; if $G$ is
nonsingular, then $Q$ is uniquely determined as $Q\equiv B^{-1}G$.
If $G$ is real, then $B$ and $Q$ may be taken to be real.
\end{theorem}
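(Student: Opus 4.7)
The plan is to establish the polar decomposition via the singular value decomposition (SVD), which gives existence cleanly in both the complex and real cases, and then to derive uniqueness by a direct algebraic manipulation. First I would invoke the SVD of $G\in M_n$, writing $G = U\Sigma V^*$ with $U,V$ unitary and $\Sigma = \mathrm{diag}(\sigma_1,\ldots,\sigma_n)$ diagonal with $\sigma_i\ge 0$. Setting $Q := UV^*$ and $B := V\Sigma V^*$, one verifies immediately that $Q$ is unitary (a product of unitaries) and $B$ is Hermitian positive semi-definite (a unitary conjugate of a non-negative diagonal matrix). The product telescopes as
\[ QB = (UV^*)(V\Sigma V^*) = U\Sigma V^* = G, \]
giving existence.

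For uniqueness of $B$ and the explicit formula, I would argue that any factorisation $G = QB$ with $Q$ unitary and $B$ Hermitian positive semi-definite must satisfy $G^*G = B^*Q^*QB = B^2$. Since $G^*G$ is Hermitian positive semi-definite, it admits a unique positive semi-definite square root by the spectral theorem (diagonalise $G^*G = W\Lambda W^*$ and set $(G^*G)^{1/2} := W\Lambda^{1/2}W^*$, with uniqueness following from the uniqueness of $\Lambda^{1/2}$ on non-negative diagonal matrices). This pins down $B$ uniquely as the non-negative square root of $G^*G$. When $G$ is nonsingular, $G^*G$ is positive definite, so $B$ is invertible, and $Q = GB^{-1}$ is likewise forced, which establishes the uniqueness clauses.

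For the real case, the argument carries over verbatim using the real SVD: any real $G$ decomposes as $G = U\Sigma V^T$ with $U,V$ real orthogonal and $\Sigma$ real non-negative diagonal, so $Q = UV^T$ and $B = V\Sigma V^T$ are both real.

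The main obstacle will be the singular case in the uniqueness part. If $G$ is singular, then $B$ has a non-trivial kernel; the formula $Q = GB^{-1}$ is unavailable, and in fact $Q$ is genuinely non-unique, since on $\ker B$ the unitary factor may be modified freely. Making this precise requires one careful step: one defines $Q$ on $\mathrm{range}(B)$ by the prescription $QBv := Gv$, which is well-posed because $Bv = Bv'$ implies $Gv = Gv'$ (both $G^*G$ and hence $G$ vanish on $\ker B$, as $\ker B = \ker B^2 = \ker G^*G = \ker G$), and then extends $Q$ to an arbitrary unitary on the orthogonal complement $\mathrm{range}(B)^\perp$. Verifying that this extension preserves unitarity and still satisfies $QB = G$ is the one piece of bookkeeping that needs care. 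The residual non-uniqueness of $Q$ in the singular case is not a defect but a genuine feature of the decomposition, and is precisely why the uniqueness clause in the statement is restricted to the nonsingular setting.
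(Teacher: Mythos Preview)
The paper does not actually prove this theorem: it is stated in an appendix on linear algebra with the explicit remark that proofs are omitted and can be found in Horn and Johnson. So there is no proof to compare against, and your task reduces to giving a self-contained argument.

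Your argument is correct and is the standard one. One point worth flagging: the formulas you derive, $B = (G^*G)^{1/2}$ and $Q = GB^{-1}$, are the correct ones for the factorisation $G = QB$ with $B$ on the right. The statement as written in the paper gives $B \equiv (GG^*)^{1/2}$ and $Q \equiv B^{-1}G$, which are the formulas for the \emph{left} polar decomposition $G = BQ$. This appears to be a typo in the paper's statement rather than an error on your part; your derivation $G^*G = B^*Q^*QB = B^2$ is the right computation for the form $G = QB$, and you should trust it.

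Your treatment of the singular case is also sound. The identification $\ker B = \ker G$ (via $\ker B = \ker B^2 = \ker G^*G = \ker G$) is exactly what is needed to show that $Q$ is well-defined on $\mathrm{range}(B)$, and the freedom in extending $Q$ unitarily on $\mathrm{range}(B)^\perp$ is precisely the source of non-uniqueness that the theorem's hypothesis ``if $G$ is nonsingular'' is there to exclude.
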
\noindent
The factorisation \ref{eqn:polar} is known as the \emph{polar
decomposition}; an important point to note here is the uniqueness of
the two factors in the case that $G$ is nonsingular. A perhaps even
more useful decomposition is that of \emph{singular value
decomposition} (SVD). Its applications are far reaching and include
for example, the study of linear inverse problems, uses in signal
processing and many areas of statistics.
\begin{theorem} If $G\in M_{p,n}$ has rank $k$, then it may be
written in the form
\begin{equation}
G = VSW^*
\end{equation}
where $V$ and $W$ are unitary. The matrix $S = [s_{ij}]\in M_{p,n}$
has $s_{ij} = 0$ for all $i\neq j$, and $s_{11}\geq
s_{22}\geq\cdots\geq s_{kk}> s_{k+1,k+1} = \cdots = s_{qq} = 0$,
where $q = \min(p,n)$. The numbers $\{s_{ii}\} = \{s_i\}$ are the
nonnegative square roots of the eigenvalues of $GG^*$, and hence are
uniquely determined. The columns of $V$ are eigenvectors of $GG^*$
and the columns of $W$ are eigenvectors of $G^*G$. If $p\leq n$ and
if $GG^*$ has distinct eigenvalues, then $V$ is determined up to a
right diagonal factor $D =
\mathrm{diag}(e^{i\theta_1},\dots,e^{i\theta_n})$ with all
$\theta_i\in\mathbb{R}$; that is, if $G = V_1SW_1^* = V_2SW_2^*$,
then $V_2 = V_1D$. If $p< n$, then $W$ is never uniquely determined;
if $p=n=k$ and $V$ is given, then $W$ is uniquely determined. If
$n\leq p$, the uniqueness of $V$ and $W$ is determined by
considering $G^*$. if $G$ is real, then $V$, $S$ and $W$ may all be
taken to be real.
\end{theorem}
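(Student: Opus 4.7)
The plan is to reduce the existence part to the spectral theorem for Hermitian positive semidefinite matrices applied to $GG^*$, and then to recover $W$ from $V$ and $S$ through an explicit construction, handling the rank-deficient part separately. First I would observe that $GG^* \in M_p$ is Hermitian positive semidefinite and has the same rank $k$ as $G$. By the spectral theorem there exists a unitary $V \in M_p$ with $GG^* = V \Lambda V^*$, where $\Lambda = \mathrm{diag}(\lambda_1, \ldots, \lambda_p)$, $\lambda_1 \ge \cdots \ge \lambda_k > 0 = \lambda_{k+1} = \cdots = \lambda_p$. Setting $s_i := \sqrt{\lambda_i}$ produces the diagonal entries of $S$, and these are uniquely determined since the eigenvalues of $GG^*$ are.

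Next I would construct $W$. Partition $V = [V_1 \mid V_2]$ with $V_1 \in M_{p,k}$ and let $S_1 := \mathrm{diag}(s_1,\ldots,s_k)$, which is invertible. Define $W_1 := G^* V_1 S_1^{-1} \in M_{n,k}$; a direct calculation gives
\begin{equation*}
W_1^* W_1 = S_1^{-1} V_1^* (G G^*) V_1 S_1^{-1} = S_1^{-1} S_1^2 S_1^{-1} = I_k,
\end{equation*}
so $W_1$ has orthonormal columns. Extend $W_1$ to a unitary $W = [W_1 \mid W_2] \in M_n$ by choosing any orthonormal basis of the orthogonal complement of the range of $W_1$ (e.g., via Gram--Schmidt). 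To verify $G = VSW^*$, note $V_1^* G = S_1 W_1^*$ by construction, while $V_2^* G G^* V_2 = 0$ forces $V_2^* G = 0$ (since the Frobenius norm of $V_2^* G$ equals $\mathrm{tr}(V_2^* G G^* V_2) = 0$), so $V^* G = SW^*$, giving the decomposition. When $G$ is real the same construction, applied with $GG^T$ and its real orthogonal diagonalization, keeps $V$, $S$, $W$ all real.

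For the uniqueness statements I would argue directly from $GG^* = VS^2V^*$ and $G^*G = WS^{*T}SW^*$. If $G = V_1 S W_1^* = V_2 S W_2^*$, then $V_1 S^2 V_1^* = V_2 S^2 V_2^*$, and when $p \le n$ with the eigenvalues $s_i^2$ of $GG^*$ distinct, the eigenvectors are determined up to a unit-modulus phase in each one-dimensional eigenspace, giving $V_2 = V_1 D$ with $D = \mathrm{diag}(e^{i\theta_j})$. The claim that $W$ is never unique for $p < n$ follows because $G^*G$ has nullity at least $n-p > 0$, so the block $W_2$ can be replaced by any orthonormal basis of an $(n-k)$-dimensional kernel, producing distinct valid $W$'s. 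When $p = n = k$, both $V_1$ and $W_1$ fill out the whole unitary and $W_1 = G^{-1} V_1 S$ is determined by $V_1$ and $S$; the case $n \le p$ is handled by applying the preceding analysis to $G^*$, which swaps the roles of $V$ and $W$.

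The main obstacle is bookkeeping for the uniqueness clauses rather than the existence, because one has to carefully separate (i) the eigenspaces of $GG^*$ when eigenvalues coincide (in which case $V$ is only unique up to a block unitary, not a diagonal phase), and (ii) the freedom in the kernel part $W_2$ when $G$ is rank deficient in its column space. I would treat these two sources of non-uniqueness as the only ones and show that every other ambiguity can be absorbed into them; the statement as quoted is the clean special case when $GG^*$ has a simple spectrum, so the argument reduces to tracking a diagonal phase ambiguity and checking that it propagates from $V$ to $W$ through the formula $W_1 = G^* V_1 S_1^{-1}$.
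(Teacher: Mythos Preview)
Your argument is correct and follows the standard construction of the SVD via the spectral theorem for $GG^*$; this is essentially the proof given in Horn and Johnson, which is precisely the source the paper cites. Note, however, that the paper itself does not prove this theorem: it is stated in an appendix with the explicit remark that ``Proofs are omitted for brevity but can be found in the books on matrix analysis by Horn and Johnson,'' so there is no in-paper proof to compare against. One cosmetic point: your expression $G^*G = WS^{*T}SW^*$ should read $G^*G = W S^* S W^*$ (or $W(S^{\mathsf T}S)W^*$ in the real case), but this does not affect the argument.
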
\noindent
The two decompositions admit the following relation
\begin{eqnarray}\label{eqn:pdsvd}
  G = QB &=& VSW^* \\
         &=& VW^*WSW^*
\end{eqnarray}
thus we have that $Q = VW^*$ and $B = WSW^*$. In constructing the
stabilising transformations of Chapter \ref{ch:stabtrans}, we use
the SVD of the matrix $G$, rather than its polar decomposition, in
our numerical calculations.

Below we will give the details of the theorems of Lyapunov and
Sylvester which where used in the proof of
Corollary~\ref{corr:lyapunov}. However, before doing so we find it
instructive to present a couple of useful definitions of what we
believe to be nonstandard linear algebra. The first notion is that
of the \emph{inertia} of a general matrix in $M_n$
\begin{definition}
If $A\in M_n$, define:
\begin{description}
 \item[$i_+(A) \equiv$] the number of eigenvalues of $A$, counting
 multiplicities, with positive real part;
 \item[$i_-(A) \equiv$] the number of eigenvalues of $A$, counting
 multiplicities, with negative real part; and
 \item[$i_0(A) \equiv$] the number of eigenvalues of $A$, counting
 multiplicities, with zero real part.
\end{description}
Then, $i_+(A) + i_-(A) + i_0(A) = n$ and the row vector
$$i(A) \equiv [i_+(A), i_-(A), i)(A)]$$
is called the inertia of A.
\end{definition}\noindent
This leads us nicely onto our second definition, which simply
translates our idea of stability in terms of a dynamical system,
into the language of linear algebra
\begin{definition}
A matrix $A\in M_n$ is said to be {\em positive stable} if $i(A) =
[n, 0, 0]$, that is, $i_+(G) = n$.
\end{definition}\noindent
The matrices which we study in this work are typically derived from
some dynamical system of interest. In this case, stability is
defined in terms of matrices who possess only eigenvalues with
negative real parts, however, it is convention in linear algebra to
discuss positive matrices, and we shall stick to convention. It
should be clear that to determine that a matrix $A$ is stable, it
suffices to show that the matrix $-A$ is positive stable. Our last
definition of this section concerns the idea of \emph{congruence} of
two matrices
\begin{definition}
Let $A, B\in M_n$ be given. If there exists a nonsingular matrix $S$
such that
\begin{equation}
A = SBS^*
\end{equation}
then $A$ is said to be $^*$\emph{congruent} (``star-congruent'') to
B.
\end{definition}
The defining property for the equivalence class of $^*$congruent
matrices introduced above is that their inertia remains constant.
That this is so is given by the following theorem which is known as
{\em Sylvester's law of inertia}
\begin{theorem}
Let $A, B \in M_n$ be Hermitian matrices. There is a nonsingular
matrix $S\in M_n$ such that $A = SBS^*$ if and only if $A$ and $B$
have the same inertia, that is, the same number of positive,
negative, and zero eigenvalues.
\end{theorem}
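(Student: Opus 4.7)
The plan is to split the biconditional into its two implications and attack each with standard tools from the spectral theory of Hermitian matrices, relying on the fact that, by the spectral theorem, any $A \in M_n$ Hermitian admits a decomposition $A = U D U^*$ with $U$ unitary and $D$ real diagonal, whose diagonal entries are the eigenvalues counted with multiplicity.

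For the sufficiency direction ($A$ and $B$ share the same inertia $\Rightarrow$ $A = SBS^*$ for some nonsingular $S$), I would argue by reduction to a canonical form. Given $A = U_A D_A U_A^*$, I would let $E_A$ be the diagonal scaling that replaces each nonzero eigenvalue $\lambda_i$ of $D_A$ by $\mathrm{sgn}(\lambda_i)$ (so $E_A$ acts as $|\lambda_i|^{-1/2}$ on the $i$-th coordinate and as $1$ on the zero block). Then $T_A := U_A E_A$ is nonsingular and $T_A^{-1} A (T_A^*)^{-1}$ equals the canonical diagonal matrix $J := \mathrm{diag}(I_{i_+(A)}, -I_{i_-(A)}, 0_{i_0(A)})$. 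Constructing $T_B$ analogously yields $T_B^{-1} B (T_B^*)^{-1} = J$ as well, since $A$ and $B$ share the same inertia. Setting $S := T_A T_B^{-1}$ then gives $A = S B S^*$, which finishes this direction.

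For the necessity direction ($A = SBS^*$ with $S$ nonsingular $\Rightarrow$ same inertia), I would use the variational characterisation
\[
  i_+(A) \;=\; \max\{\dim V : V \text{ subspace with } x^*Ax > 0 \text{ for all } 0 \neq x \in V\},
\]
and the analogous formula for $i_-(A)$. The key observation is that if $V$ achieves the maximum for $B$, then $W := S^{-*} V$ is a subspace of the same dimension (because $S^{-*}$ is nonsingular) on which $y^* A y = y^* S B S^* y = (S^* y)^* B (S^* y) > 0$. Hence $i_+(A) \geq i_+(B)$, and the reverse inequality follows by the symmetric argument using $B = S^{-1} A S^{-*}$. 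The same reasoning applied to $-A = S(-B)S^*$ yields $i_-(A) = i_-(B)$, and then $i_0(A) = i_0(B)$ follows from $i_+ + i_- + i_0 = n$ (equivalently, from the invariance of rank under multiplication by nonsingular matrices).

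The main obstacle is the variational characterisation itself: one must justify why $i_+(A)$ really equals the stated maximum. I would prove this by diagonalising $A = U D U^*$, taking the span of eigenvectors for the positive eigenvalues as a witness subspace of dimension $i_+(A)$, and then ruling out larger subspaces via an intersection/dimension count against the span of eigenvectors for the non-positive eigenvalues (a subspace of dimension strictly greater than $i_+(A)$ must intersect this complementary subspace nontrivially, producing a vector with $x^* A x \leq 0$). Once this lemma is in hand, both directions of Sylvester's law fall out cleanly.
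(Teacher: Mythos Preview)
The paper does not actually prove this theorem: the appendix states explicitly that ``Proofs are omitted for brevity but can be found in the books on matrix analysis by Horn and Johnson,'' so there is no in-paper argument to compare against.

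Your proposal is the standard two-part proof of Sylvester's law of inertia and is essentially correct. There is one small algebraic slip in the sufficiency direction: with $E_A$ having diagonal entries $|\lambda_i|^{-1/2}$ and $T_A = U_A E_A$, one computes
\[
  T_A^{-1} A (T_A^*)^{-1} = E_A^{-1} U_A^* \, (U_A D_A U_A^*) \, U_A E_A^{-1} = E_A^{-1} D_A E_A^{-1},
\]
whose diagonal entries are $|\lambda_i|\,\lambda_i$, not $\mathrm{sgn}(\lambda_i)$. The fix is trivial: either take $E_A$ with entries $|\lambda_i|^{1/2}$, or equivalently form $T_A^* A T_A$ instead of $T_A^{-1} A (T_A^*)^{-1}$; either choice yields the canonical form (after a permutation to group the $+1$, $-1$, and $0$ blocks, which you should also mention). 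The rest of the sufficiency argument, and the necessity argument via the variational characterisation of $i_+$ together with the dimension-count lemma, are correct and cleanly laid out.
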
\noindent
We finish the section with the following result due to Lyapunov
\begin{theorem}
Let $A\in M_n$ be given. Then $A$ is positive stable if and only if
there exists a positive definite $G\in M_n$ such that
\begin{equation}
GA + A^*G = H \label{eqn:lyap}
\end{equation}
is positive definite. Furthermore, suppose there are Hermitian
matrices $G, H\in M_n$ that satisfy (\ref{eqn:lyap}), and suppose
$H$ is positive definite; then $A$ is positive stable if and only if
$G$ is positive definite.
\end{theorem}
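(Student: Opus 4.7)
The plan is to establish the main biconditional by constructing, whenever $A$ is positive stable, an explicit positive definite $G$ solving the Lyapunov equation, and conversely by an eigenvector argument when a positive definite pair $(G,H)$ is given; the second assertion will then follow from a uniqueness argument for solutions of $GA + A^*G = H$ under the hypothesis that $A$ is positive stable.

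I would start with the easier direction. Suppose Hermitian $G,H$ satisfy $GA + A^*G = H$ with both $G,H$ positive definite. If $Av = \lambda v$ with $v \neq 0$, then
\begin{equation*}
  v^* H v = v^*(GA + A^*G)v = \lambda\, v^*Gv + \bar\lambda\, v^*Gv = 2\,\mathrm{Re}(\lambda)\, v^* G v.
\end{equation*}
Positive definiteness of $G$ and $H$ forces $v^*Gv > 0$ and $v^*Hv > 0$, hence $\mathrm{Re}(\lambda) > 0$. Since this applies to every eigenvalue of $A$, the matrix $A$ is positive stable. Note that this argument also proves one half of the second assertion: if $H$ is positive definite and $G$ is positive definite, then $A$ is positive stable, regardless of how $G$ was obtained.

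For the converse (assuming $A$ positive stable, produce $G$), I would use the standard integral construction. Fix any positive definite Hermitian $H$ and define
\begin{equation*}
  G = \int_0^\infty e^{-A^* t} H e^{-A t}\, dt.
\end{equation*}
Convergence of the improper integral follows from positive stability: the eigenvalues of $-A$ have negative real parts, so $\|e^{-At}\|$ decays exponentially. That $G$ is Hermitian and positive definite is immediate from the integrand being Hermitian positive definite for every $t\geq 0$, together with continuity at $t=0$. Differentiating $e^{-A^* t} H e^{-A t}$ in $t$ gives $-A^* e^{-A^* t} H e^{-A t} - e^{-A^* t} H e^{-A t} A$, and integrating from $0$ to $\infty$ (using decay at infinity and the value $H$ at $t=0$) yields $GA + A^*G = H$, as required.

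The remaining content is the second assertion in the direction: $H$ positive definite, $A$ positive stable, Hermitian $G$ satisfying the equation $\Longrightarrow$ $G$ positive definite. I would deduce this from uniqueness: if $G_1, G_2$ both solve $GA + A^*G = H$, then $\Delta = G_1 - G_2$ satisfies $\Delta A + A^* \Delta = 0$, and standard manipulation (e.g.\ differentiating $e^{-A^*t}\Delta e^{-At}$ and integrating, using that $A$ is positive stable so the boundary terms vanish at infinity) shows $\Delta = 0$. Hence the $G$ appearing in the hypothesis must coincide with the integral constructed above, which is positive definite. I expect the main technical obstacle to lie precisely in this uniqueness argument, since it requires care to justify the vanishing boundary term and to rule out solutions that grow in norm; all other ingredients are essentially routine once the integral representation is in hand.
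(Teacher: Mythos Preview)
Your proof is correct and follows the standard route (eigenvector argument for one direction, the integral $G=\int_0^\infty e^{-A^*t}He^{-At}\,dt$ for the other, and uniqueness of solutions when $A$ is positive stable for the final assertion). However, there is nothing to compare against: the paper does not prove this theorem. It appears in an appendix where the author explicitly states that proofs are omitted and refers the reader to Horn and Johnson's \emph{Matrix Analysis}, from which the result is quoted. Your argument is in fact essentially the proof given there, so in that sense you have reproduced the intended reference.
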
\noindent
The theorem gives a nice relation between the class of positive
definite matrices and the class of positive stable matrices.

\achapter{Exponential time differencing} \label{ch:appendix5} Stiff
systems of ODEs arise naturally when solving PDEs by spectral
methods, and their numerical solutions require special treatment if
accurate solutions are to be obtained efficiently. In this appendix
we describe one such class of solvers, the \emph{Exponential time
differencing} (ETD) schemes, they where first used in the field of
computational electrodynamics, where the problem of computing
electric and magnetic fields in a box typically result in a stiff
system of ODEs; see~\cite{Cox01} and references therein for further
details. In this discussion we concentrate on the Runge-Kutta
version of these schemes, in particular, we look at a modification
of the ETD fourth-order Runge-Kutta method presented by Kassam and
Trefethen~\cite{Kassam05}.

Let us represent our PDE in the following form
\begin{equation}\label{eqn:pde}
    u_t = \mathcal{L}u + \mathcal{N}(u,t),
\end{equation}
here $\mathcal{L}$ and $\mathcal{N}$ are linear and nonlinear
operators respectively. Applying a spatial discretisation yields the
following system of ODEs
\begin{equation}\label{eqn:fmodes}
    u_t = \textbf{L}u + \textbf{N}(u,t).
\end{equation}
To derive the ETD methods, we begin by multiplying
Eq.~(\ref{eqn:fmodes}) by the integrating factor $e^{-\textbf{L}t}$,
and integrating over one time step from $t = t_n$ to $t = t_{n+1} =
t_n + h$ to obtain
\begin{equation}\label{eqn:exact}
    u(t_{n+1}) = u(t_n)e^{\textbf{L}h}
    +e^{\textbf{L}h}\int_{0}^{h}e^{-\textbf{L}\tau}N(u(t_n+\tau),t_n+\tau)d\tau,
\end{equation}
this formula is exact, and the different ETD methods result from the
particular choice of approximation for the integral in
Eq.~(\ref{eqn:exact}). For example, if we assume that $\textbf{N}$
is constant, i.e. $\textbf{N} = \textbf{N}(u_n) + O(h)$, over a
single time step then we obtain the ETD1 method
\begin{equation}\label{eqn:etd1}
    u_{n+1} = u_ne^{\textbf{L}h} + \textbf{L}^{-1}(e^{\textbf{L}h}-1)\textbf{N}(u_n),
\end{equation}
which has a local truncation error $h^2\dot{\textbf{N}}/2$.
In~\cite{Cox01} Cox and Matthews present a host of recurrence
formulae that provide higher-order methods, as well as introducing
the set of methods based on Runge-Kutta time-stepping which they
name ETDRK schemes.

In our work we have used the fourth order scheme, known as ETDRK4,
the derivation of which is nonstandard according to Cox and
Matthews. For this reason we simply present the formulae and refer
the interested reader to~\cite{Cox01}
\begin{eqnarray}
  a_n &=& e^{\textbf{L}h/2}u_n + \textbf{L}^{-1}(e^{\textbf{L}h/2} - \textbf{I})\textbf{N}(u_n,t_n), \\
  b_n &=& e^{\textbf{L}h/2}u_n + \textbf{L}^{-1}(e^{\textbf{L}h/2} - \textbf{I})\textbf{N}(a_n,t_n+h/2), \\
  c_n &=& e^{\textbf{L}h/2}a_n + \textbf{L}^{-1}(e^{\textbf{L}h/2} - \textbf{I})(2\textbf{N}(b_n,t_n+h/2)
  -
  \textbf{N}(u_n,t_n)), \\
  u_n &=& e^{\textbf{L}h/2} + h^{-2}\textbf{L}^{-3}\{[-4 - \textbf{L}h
  + e^{\textbf{L}h}(4 - 3\textbf{L}h +
  (\textbf{L}h)^2)]\textbf{N}(u_n,t_n)\\ \nonumber
  && +2[2 + \textbf{L}h + e^{\textbf{L}h}(-2 + \textbf{L}h)](\textbf{N}(a_n,t_n+h/2) +
  \textbf{N}(b_n,t_n+h/2))\\
  \nonumber
  && +[-4 - 3\textbf{L}h - (\textbf{L}h)^2 +
  e^{\textbf{L}h}(4-\textbf{L}h)]\textbf{N}(c_n,t_n+h)\}.
\end{eqnarray}
Unfortunately, the stated method is prone to numerical instability,
the source of which can be understood by examining the following
expression
\begin{equation}\label{eqn:instab}
    f(z) = \frac{e^z-1}{z}.
\end{equation}
The accurate computation of this function is a well known problem in
numerical analysis, and is further discussed and referenced in the
paper by Kassam and Trefethen; the difficulties are born from
cancelation errors which arise for small $z$.

To understand why Eq.~(\ref{eqn:instab}) relates to the ETDRK4
method it is useful to examine the coefficients in the square
brackets of the update formula
\begin{eqnarray}
  \alpha &=& h^{-2}\textbf{L}^{-3}[-4 - \textbf{L}h + e^{\textbf{L}h}(4 - 3\textbf{L}h + (\textbf{L}h)^2)],\nonumber \\
  \beta  &=& h^{-2}\textbf{L}^{-3}[2 + \textbf{L}h + e^{\textbf{L}h}(-2 + \textbf{L}h)], \\
  \gamma &=& h^{-2}\textbf{L}^{-3}[-4 - 3\textbf{L}h - (\textbf{L}h)^2 +
  e^{\textbf{L}h}(4-\textbf{L}h)].\nonumber
  \label{eqn:coeff}
\end{eqnarray}
These coefficients are actually higher-order variants of
Eq.~(\ref{eqn:instab}), and thus, are susceptible to cancelation
errors. Indeed, all three terms suffer disastrous cancelation errors
when the matrix $\textbf{L}$ has eigenvalues close to zero. Cox and
Matthews knew of this problem in their work, and proposed a cutoff
point for small eigenvalues, whereafter the coefficients would be
computed by Taylor series. However, two problems arise: firstly,
there may exist a region where neither the formula nor the Taylor
series are accurate, and secondly, it is not obvious how to extend
these ideas in the case that $\textbf{L}$ is not diagonal.

In order to obtain numerical stability the accurate computation of
the coefficients (\ref{eqn:coeff}) is of paramount importance. With
this in mind Kassam and Trefethen suggest evaluating the
coefficients via an integral over a contour in the complex plane.
Recall from Cauchy's integral formula that a function $f(z)$ may be
computed as follows
\begin{equation}\label{eqn:cauchy}
    f(z) = \frac{1}{2\pi}\int_{\Gamma}\frac{f(t)}{t-z}dt,
\end{equation}
where $\Gamma$ is a contour that encloses $z$ and is separated from
$0$. The application of Cauchy's formula in the case that $z$ is a
matrix is straightforward
\begin{equation}\label{eqn:cauchymat}
    f(\textbf{L}) = \frac{1}{2\pi}\int_\Gamma f(t)(t\textbf{I} -
    \textbf{L})^{-1}dt,
\end{equation}
notice that the term $(t-z)^{-1}$ has been replaced by the resolvent
matrix, and that we now choose $\Gamma$ so that it encloses all the
eigenvalues of $\textbf{L}$.

When the above stabilisation procedure is applied to the ETDRK4
method the result is a stiff solver which works equally well,
whether the linear part of Eq.~(\ref{eqn:fmodes}) is diagonal or
not, it is extremely fast and accurate, and allows one to take large
time-steps. For example, in the computation of the finite-time
Lyapunov exponents of the KSE in Chapter \ref{ch:kse}, we where able
to use step-size $h = 0.25$, the same computation using the
eighth-order RK method due to Dormand and Prince \cite{HairerBook2}
needed to use $h = 2\times10^{-4}$.

\achapter{Numerical calculation of Lyapunov
spectra}\label{ch:appendix6} In this section we briefly remind
ourselves of the idea of Lyapunov exponents, before discussing an
algorithm first introduced by Benettin {\em et al} \cite{Benettin80}
in order to efficiently determine them; in Chapter \ref{ch:kse} we
use this algorithm to determine the largest Lyapunov exponents for
the KSE.

The concept of Lyapunov exponents is an important notion for
dynamical systems, particularly in applications, where it is often
used as a criteria for determining the existence of chaos. In the
following we give a description in the case of a discrete dynamical
system, $x_{k+1} = f(x_k)$. In that case the Lyapunov exponents give
a description of the average behaviour of the derivative, $Df$,
along the orbit of some initial point $x_0$, i.e. $\{x_0, f(x_0),
f^2(x), \dots\}$.

Write $E_0$ for the unit ball in $n$-dimensional phase space centred
at $x_0$, and define successive iterates by
\begin{equation}\label{eqn:ellipse}
    E_{k+1} = Df(x_k)E_k.
\end{equation}
Note that each $E_k$ is an ellipsoid. Now, let $\alpha_{j,k}$ denote
the length of the jth largest axis of $E_k$, then we define the jth
Lyapunov number of $f$ at $x_0$ to be
\begin{equation}\label{eqn:lyapnum}
    L_j = \lim_{k\to\infty}(\alpha_{j,k})^{1/k},
\end{equation}
when the limit exists. The {\em Lyapunov exponents} are then given
by the natural logarithms
\begin{equation}\label{eqn:lyapexp}
    \lambda_j = \ln{L_j}.
\end{equation}
The trajectory of the point $x_0$ is called a {\em chaotic
trajectory} if (1) the trajectory is bounded and is not asymptotic
to an equilibrium position of $f$, and (2) $f$ has at least one
positive Lyapunov exponent.

\subsubsection{Computation of Lyapunov exponents}
As noted an efficient algorithm for determining the Lyapunov
exponents of chaotic orbits has been put forward by Benettin {\em et
al} and is as follows: starting with an orthogonal set of unit
vectors $\{u_1, \dots, u_n\}$. Define
$$v_i = Df(x)u_i$$
where $i = 1,\dots,n$. Applying the Gram-Schmidt algorithm we
compute a set of orthogonal vectors, $w_i$ $(i = 1,\dots, n)$, such
that $w_1 = v_1$, and for $i>1$, the $w_i$ are defined inductively
as follows
\begin{equation}\label{eqn:orthog}
    w_i = v_i - \sum_{j=1}^{i-1}<v_i,\bar{u}_j>\bar{u}_j,
\end{equation}
here $\bar{u}_j = w_j/||w_j||$, where $||\cdot||$ denotes the
Euclidean norm, and $<\cdot,\cdot>$ denotes the corresponding inner
product.

The new set of vectors, $\bar{u}_j$, are approximations of the
directions of the $i$th axis of $E_{k+1}$, further, the approximate
ratio of the length of the $i$th axis of $E_{k+1}$ to that of $E_k$
is given by $r_i = ||w_i||$. To determine our approximation this
operation is repeated on each iterate of the map $f$. Writing
$r_{i,j}$ for the ratio $r_i$ at the $j$th iterate of $f$ we obtain
the following approximation to the $i$th Lyapunov exponent
\begin{equation}\label{eqn:lyapapprox}
    \lambda_i^k = \frac{1}{k}\sum_{j=1}^{k}\ln{r_{i,j}}.
\end{equation}
The approximation improves with increasing $k$.


\addcontentsline{toc}{chapter}{Bibliography}
\bibliographystyle{plain}
\bibliography{ThesisPhD}

\end{document}